\newtheorem{remark}{Remark}
\newtheorem{lemma}{Lemma}
\newtheorem{proposition}{Proposition}
\newtheorem{theorem}{Theorem}
\definecolor{node_1}{RGB}{139,87,42} 
\definecolor{node_2}{RGB}{144, 19, 254}
\definecolor{node_3}{RGB}{74, 144, 226}
\definecolor{node_4}{RGB}{208, 2, 27}
\definecolor{node_5}{RGB}{126, 211, 33}
\providecommand{\keywords}[1]{\textbf{\textit{Index terms --}} #1}
\begin{document}

\title{Secure and Private Vickrey Auction Protocols: \\A Secure Multiparty Computation Approach}
\author{Lucy Klinger, Mengfan Lyu and Lei Zhang
\thanks{
L. Klinger is affiliated with the Beijing International Center for Mathematical Research at Peking University, China. 
Email: lucyk@bicmr.pku.edu.cn.
M. Lyu is affiliated with the School of Mathematics and Statistics at the University of Sydney, Australia. 
Email: M.Lyu@maths.usyd.edu.au.
L. Zhang is affiliated with the Beijing International Center for Mathematical Research, the Center for Quantitative Biology, and the Center for Machine Learning Research 
at Peking University, China. 
Email: zhangl@math.pku.edu.cn.
}
}
\date{April 26, 2023}
\maketitle

\begin{abstract}
Recent attention on secure multiparty computation and blockchain technology has garnered new interest in developing auction protocols in a decentralized setting. 
In this paper, we propose a secure and private Vickrey auction protocol that addresses the challenge of maintaining auction correctness while preserving the confidentiality of bidders' information. 
Our protocol leverages secure multiparty computation (SMPC) techniques that enable decentralized computation of the final payment price without disclosing additional information.
The total computational complexity of our proposed protocol, including preparation, is $O(n^2 k)$, and the complexity for the main auction is $O(n k)$.
Furthermore, we show that our protocol is resistant to collusion and the auction outcome is publicly verifiable, which prevents dishonest bidders from altering this outcome.
\end{abstract}

\bigskip
\keywords{
Vickrey auction,
secure multiparty computation,
privacy preservation,
decentralized environment,
collusion resistance,
public verifiability.
}


\section{Introduction}\label{section_1}

Auction is a mechanism for allocating resources through competition among self-interested bidders. 
Second-price sealed-bid auctions, known as Vickrey auctions, have drawn significant interest from researchers, especially in the fields of economics and game theory. 
In such auctions, the highest bidder wins the auction but only pays the second-highest bid. 
This auction scheme, named after William Vickrey, who received a Nobel Prize in 1996, has been shown to be incentive-compatible, meaning that the best response, regardless of the behavior of others, is to bid the amount that the good is actually worth to the bidder. 
Consequently, the object being auctioned will go to the bidder who values it most, regardless of the specific auction format used. 
For example, the expected payment made by the winning bidder will be the same whether it is an English auction, Dutch auction, or first-price auction \cite{vickrey1961counterspeculation}. 
In contrast, first-price sealed auctions may result in inefficient allocations when bidders try to outbid each other slightly to pay as little as possible. 
This can lead to the bidder with the highest valuation for the item underestimating other bids,
potentially causing another bidder to win the object ---in an inefficient allocation.

Despite the efficiency of Vickrey auctions in determining the market value of an item, 
they are not commonly utilized in practice due to security and privacy concerns, 
particularly in the context of digital transactions and the widespread adoption of online environments \cite{rothkopf1995two, rothkopf1990vickrey, sandholm1996limitations, brandt2001cryptographic}.  
The main challenges for existing Vickrey auction protocols include ensuring the correctness of auction outcomes, 
preserving the confidentiality of bidders' information, resisting collusion, and being robust to malicious behavior. 
These challenges are important because they impact auction correctness and privacy. 
Violations of privacy may affect the role of the Nash equilibrium in the Vickrey auction 
so that the auction is no longer incentive-compatible and its mechanism design becomes inefficient.

Auction correctness is a fundamental aspect of the Vickrey auction process, 
ensuring an accurate execution of the auction that identifies the highest bidder and determines the selling price.
Collusion presents a significant challenge to this, 
particularly in traditional Vickrey auction protocols that rely on trusted third parties to handle bids. 
This reliance can introduce a single point of failure and raise concerns about the potential for collusion or corruption. 
For example, the presence of an auctioneer poses considerable risks, 
such as when the auctioneer overstates the second-highest bid to increase the seller's revenue, declares a non-existent winning bidder due to low bids, or colludes with the winning bidder to understate the selling price \cite{10.1007/s10207-006-0001-y}.  
Malicious behavior can also affect correctness: attackers can compromise underlying cryptographic mechanisms by submitting fake bids or tampering with the bid evaluation process.

Privacy violations in Vickrey auctions may not only impact auction correctness but also affect the Nash equilibrium of the auction. 
For instance, collusion and malicious behavior may enable bidders to access information about other bids. 
As a result, bidders might be hesitant to bid truthfully, fearing that the auctioneer or other bidders could expose their bids and reveal their private valuations. 
Moreover, even the disclosure of partial information about bids can influence participants' strategic behavior. 
Fear of information leaks can alter an assumption of the Vickrey model, namely, that bidders have independent private valuations \cite{vickrey1961counterspeculation}.  
Consequently, this private valuation model turns into a common value auction model, where a bidder's valuation depends on the information held by other bidders. 
This model is generally not incentive-compatible: bidders may misrepresent their valuations to gain an advantage, leading to inefficiencies in the mechanism design \cite{krishna2009auction}. 
Therefore, privacy concerns are an essential factor in designing and implementing efficient Vickrey auctions.

In order to address these security and privacy concerns, 
our aim is to develop a new fully private Vickrey auction protocol that employs secure multiparty computation (SMPC).

\section{Literature Review}\label{section_2}

Secure and private auctions have attracted considerable attention in cryptography and auction theory. 
A subfield of cryptography, secure multiparty computation (SMPC), focuses on protocols that allow multiple parties to perform computations on private inputs without revealing the inputs themselves \cite{10.1145/3387108}. 
Many existing auction protocols involve multiple auctioneers  \cite{franklin1996design, sako2000auction, kurosawa2002bit, bogetoft2008multiparty, galbraith2019proceedings} or trusted third parties \cite{naor1999privacy, juels2003two, abe2002m+}, relying on non-collusion assumptions to ensure security and privacy, posing challenges in inherently distrustful settings. 

Alternative auction protocols without trusted third parties employ techniques such as homomorphic encryption and garbled circuits to ensure secure computation, as demonstrated in protocols presented in \cite{bogetoft2006practical, elkind2004interleaving, decker2003secure, bradford2008protocol}. 
However, these methods can be computationally expensive and may face challenges, such as the potential requirement for pairwise secret channels \cite{brandt2005efficient}, 
which may not always be practical in decentralized settings.

Bidder-resolved auction protocols, auctioneer-free solutions that use secret sharing techniques, are among those presented in  \cite{kikuchi1999multi, brandt2001cryptographic, brandt2002secure, bag2019seal}. 
In particular, 
Brandt introduced a Vickrey auction protocol based on cryptographic primitives, 
one which distributes computation among bidders and ensures full privacy in a decentralized setting \cite{brandt2002secure}. 
However, the protocol's total computational complexity is 
$O(n^2 2^k)$, with $O(n 2^k)$ complexity for the main auction protocol,
where $n$ denotes the number of bidders and $k$ represents the number of bits in the binary format of the maximum bid.
This exponential computational complexity may affect its suitability for auctions involving large value bids and limit practical application.

To address this issue, recent auction protocols, 
such as the one presented in \cite{bag2019seal}, 
use binary radix representations of bids to reduce the complexity to $O(n k)$. 
Originally designed for a first-price sealed-bid auction, these kinds of protocols can be applied to Vickrey auctions: 
assume the highest bid is
$[~p_{(1),1} ~p_{(1),2}\dots p_{(1),k}~]_{\text{base 2}}$
and second highest bid is
$[~p_{(2),1} ~p_{(2),2}\dots p_{(2),k}~]_{\text{base 2}}$,
the winning bidder is excluded from the winning iteration $j_w = \min \{~j\mid p_{(1),j} \neq p_{(2) ,j} ~\}$ and exits the protocol while losing bidders continue with the remaining iterations. 
However, this step reveals the lower boundary $2^{j_w}$ of the highest bid
and has limited implementation when full privacy is required. 
A dishonest participant could collude with the seller to exploit this partial information in a repeated auction, thereby increasing the seller's revenue.

In this context, 
we propose a novel protocol that addresses some limitations of previous auction models and offers enhanced efficiency and privacy. 
Our goal is to develop a scalable Vickrey auction protocol that operates in a decentralized environment, ensuring correctness, full privacy, and security against collusion or malicious behavior. 
Our protocol allows bidders to calculate the final payment price without a central authority and allow all operations to be publicly verifiable, enabling third-party observers to confirm the auction's integrity. 
Additionally, the protocol is designed to prevent dishonest bidders from manipulating auction outcomes and to ensure full privacy. 
Even in the event of collusion, no additional information can be extracted. 

Our protocol has several key advantages. 
First, 
it achieves improved computational efficiency, 
with a total complexity of $O(n^2 k)$, and $O(n k)$ complexity for the main auction. 
This is notably faster than the YMB-SHARE protocol proposed in \cite{brandt2002secure}, which has a total complexity of $O(n^2 2^k)$.
Second, our protocol ensures full privacy for all bidders, distinguishing it from other models such as \cite{bag2019seal}. 
Although the latter model suggests a scheme that could be extended with comparable complexity to Vickrey auctions, it reveals partial information. In contrast, our protocol only reveals the second-highest bid and guarantees full privacy for all bidders.

This paper is organized as follows. 
 In \Cref{section_1}, we provide an overview of security and privacy issues concerning traditional Vickrey auctions.
 In \Cref{section_2}, we summarize existing research in the field of SMPC for Vickrey auction protocols. 
 In \Cref{section_3}, we present a step-by-step protocol for a secure and private Vickrey auction with no auctioneer. 
 In \Cref{section_4}, we prove that the proposed protocol maintains correctness and full privacy throughout the auction process. \Cref{section_5} presents numerical results to support our findings, while \Cref{section_6} concludes the paper with a summary of our findings and suggests future research directions.

\section{Protocol Design}\label{section_3}

In this section, we will provide a step-by-step explanation of conducting a Vickrey auction protocol without an auctioneer, one that ensures correctness, privacy, and security against collusion or malicious behavior in a decentralized setting.

Consider a set of bidders denoted by $I = \{\alpha_1, \alpha_2, \dots, \alpha_n\}$, 
with $l \in \{1, 2, \dots, n\}$ as the index for bidders. 
Each bidder has a private valuation of ${p_l}$, 
and the best response is to bid this amount,
regardless of the behavior of other bidders. 
To ensure the scalability of the auction, 
we take inspiration from \cite{bag2019seal} and examine the binary representation of each bid,
beginning with the first digit $j=1$ and progressing towards the right until $j=k$. 
In this context, 
the bid is denoted as  $p_l=\left[~p_{l,1}~p_{l,2}\dots p_{l,k}~\right]_{\text{base 2}}$, 
where $p_{l,j}\in\{0,1\}$ represents the $j^\text{th}$ digit of the binary representation with a total of $k$ bits.

The primary objectives of our protocol are
\begin{enumerate}
\item
correctness: 
accurately determine the winner and the selling price (i.e., the second-highest bid) 
\item
full privacy: 
reveal no additional information other than the selling price during and after the auction. 
\item
security against collusion or malicious behavior: ensure that no coalition of up to $n-1$ participants can manipulate the equilibrium design of the auction mechanism or obtain any additional information about others
\item
decentralization: eliminate the need for a central authority, enabling a distributed and trustless environment
\item
public verifiability: ensure that all operations are publicly verifiable by allowing third-party observers to confirm the integrity of the auction outcome.
\end{enumerate}

\subsection{Intuition}

In order to achieve our primary objectives, 
we present a high-level intuition consisting of the following phases:
\begin{enumerate}
\item \label{phase_1} Bid commitment
\item \label{phase_2} Determination of output price
\item \label{phase_3} Second-highest bid verification
\item \label{phase_4} Winner determination
\end{enumerate}

In Phase \ref{phase_1}, 
each bidder commits their bids on a public channel, 
such as a bulletin board or a blockchain.

In Phase \ref{phase_2}, 
we initially formulated an equation using two commonly known variables, 
allowing bidders to jointly detect bid placements for each digit $j$ regarding iterations where $j=1,2,\dots,k$. 
This equation holds when no one bids for the current digit (returning an output of `0') but does not hold when someone bids for the current digit (returning an output of `1'). 
To determine the second-highest bid for each digit, it is necessary to conceal the winning bid and bids lower than the second-highest bid. 
To conceal information about the highest bidder, each bidder must generate a privately known variable that is used in an equation only recognized by the highest bidder.
With this equation, the highest bidder can then secretly manipulate the value of one of the original common variables using a cryptographic tool so it seems like they did not bid for the current digit, while inputting `1' for all subsequent digits to maintain their lead.
On the other hand, to conceal information about losing bidders, they must input `0' as their bids for all subsequent digits. 
Doing so allows them to maintain the appearance of still being in the race without revealing their loss. 
By hiding, for each digit, bids other than the second-highest bid, the output price can be determined without disclosing additional information.

In Phase \ref{phase_3},
we verify that the output price is the second-highest bid by comparing that bid to all the bids committed to in Phase \ref{phase_1}.

In Phase \ref{phase_4}, 
which occurs after the payment price is determined, winners can prove their success in the auction by demonstrating use of a privately known variable. This indicates a willingness to pay a price that is at least one bid higher than the second-highest bid while maintaining their privacy.

An example of this protocol's intuition can be found in Table \ref{table_example}.


\begin{table*}[ht] 
\center\resizebox* {!} {6cm}{ 
\begin{tabular}{  c | c c | c c | c c | c c | c c | c c | c c | c  c } 
& \multicolumn{2}{c|}{$j=1$} 
& \multicolumn{2}{c|}{$j=2$} 
& \multicolumn{2}{c|}{$j=3$} 
& \multicolumn{2}{c|}{$j=4$} 
& \multicolumn{2}{c|}{$j=5$} 
& \multicolumn{2}{c|}{$j=6$} 
& \multicolumn{2}{c|}{$j=7$} 
& \multicolumn{2}{c|}{$j=8$} 
\\ \cline{1-17} 
\multirow{2}{*}{{\color{node_1}  $p_1=143$}}
& {\color{node_1}  1} &  
& {\color{node_1}  0} &  
& {\color{node_1}  0} &  
& {\color{node_1}  0} &  
& {\color{node_1}  1} &  
& {\color{node_1}  1} &  
& {\color{node_1}  1} &  
& {\color{node_1}  1} &  
\multicolumn{1}{c|}{} 
 \\ \cline{2-17} 
& \rotatebox[origin=c]{180}{$\Lsh$} & \multicolumn{1}{|c|}{\cellcolor {gray!30} 1} 
& \rotatebox[origin=c]{180}{$\Lsh$} & \multicolumn{1}{|c|}{\cellcolor {gray!30} 0} 
& \rotatebox[origin=c]{180}{$\Lsh$} & \multicolumn{1}{|c|}{\cellcolor {gray!30} 0} 
& \rotatebox[origin=c]{180}{$\Lsh$} & \multicolumn{1}{|c|}{\cellcolor {gray!30} 0} 
& \rotatebox[origin=c]{180}{$\Lsh$} & \multicolumn{1}{|c|}{\cellcolor {gray!30} 0} 
& \rotatebox[origin=c]{180}{$\Lsh$} & \multicolumn{1}{|c|}{\cellcolor {gray!30} 0} 
& \rotatebox[origin=c]{180}{$\Lsh$} & \multicolumn{1}{|c|}{\cellcolor {gray!30} 0} 
& \rotatebox[origin=c]{180}{$\Lsh$} & \multicolumn{1}{|c|}{\cellcolor {gray!30} 0} 
\\ \cline{1-17} 
\multirow{2}{*}{{\color{node_2}  $p_2=124$}}   
& {\color{node_2}  0} &  
& {\color{node_2}  1} &  
& {\color{node_2}  1} &  
& {\color{node_2}  1} &  
& {\color{node_2}  1} &  
& {\color{node_2}  1} &  
& {\color{node_2}  0} &  
& {\color{node_2}  0} &  
\multicolumn{1}{c|}{} 
 \\ \cline{2-17} 
& \rotatebox[origin=c]{180}{$\Lsh$} & \multicolumn{1}{|c|}{\cellcolor {gray!30} 0} 
& \rotatebox[origin=c]{180}{$\Lsh$} & \multicolumn{1}{|c|}{\cellcolor {gray!30} 0} 
& \rotatebox[origin=c]{180}{$\Lsh$} & \multicolumn{1}{|c|}{\cellcolor {gray!30} 0} 
& \rotatebox[origin=c]{180}{$\Lsh$} & \multicolumn{1}{|c|}{\cellcolor {gray!30} 0} 
& \rotatebox[origin=c]{180}{$\Lsh$} & \multicolumn{1}{|c|}{\cellcolor {gray!30} 0} 
& \rotatebox[origin=c]{180}{$\Lsh$} & \multicolumn{1}{|c|}{\cellcolor {gray!30} 0} 
& \rotatebox[origin=c]{180}{$\Lsh$} & \multicolumn{1}{|c|}{\cellcolor {gray!30} 0} 
& \rotatebox[origin=c]{180}{$\Lsh$} & \multicolumn{1}{|c|}{\cellcolor {gray!30} 0} 
\\ \cline{1-17} 
\multirow{2}{*}{{\color{node_3}  $p_3=217$}}   
& {\color{node_3}  1} &  
& {\color{node_3}  1} &  
& {\color{node_3}  0} &  
& {\color{node_3}  1} &  
& {\color{node_3}  1} &  
& {\color{node_3}  0} &  
& {\color{node_3}  0} &  
& {\color{node_3}  1} &  
\multicolumn{1}{c|}{} 
 \\ \cline{2-17} 
& \rotatebox[origin=c]{180}{$\Lsh$} & \multicolumn{1}{|c|}{\cellcolor {gray!30} 1} 
& \rotatebox[origin=c]{180}{$\Lsh$} & \multicolumn{1}{|c|}{\cellcolor {gray!30} 1} 
& \rotatebox[origin=c]{180}{$\Lsh$} & \multicolumn{1}{|c|}{\cellcolor {gray!30} 0} 
& \rotatebox[origin=c]{180}{$\Lsh$} & \multicolumn{1}{|c|}{\cellcolor {gray!30} 1} 
& \rotatebox[origin=c]{180}{$\Lsh$} & \multicolumn{1}{|c|}{\cellcolor {gray!30} 1} 
& \rotatebox[origin=c]{180}{$\Lsh$} & \multicolumn{1}{|c|}{\cellcolor {gray!30} 0} 
& \rotatebox[origin=c]{180}{$\Lsh$} & \multicolumn{1}{|c|}{\cellcolor {gray!30} 0} 
& \rotatebox[origin=c]{180}{$\Lsh$} & \multicolumn{1}{|c|}{\cellcolor {gray!30} 1} 
\\ \hline
\multirow{2}{*}{{\color{node_4}  $p_4=222$}}   
& {\color{node_4}  1} &  
& {\color{node_4}  1} &  
& {\color{node_4}  0} &  
& {\color{node_4}  1} &  
& {\color{node_4}  1} &  
& {\color{node_4}  1} &  
& {\color{node_4}  1} & 
& {\color{node_4}  0} &  
\multicolumn{1}{c|}{} 
 \\ \cline{2-17} 
& \rotatebox[origin=c]{180}{$\Lsh$} & \multicolumn{1}{|c|}{\cellcolor {gray!30} 1} 
& \rotatebox[origin=c]{180}{$\Lsh$} & \multicolumn{1}{|c|}{\cellcolor {gray!30} 1} 
& \rotatebox[origin=c]{180}{$\Lsh$} & \multicolumn{1}{|c|}{\cellcolor {gray!30} 0} 
& \rotatebox[origin=c]{180}{$\Lsh$} & \multicolumn{1}{|c|}{\cellcolor {gray!30} 1} 
& \rotatebox[origin=c]{180}{$\Lsh$} & \multicolumn{1}{|c|}{\cellcolor {gray!30} 1} 
& \rotatebox[origin=c]{180}{$\Lsh$} & \multicolumn{1}{|c|}{\cellcolor {gray!30} 0} 
& \cellcolor {yellow!80} $\rotatebox[origin=c]{180}{$\Lsh$}_{\color{node_4} 1}$& \multicolumn{1}{|c|}{\cellcolor {gray!30} 0} 
& $\rotatebox[origin=c]{180}{$\Lsh$}_{\color{node_4} 1}$ & \multicolumn{1}{|c|}{\cellcolor {gray!30} 1} 
\\ \cline{1-17} 
\multirow{2}{*}{{\color{node_5}  $p_5=86$}}   
& {\color{node_5}  0} &  
& {\color{node_5}  1} &  
& {\color{node_5}  0} &  
& {\color{node_5}  1} &  
& {\color{node_5}  0} &  
& {\color{node_5}  1} &  
& {\color{node_5}  1} &  
& {\color{node_5}  0} &  
\multicolumn{1}{c|}{} 
 \\ \cline{2-17} 
& \rotatebox[origin=c]{180}{$\Lsh$} & \multicolumn{1}{|c|}{\cellcolor {gray!30} 0} 
& \rotatebox[origin=c]{180}{$\Lsh$} & \multicolumn{1}{|c|}{\cellcolor {gray!30} 0} 
& \rotatebox[origin=c]{180}{$\Lsh$} & \multicolumn{1}{|c|}{\cellcolor {gray!30} 0} 
& \rotatebox[origin=c]{180}{$\Lsh$} & \multicolumn{1}{|c|}{\cellcolor {gray!30} 0} 
& \rotatebox[origin=c]{180}{$\Lsh$} & \multicolumn{1}{|c|}{\cellcolor {gray!30} 0} 
& \rotatebox[origin=c]{180}{$\Lsh$} & \multicolumn{1}{|c|}{\cellcolor {gray!30} 0} 
& \rotatebox[origin=c]{180}{$\Lsh$} & \multicolumn{1}{|c|}{\cellcolor {gray!30} 0} 
& \rotatebox[origin=c]{180}{$\Lsh$} & \multicolumn{1}{|c|}{\cellcolor {gray!30} 0} 
\\ \cline{1-17} 
 \multirow{2}{*}{$\left[~p_0~\right]_{\text{base 2}}$}   
  \\ \cline{2-17} 
& &\multicolumn{1}{|c|}{\cellcolor {gray!90} 1 }    
& &\multicolumn{1}{|c|}{\cellcolor {gray!90} 1 }    
& &\multicolumn{1}{|c|}{\cellcolor {gray!90} 0 }    
& &\multicolumn{1}{|c|}{\cellcolor {gray!90} 1 }    
& &\multicolumn{1}{|c|}{\cellcolor {gray!90} 1 }    
& &\multicolumn{1}{|c|}{\cellcolor {gray!90} 0 }    
& \multicolumn{1}{|c|}{\cellcolor {yellow!80}}  &\multicolumn{1}{|c|}{\cellcolor {gray!90} 0 }    
& &\multicolumn{1}{|c|}{\cellcolor {gray!90} 1 }    
\\ \cline{1-17} 
\end{tabular}} 
\vspace{0.3cm}
 \caption{
This table provides a visual illustration of the intuition behind our proposed Vickrey auction protocol using an example. 
Each of the five colors corresponds to a binary representation of each of the following bids: $p_1=143$, $p_2=124$, $p_3=217$, $p_4=222$, and $p_5=86$. 
The protocol examines each bid's binary representation, starting at the first digit, $j=1$, and then moving right until $j=8$. 
Light grey cells for each digit $j={1,2,\dots,8}$ demonstrate how the protocol conceals information. 
When a bidder's bid falls below the second-highest bid for the current digit, 
they are forced to input `0' as their bids for all subsequent digits. 
The winner conceals information by simulating a bid of `0' for the current digit and is required to input `1' for all subsequent digits. 
Otherwise, the original bid is displayed as the input number. 
The final row outputs the second-highest bid $\left[p_0\right]_{^\text{\tiny base 2}}$ in binary representation, 
which is indicated by the dark grey cell. 
As seen in the yellow cell, 
the winner $\alpha_4$ is willing to pay a price that is at least one bid higher than the output price: 
$\left[p_4\right]_{^\text{\tiny base 2}}  \geq \left[p_0\right]_{^\text{\tiny base 2}} +1 = [1  1  0  1  1  0  1  0]$. 
 (See \Cref{appendix_1} for full details.)
} 
\label{table_example}
\end{table*}

\subsection{Decentralized Setting}\label{sub_Codes_Creation}

In a decentralized setting without a central authority, 
we treat the Vickrey auction protocol as an SMPC problem. 
In this scenario, bidders collaborate to determine the second-highest bid without revealing additional information. 
Our design of the problem assumes the availability of an authenticated public channel for all participants or third-party observers who wish to verify the auction's integrity. 
This channel can be implemented using a public bulletin board or a blockchain.

Throughout the protocol, 
participants must prove their knowledge of discrete logarithms at various stages without disclosing them, thus preventing fraudulent behavior. 
This can be achieved using a zero-knowledge proof, which is commonly employed in related studies including \cite{bag2019seal,hao20092,brandt2005efficient}. 
For instance, Schnorr's signature, found in \cite{schnorr1991efficient}, can be utilized here because it is both concise and non-interactive.

To ensure security and privacy in a decentralized setting, 
the protocol performs all calculations in a finite field $\mathbb{F}$, 
with $g$ defined as a generator in the multiplicative group of $\mathbb{F}$. 
In addition, we define permutations $\phi$ and $\psi$ on the set of bidders $I$, 
where $\phi(\alpha_l)=\alpha_{l-1}$ and $\psi(\alpha_l)=\alpha_{l+1}$, with $\alpha_0=\alpha_n$ and $\alpha_{n+1}=\alpha_1$.

To support the decentralized SMPC aspect of the protocol, 
each bidder $\alpha_l$ is responsible for generating their own codes by choosing non-zero values for $a_{l, i, j}$, $c_{l, i}$, and $e_{l, j}$. 
The indices $i \in \{1, 2, \dots, n\}$ and  $j \in \{1, 2, \dots, k\}$ are used to provide encryption for different individuals and for each digit in the binary representation, respectively. 
It is essential to note that the codes $a_{l, l, j}$, $c_{l, i}$, and $e_{l, j}$ are considered to be private codes because $\alpha_l$ never shares them with anyone else.

In the following subsections, 
the codes $a_{l,i,j}$ and $e_{l,j}$ are used for performing various steps, 
such as generating keys and sharing bids, 
while the code $c_{l,i}$ is used for bid commitment and verification of the final price. 
We will also discuss their specific functions in greater detail.

\subsection{Key Generation}\label{sub_Keys_Generations}

To ensure auction privacy, 
each bidder $\alpha_l$ uses the codes created in \Cref{sub_Codes_Creation} to generate key values, check keys, and fake keys. 
These cryptographic tools play a critical role in calculating the correct final payment and maintaining the confidentiality of each bidder’s information.

Key values enable bidders to secretly verify their status as winners by ensuring that they are the only one who had bid `1' for a specific digit $j$.
This process is further explained in \Cref{sub_Checking_for_a_Sole_Participant}. 
Check keys and fake keys are used to compute the final payment price of the auction without revealing their information to anyone else. 
Further details are provided in \Cref{sub_2nd_Price}.

The key generation process is detailed in Algorithm \ref{keys_generation_algo}, 
where key values, check keys, and fake keys are computed iteratively in a ring transfer among the bidders. 
An example of this process using five bidders is illustrated in \autoref{figure_ring}.

\begin{figure}
\vspace{-0.6cm}
\centering
\subfloat[]
{\includegraphics[width=0.4\textwidth]{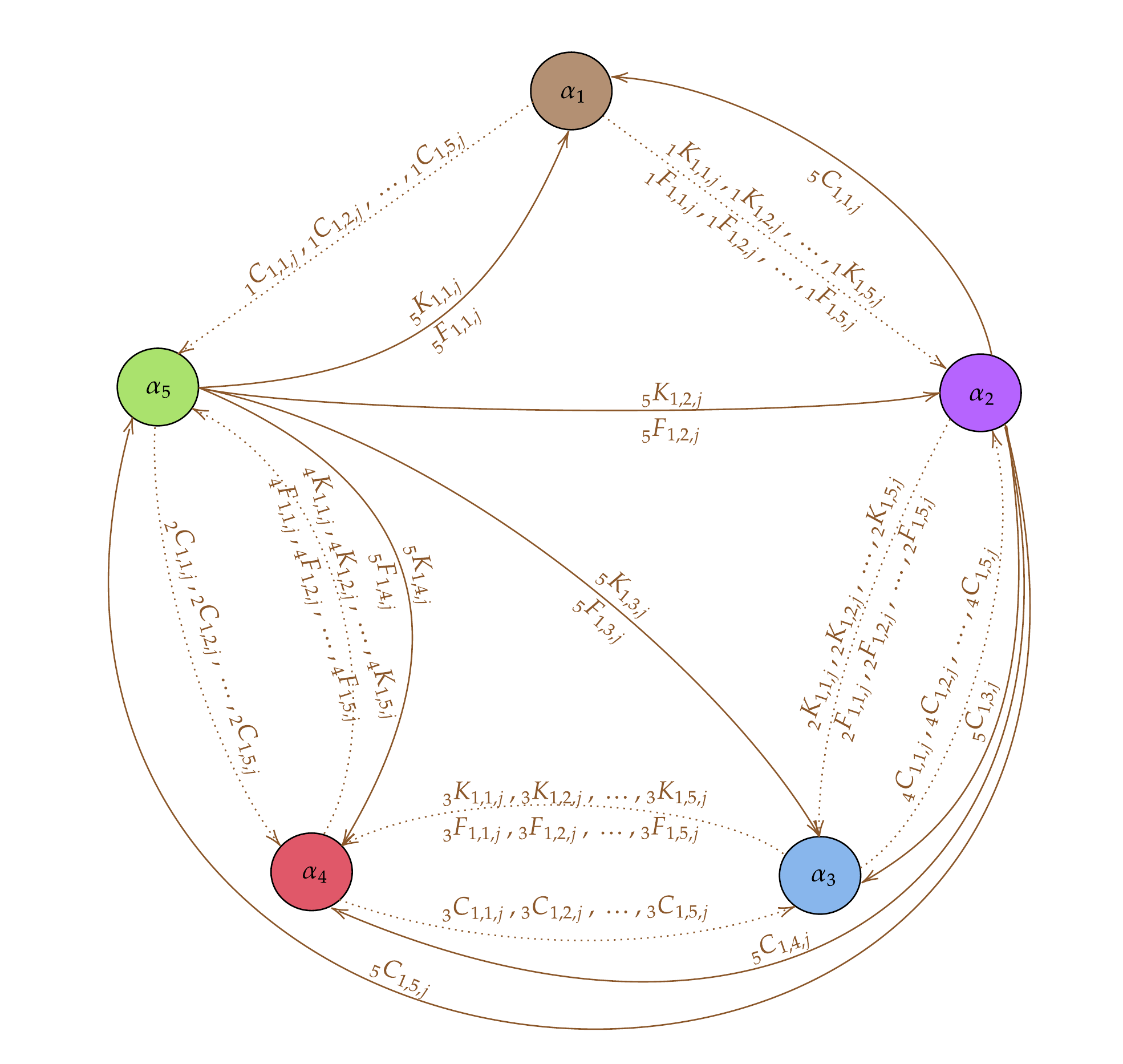}
\label{figure_ring_1}}
\qquad
\qquad
\subfloat[]
{\includegraphics[width=0.4\textwidth]{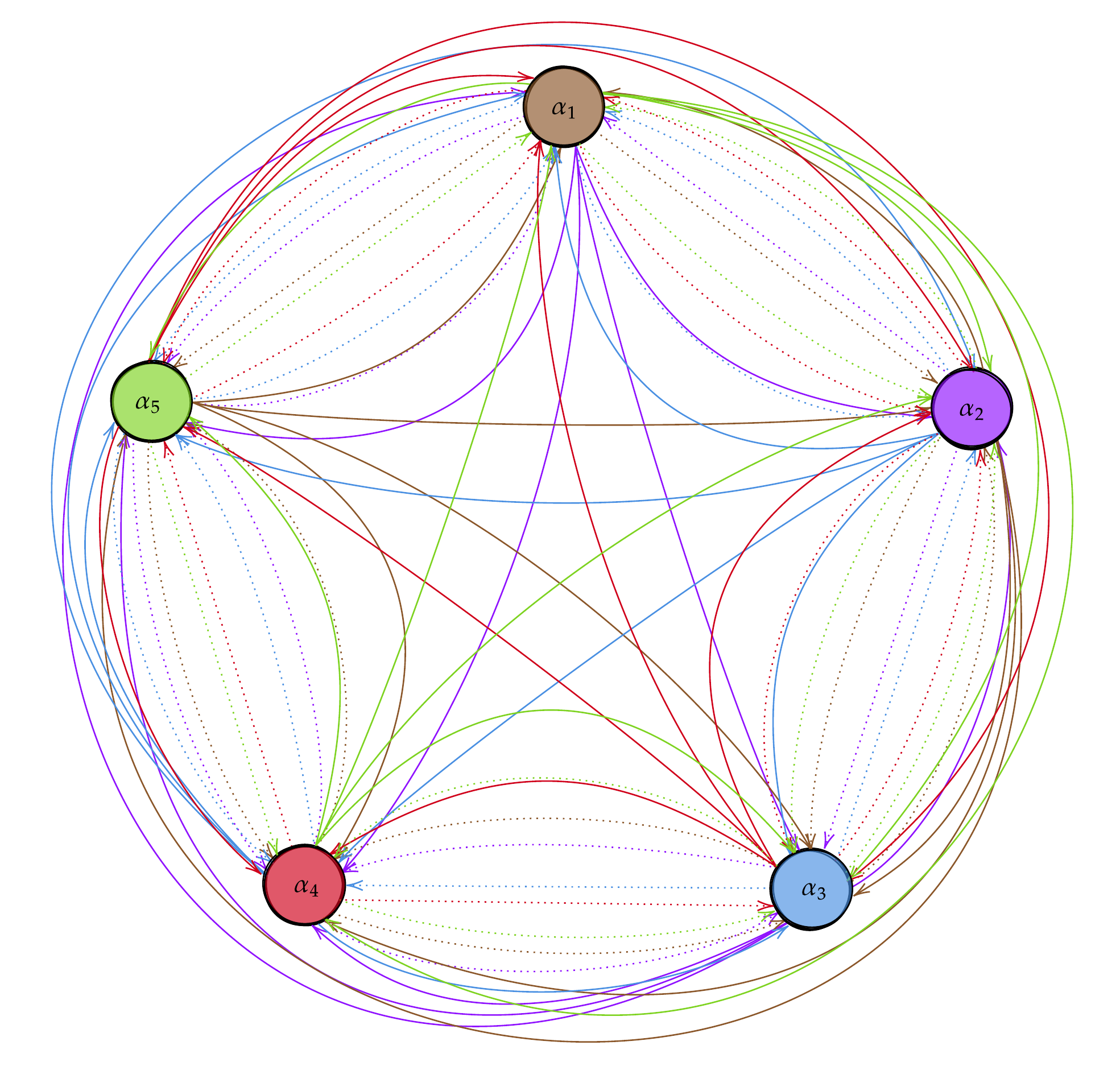}
 \label{figure_ring_2}}
\caption{
Key generation among five bidders using a ring transfer. 
(a) depicts the transformation of ${ }_r K_{1,i,j}$, ${ }_r F_{1,i,j}$ and ${ }_r C_{1,i,j}$ as initially generated by bidder $\alpha_1$, while (b) illustrates the process of key generation using a ring transfer among the five bidders $\alpha_l$ where $l=1,2,\dots,5$. The contributions of each bidder are distinguished by different colors, representing the path of initial values generated by each bidder.
}
\label{figure_ring}
\end{figure}
Initially,
each bidder $\alpha_l$ computes 
\begin{subequations}
\begin{align}
{ }_1 K_{l,i,j}&=g^{\left(a_{l,i,j}-\sum_{u \neq i} a_{l,u,j}\right) e_{l,j}} ~ \text{for $\forall i$ and $\forall j$} \label{eq_1a_proof} \\
{ }_1 F_{l,i,j}&=g^{\left((2n-3)a_{l,i,j}-\sum_{u \neq i} a_{l,u,j}\right) e_{l,j}^2}  ~ \text{for $\forall i$ and $\forall j$} \\
{ }_1 C_{l,i,j}&=g^{\left(a_{l,i,j}-\sum_{u \neq i} a_{l,u,j}\right) e_{l,j}^2}  ~ \text{for $\forall i$ and $\forall j$}
\end{align}
\end{subequations}
based on their private codes created in \Cref{sub_Codes_Creation}. 
Then, bidder $\alpha_l$ sends ${ }_1 K_{l,i,j}$ and ${ }_1 F_{l,i,j}$ to the next bidder $\psi(\alpha_l)$,
and ${ }_1 C_{l,i,j}$ to the previous bidder $\phi(\alpha_l)$ in the ring transfer.
Upon receiving ${ }_r K_{u,i,j}$ or ${ }_r F_{u,i,j}$ from bidder $\phi(\alpha_l)$, 
 bidder $\alpha_l$ computes 
\begin{equation} \label{eq_2_proof}
{ }_{r+1} K_{u,i,j}=\left({ }_{r} K_{u,i,j}\right)^{e_{l,j}}
\end{equation}
or 
\begin{equation}
{ }_{r+1} F_{u,i,j}=\left({ }_{r} F_{u,i,j}\right)^{e_{l,j}^2} \,,
\end{equation}
using their private codes.
If the iteration count is $r<n-1$, 
bidder $\alpha_l$ sends the computed values ${ }_{r+1} K_{u,i,j}$ or ${ }_{r+1} F_{u,i,j}$ to the next bidder $\psi(\alpha_l)$, 
or to the original bidder $i$ if $r = n-1$.
Similarly, when bidder $\alpha_l$ receives ${ }_{r'} C_{v,i,j}$ from bidder $\psi(\alpha_l)$, 
 they use their private code to compute 
 \begin{equation}
{ }_{r'+1} C_{v,i,j}=\left({ }_{r'} C_{v,i,j}\right)^{e_{l,j}^2}
\end{equation}
and send them to bidder $\phi(\alpha_l)$ if the $r < n-1$, 
or to the original bidder $i$ if $r=n-1$.
This process continues until the last bidder in the ring transfer,
set $K_{u,l,j}:={ }_n K_{u,l,j}$, 
$C_{v,l,j}:={ }_n C_{v,l,j} $, 
and $F_{u,l,j}:={ }_n F_{u,l,j}$ as their final values,
and committed $K_{u,l,j}$ to the seller by sending a cryptographic hash. 
Finally, each bidder $\alpha_l$ computes the key values 
\begin{equation}\label{K_formula}
K_{l,j}=\prod_{u=1}^n K_{u,l,j} \,,
\end{equation} 
check keys 
\begin{equation}\label{C_formula}
C_{l,j}=\prod_{v=1}^n C_{v,l,j}
\end{equation}
and fake keys 
\begin{equation}\label{F_formula}
F_{l,j}=\prod_{u=1}^n F_{u,l,j}
\end{equation} 
for each $j$.

\bigskip
\begin{algorithm}[h]
\caption{Key Generation}
\label{keys_generation_algo}
\DontPrintSemicolon
\SetAlgoLined

\KwInput{codes $a_{l,i,j}$ and $e_{l,j}$ for $\forall~l \in\{1,2, \dots, n\}$, $\forall~i \in\{1,2, \dots, n\}$  and $\forall~j\in\{1,2, \dots, k\}$.}
\KwOutput{keys $K_{l,j}$, check keys $C_{l,j}$ and fake keys $F_{l,j}$ $\forall~l \in\{1,2, \dots, n\}$ and $\forall~j\in\{1,2, \dots, k\}$.}

\vspace{0.1cm}
\Initialization{$r \gets 1$}{
    Send  ${ }_1 F_{l,i,j} \gets g^{\left((2n-3)a_{l,i,j}-\sum_{u \neq i} a_{l,u,j}\right) e_{l,j}^2}$
    and ${ }_1 K_{l,i,j} \gets g^{\left(a_{l,i,j}-\sum_{u \neq i} a_{l,u,j}\right) e_{l,j}}$ to bidder $\psi(\alpha_l)$ \; 
    Send ${ }_1 C_{l,i,j} \gets g^{\left(a_{l,i,j}-\sum_{u \neq i} a_{l,u,j}\right) e_{l,j}^2}$ to bidder $\phi(\alpha_l)$ for each bidder $\alpha_l$, each $i$ and each $j$
}

\vspace{0.1cm}
\Upon{ ${ }_r K_{u,i,j}$, ${ }_r F_{u,i,j}$ or ${ }_{r'} C_{v,i,j}$}{
    \vspace{0.05cm}
Bidder $\alpha_l$ computes one  of the following updates:\;
     ${ }_{r+1} K_{u,i,j} \gets \left({ }_r K_{u,i,j}\right)^{e_{l,j}}$\;
     ${ }_{r+1} F_{u,i,j} \gets \left({ }_r F_{u,i,j}\right)^{e_{l,j}^2}$\;
     ${ }_{r'+1} C_{v,i,j} \gets \left({ }_{r'} C_{v,i,j}\right)^{e_{l,j}^2}$\;
     \vspace{0.1cm}
    \eIf{$r<n-1$ or $r'<n-1$}{
        Send  ${ }_{r+1} K_{u,i,j}$ or ${ }_{r+1} F_{u,i,j}$ to bidder $\psi(\alpha_l)$ \;
        or Send ${ }_{r'+1} C_{v,i,j}$ to bidder $\phi(\alpha_l)$ \;
    }{
        Send  ${ }_{n} K_{u,i,j}$, ${ }_{n} F_{u,i,j}$, or ${ }_{n} C_{v,i,j}$ to bidder $\alpha_i$ \; 
        Bidder $\alpha_l$ sets one of the following:\;
         $K_{u,l,j} \gets { }_{n} K_{u,l,j}$\;
         $F_{u,l,j} \gets { }_{n} F_{u,l,j}$\; 
         $C_{v,l,j} \gets { }_n C_{v,l,j}$\;
         Bidder $\alpha_l$ sends a hash of $K_{u,l,j}$ to the seller \;
    }
}

\return{the following for each bidder $\alpha_l$ and each $j$}{
       $K_{l,j} \gets \prod_{u=1}^n K_{u,l,j}$\;
       $F_{l,j} \gets \prod_{u=1}^n F_{u,l,j}$ \;
       $C_{l,j} \gets \prod_{v=1}^n C_{v,l,j}$}

\end{algorithm} 
\bigskip

\begin{lemma}\label{lemma_1}
The formulas presented in Equations \ref{K_formula}, \ref{C_formula} and \ref{F_formula} can be represented as
\begin{subequations} \label{form1KCF}
    \begin{align}\hspace{-0.1cm}
K_{l,j} &=g^{\left(\sum_{u=1}^n a_{u,l,j}-\sum_{h \neq l} \sum_{u=1}^n a_{u,h,j}\right) \prod_{u=1}^n e_{u,j}}; \label{form1KCF:1} \\                      
C_{l,j} &=g^{\left(\sum_{u=1}^n a_{u,l,j}-\sum_{h \neq l} \sum_{u=1}^n a_{u,h,j}\right) \prod_{u=1}^n e_{u,j}^2}; \label{form1KCF:2} \\
F_{l,j} &=g^{\left((2n-3)\sum_{u=1}^n a_{u,l,j}-\sum_{h \neq l} \sum_{u=1}^n a_{u,h,j}\right) \prod_{u=1}^n e_{u,j}^2}. \label{form1KCF:3}            
    \end{align}
\end{subequations}
\end{lemma}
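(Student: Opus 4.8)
The plan is to follow each of the three families of quantities ${}_rK_{u,i,j}$, ${}_rF_{u,i,j}$ and ${}_rC_{v,i,j}$ through the ring transfer of Algorithm~\ref{keys_generation_algo}, read off a closed form for the exponent of $g$ after the final ($r=n$) iteration, and then substitute into \eqref{K_formula}, \eqref{C_formula} and \eqref{F_formula}, where the homomorphic identity $g^xg^y=g^{x+y}$ turns the products into a single power of $g$. Throughout, fix a digit $j$; it is convenient to abbreviate the initialization coefficients as
\[
A_{u,i,j}:=a_{u,i,j}-\sum_{v\neq i}a_{u,v,j},\qquad
B_{u,i,j}:=(2n-3)\,a_{u,i,j}-\sum_{v\neq i}a_{u,v,j},
\]
so that bidder $\alpha_u$ initializes ${}_1K_{u,i,j}=g^{A_{u,i,j}\,e_{u,j}}$, ${}_1C_{u,i,j}=g^{A_{u,i,j}\,e_{u,j}^2}$ and ${}_1F_{u,i,j}=g^{B_{u,i,j}\,e_{u,j}^2}$.

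The core step is an induction on the iteration counter $r$. The value ${}_1K_{u,i,j}$ is created by $\alpha_u$ and then relayed along the ring $\alpha_u\to\psi(\alpha_u)\to\psi^2(\alpha_u)\to\cdots$, each holder $\alpha_l$ replacing the received value by its $e_{l,j}$-th power via \eqref{eq_2_proof}; the $F$-values travel the same way with $e_{l,j}^2$ in place of $e_{l,j}$, and the $C$-values travel in the opposite direction with $e_{l,j}^2$. I claim that after $r$ iterations the exponent of $g$ in ${}_rK_{u,i,j}$ equals $A_{u,i,j}$ times $\prod e_{w,j}$ over the $r$ distinct bidders $\alpha_u,\psi(\alpha_u),\dots,\psi^{r-1}(\alpha_u)$ that have so far handled the value (and likewise for ${}_rC$, ${}_rF$ with squared factors, and with $B$ in place of $A$ for $F$). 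The base case $r=1$ is the initialization above. The inductive step is immediate from \eqref{eq_2_proof} and its $F$- and $C$-analogues, since $\bigl(g^X\bigr)^{e_{l,j}}=g^{X e_{l,j}}$ (resp.\ $g^{Xe_{l,j}^2}$) and the next holder in the ring is a bidder not yet counted (the ring has length $n$, and the protocol terminates precisely at $r=n$, when the creator's own $e_{u,j}$ together with the $n-1$ relays accounts for every bidder exactly once). Hence at $r=n$ the set of bidders that have touched the value is all of $I$, so
\[
K_{u,i,j}=g^{A_{u,i,j}\prod_{w=1}^n e_{w,j}},\qquad
C_{u,i,j}=g^{A_{u,i,j}\prod_{w=1}^n e_{w,j}^2},\qquad
F_{u,i,j}=g^{B_{u,i,j}\prod_{w=1}^n e_{w,j}^2}.
\]

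It then remains to substitute these into the product formulas. Using $g^xg^y=g^{x+y}$,
\[
K_{l,j}=\prod_{u=1}^n K_{u,l,j}
       =g^{\left(\sum_{u=1}^n A_{u,l,j}\right)\prod_{w=1}^n e_{w,j}},
\]
and expanding $\sum_{u=1}^n A_{u,l,j}=\sum_{u=1}^n a_{u,l,j}-\sum_{u=1}^n\sum_{v\neq l}a_{u,v,j}$, then interchanging the order of the finite double sum and renaming $v$ to $h$, converts $\sum_{u=1}^n\sum_{v\neq l}a_{u,v,j}$ into $\sum_{h\neq l}\sum_{u=1}^n a_{u,h,j}$; this is exactly \eqref{form1KCF:1}. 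The identical computation for $C_{l,j}$ (same coefficient $A_{u,l,j}$, squared $e$-factors) gives \eqref{form1KCF:2}, and repeating it with $B_{u,l,j}$ — whose first term carries the factor $2n-3$ while the $\sum_{v\neq l}$ term does not — gives \eqref{form1KCF:3}.

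The only delicate point is the ring bookkeeping in the inductive step: one must confirm that over the $n$ iterations each of the $n$ exponents $e_{w,j}$ is accumulated exactly once (for $K$), and exactly once and then squared throughout (for $C$ and $F$) — i.e.\ that no bidder processes the same value twice and the originator's $e_{u,j}$ is already present from initialization. Granting that, the rest is just the homomorphism $g^xg^y=g^{x+y}$ and one interchange of summation.
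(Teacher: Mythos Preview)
Your proof is correct and follows essentially the same approach as the paper's: unwind the ring-transfer recursion to obtain a closed form for each $K_{u,l,j}$, $C_{u,l,j}$, $F_{u,l,j}$, then take the product over $u$ and use $g^xg^y=g^{x+y}$ together with an interchange of summation. Your version is somewhat more explicit (stating the induction and flagging the ring-bookkeeping point that each $e_{w,j}$ enters exactly once), but the argument is the same.
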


\begin{proof}
Using a recursive substitution of \autoref{eq_2_proof} in the ring transfer of the key generation algorithm and applying the initial condition of \autoref{eq_1a_proof}.
We can simplify \autoref{K_formula} and obtain
\begin{equation}\label{K_formula_proof_1}
K_{l,j}=\prod_{u=1}^n K_{u,l,j} 
=\prod_{u=1}^n  g^{\left(a_{u,l,j}-\sum_{h \neq l} a_{u,h,j}\right) \prod_{u=1}^n e_{u,j}}
\end{equation}
Applying the product-to-sum rule, \autoref{K_formula_proof_1} becomes
\begin{equation}  \label{form1K}
              K_{l,j}=g^{\left(\sum_{u=1}^n a_{u,l,j}-\sum_{h \neq l} \sum_{u=1}^n a_{u,h,j}\right) \prod_{u=1}^n e_{u,j}}; 
\end{equation}
Similarly, we have expressions 
\begin{equation} \label{form1C}
    \begin{aligned}
        C_{l,j}&=\prod_{u=1}^n C_{u,l,j} \\ &=\prod_{u=1}^n  g^{\left(a_{u,l,j}-\sum_{h \neq l} a_{u,h,j}\right) \prod_{u=1}^n e_{u,j}^2}\\
              &=g^{\left(\sum_{u=1}^n a_{u,l,j}-\sum_{h \neq l} \sum_{u=1}^n a_{u,h,j}\right) \prod_{u=1}^n e_{u,j}^2};
    \end{aligned}
\end{equation}
and
\begin{equation} \label{form1F}
    \begin{aligned}
        F_{l,j}&=\prod_{u=1}^n F_{u,l,j} \\&=\prod_{u=1}^n  g^{\left((2n-3)a_{u,l,j}-\sum_{h \neq l} a_{u,h,j}\right) \prod_{u=1}^n e_{u,j}^2}\\
              &=g^{\left((2n-3)\sum_{u=1}^n a_{u,l,j}-\sum_{h \neq l} \sum_{u=1}^n a_{u,h,j}\right) \prod_{u=1}^n e_{u,j}^2}.
    \end{aligned}
\end{equation}
\end{proof}
The equations presented in the previous proof demonstrate how keys are generated from codes $a_{u,l,j}$ and $e_{u,j}$. 

In \Cref{sub_Price_Determination}, 
we will provide a detailed explanation of the purpose and function of these key representations and how they help ensure auction security and privacy.

\subsection{Bid Commitment}\label{sub_Bid_Commitment}

To ensure the integrity of the final outcome and provide transparency in the auction, 
a bid commitment step is employed. 
This step involves committing all bids ${p_l}$ for all bidders to a public channel while keeping the original values confidential, 
which serves as proof that the final output price is a legitimate bid.
We choose the ring transfer method to commit bids as it is superior to directly publishing hash values. 
Direct publishing of hash values can be vulnerable to brute-force attacks due to the limited number of possible bids. 

To accomplish this, 
each bidder $\alpha_l$ follows the steps outlined in Algorithm \ref{bid_commitment_algo}.
Initially, the bidder computes
\begin{equation}\label{price_commitment_1}
{ }_1 \mathbf{P}=g^{{p_l}  c_{l,1}}
\end{equation} 
and send it to bidder $\psi(\alpha_l)$.
Upon receiving ${ }_r \mathbf{P}$,
bidder $\alpha_l$ then needs to complete the ring transfer by passing the computed value
\begin{equation}\label{price_commitment_2} 
{ }_{r+1} \mathbf{P}={ }_r \mathbf{P}^{c_{l,r+1}}
\end{equation} 
to the next bidder $\psi(\alpha_l)$.
This process continues until the last bidder in the ring transfer, 
at which point bidder $\alpha_l$ publishes 
a final commitment to the public channel\begin{equation}\label{price_commitment_3}
\mathbf{P}_{l}={ }_{n} \mathbf{P}\,.
\end{equation}

\bigskip
\begin{algorithm}[h]
\caption{Bid Commitment}
\label{bid_commitment_algo}
\DontPrintSemicolon
\SetAlgoLined

\KwInput{bids $p_l$ and codes $c_{l,1}$ for $\forall~l \in\{1,2, \dots, n\}$.}
\KwOutput{published $\mathbf{P}_{l}$ for $\forall~l \in\{1,2, \dots, n\}$.}
\vspace{0.1cm}
\Initialization{$r \gets 1$}{
    Send ${ }_1 \mathbf{P} \gets g^{{p_l} c_{l,1}}$ to bidder $\psi(\alpha_l)$ for each $\alpha_l$\;
}
\vspace{0.1cm}
\Upon{${ }_r \mathbf{P}$}{
    Bidder $\alpha_l$ compute ${ }_{r+1} \mathbf{P} \gets { }_r \mathbf{P}^{c_{l,r+1}}$\;
    \eIf{$r < n - 1$}{
        Send ${ }_{r+1} \mathbf{P}$ to bidder $\psi(\alpha_l)$\;
    }{
        Publish $\mathbf{P}_{l} \gets { }_{n} \mathbf{P}$ to the public channel\;
    }
}
\end{algorithm} 
\bigskip


In \Cref{sub_Final_Price_Verification}, we will explain how we utilize these committed bids to verify the final output price's legitimacy.

\subsection{Bid Sharing}\label{sub_Bids_Sharing}

Bid sharing is presented in Algorithm \ref{bids_sharing_algo}.  
In this step of the auction protocol,
each bidder $\alpha_l$ sends their codes $a_{l,i,j}$ to bidder $\alpha_i$ for all $i \neq l$. 
Using these codes, each bidder computes the indicators
\begin{equation}\label{Y_form}
Y_{l,j}=\sum_{i=1}^n a_{i,l,j}
\end{equation}
or
$
N_{l,j} = -Y_{l,j}
$
for each digit $j$ of their bid in binary representation. 
These indicators correspond to the $j$th digit of the bid $\left[~p_{l,1}~p_{l,2}\dots p_{l,k}~\right]_{\text{base 2}}=p_l$ that bidder $\alpha_l$ committed to in the bid commitment step (\Cref{sub_Bid_Commitment}), 
which is denoted by $p_{l,j}$. 
Specifically, the indicators $Y_{l,j}$ and $N_{l,j}$ represent a bid of `$p_{l,j}=1$' and `$p_{l,j}=0$', respectively.

In the next step, 
each bidder $\alpha_l$ randomly chooses $b_{i,l,j}$ for each $i$ and $j$, 
such that 
\begin{equation} \label{def_share_bids}
    \sum_{i=1}^n b_{i,l,j}=\begin{cases}
    {Y_{l,j},} & {\text{if}}\ p_{l,j}=1 \\ 
    {N_{l,j},} & {\text{otherwise.}}
    \end{cases}
\end{equation}
Once bidders generate their random variables, 
$\alpha_l$ sends corresponding $b_{i,l,j}$ values to each bidder $\alpha_i$ for all $i\neq l$, 
and receives $b_{l,i,j}$ from each $\alpha_i$ for all $i\neq l$ and $j$. 
This completes the bid sharing step of the auction protocol.

\bigskip
\begin{algorithm}[h]
\caption{Bid Sharing for Bidder $\alpha_l$}
\label{bids_sharing_algo}
\SetKwFunction{Random}{Random} 
\DontPrintSemicolon
\SetAlgoLined

\KwInput{$a_{l,i,j}$ for $\forall~i$ and $\forall~j$.}
\KwOutput{$\alpha_l$ receives bids $b_{l,i,j}$ for $\forall i\neq l$ and $\forall j$.}
\vspace{0.1cm}
Send $a_{l,i,j}$ to bidder $\alpha_i$ for each $i \neq l$ and each $j$ \;
Compute $Y_{l,j} \gets \sum_{i=1}^n a_{i,l,j}$ and $N_{l,j} \gets -Y_{l,j}$ \;
$b_{i,l,j} \gets \Random(\mathbb{F})$ for $i \in\{1,2, \dots, n-1\}$\;
    
\eIf{$p_{l,j}=1$}{
$b_{n,l,j} \gets Y_{l,j} - \sum_{i=1}^{n-1} b_{i,l,j}$ for each $j$ \;

            }{
$b_{n,l,j} \gets N_{l,j} - \sum_{i=1}^{n-1} b_{i,l,j}$ for each $j$ \;

        }
Send $b_{i,l,j}$ to bidder $\alpha_i$ for each $i \neq l$ and each $j$.

\vspace{0.1cm}
\Function{\Random $(\text{a finite field } \mathbb{F})$}{
    $x \gets$ Randomly choose elements from $\mathbb{F}$ \;
    \Return{$x$}}
\end{algorithm} 
\bigskip

\begin{remark}
By using the formula for $Y_{l,j}$ defined in \autoref{Y_form}, 
the formulas for $K_{l,j}$, $C_{l,j}$, and $F_{l,j}$ in Lemma \autoref{lemma_1} can be simplified as follows:
\begin{subequations} \label{form2}
\begin{align}
K_{l,j}&=g^{\left(Y_{l,j}-\sum_{h \neq l} Y_{h,j}\right) \prod_{u=1}^n e_{u,j}};\\ 
C_{l,j}&=g^{\left(Y_{l,j}-\sum_{h \neq l} Y_{h,j}\right) \prod_{u=1}^n e_{u,j}^2};\\
F_{l,j}&=g^{\left((2n-3)Y_{l,j}-\sum_{h \neq l} Y_{h,j}\right) \prod_{u=1}^n e_{u,j}^2}.
\end{align}
\end{subequations}
\end{remark}

\subsection{Price Determination}\label{sub_Price_Determination}

In this step, our aim is to determine the second-highest bid without disclosing any additional information of participants.

The algorithm operates by examining the binary representation of the bids, 
beginning with the first digit $j=1$ and progressing towards the right until $j=k$.
At each digit $j$, the process is divided into three stages, 
each of which is detailed in separate subsections. 
In the first stage (\Cref{sub_Checking_for_a_Sole_Participant}), 
all participants check if they are the only ones bidding `1' for the current digit.
In the second stage (\Cref{sub_2nd_Price}), 
the protocol computes the second-highest bid while preserving privacy by using either a fake key or a check key, depending on the result of the first stage. 
If only one participant is bidding `1' for the current digit, 
the protocol uses a fake key that makes it seem as if nobody is bidding `1' at the current digit. Otherwise, the protocol uses a check key to preserve privacy.
Finally, in the third stage (\Cref{sub_Bids_Adjustment}), 
all participants adjust their bids based on the key used in the second stage to ensure privacy while accurately determining the second-highest bid.

\subsubsection{Checking for a Sole Participant}\label{sub_Checking_for_a_Sole_Participant}
In this step, the values $B_{j}$ and $P_{j}$ are introduced based on the discrete logarithm implemented in the ring transfer, function to maintain bid confidentiality.
$B_{j}$ is designed to help bidders secretly check if they are the sole participant for the current digit $j$. 
If so, their key value $K_{l,j}$ will match $B_{j}$ after the ring transfer. 
This information then allows bidder $\alpha_l$ to use a fake key to manipulate the value of $D_{j}$,
which makes all bids appear as if all bids are `0'.
Moreover, $P_{j}$ will be utilized to calculate the second-highest bid of the current digit $j$ in the following step without disclosing any additional information of participants.

The calculation of $B_{j}$ and $P_{j}$ values is performed by the bidders using shared bid values $b_{i,l,j}$ as outlined in Algorithm \ref{ring_transfer_BP_algo}. 
An example of this process is illustrated in \autoref{figure_ring_BP}.
The algorithm begins with bidder $\alpha_l$ computing the following equations
\begin{equation} \label{B_proof_eq_1}
{ }_{1} B_{l,j}=g^{\sum_{i=1}^n b_{l,i,j} e_{l,j}}
\end{equation}
and
\begin{equation} 
{ }_{1} P_{l,j}=g^{\sum_{i=1}^n b_{l,i,j} e_{l,j}^3} \,.
\end{equation}
Bidder $\alpha_l$ then sends ${ }_{1} B_{l,j}$ to bidder $\psi(\alpha_l)$ and ${ }_{1} P_{l,j}$ to bidder $\phi(\alpha_l)$.
Each time a value ${ }_r B_{i,j}$ or ${ }_{r'} P_{i,j}$ is received, 
bidder $\alpha_l$ computes the next value using 
\begin{equation} \label{B_proof_eq_2}
{ }_{r+1} B_{i,j}=\left({ }_{r} B_{i,j}\right)^{e_{l,j}}
\end{equation} 
or
\begin{equation} 
{ }_{r'+1} P_{i,j}=\left({ }_{r'} P_{i,j}\right)^{e_{l,j}^3} \,,
\end{equation}
and sends the computed value to bidder $\psi(\alpha_l)$ or bidder $\phi(\alpha_l)$.
This process is then repeated until the last bidder in the ring transfer is reached.
The final value ${ }_{r+1} B_{i,j}$ or ${ }_{r'+1} P_{i,j}$ is published as $B_{i,j}$ or $P_{i,j}$, respectively.
\begin{figure}
\centering
\subfloat[]
{\includegraphics[width=2.7in]{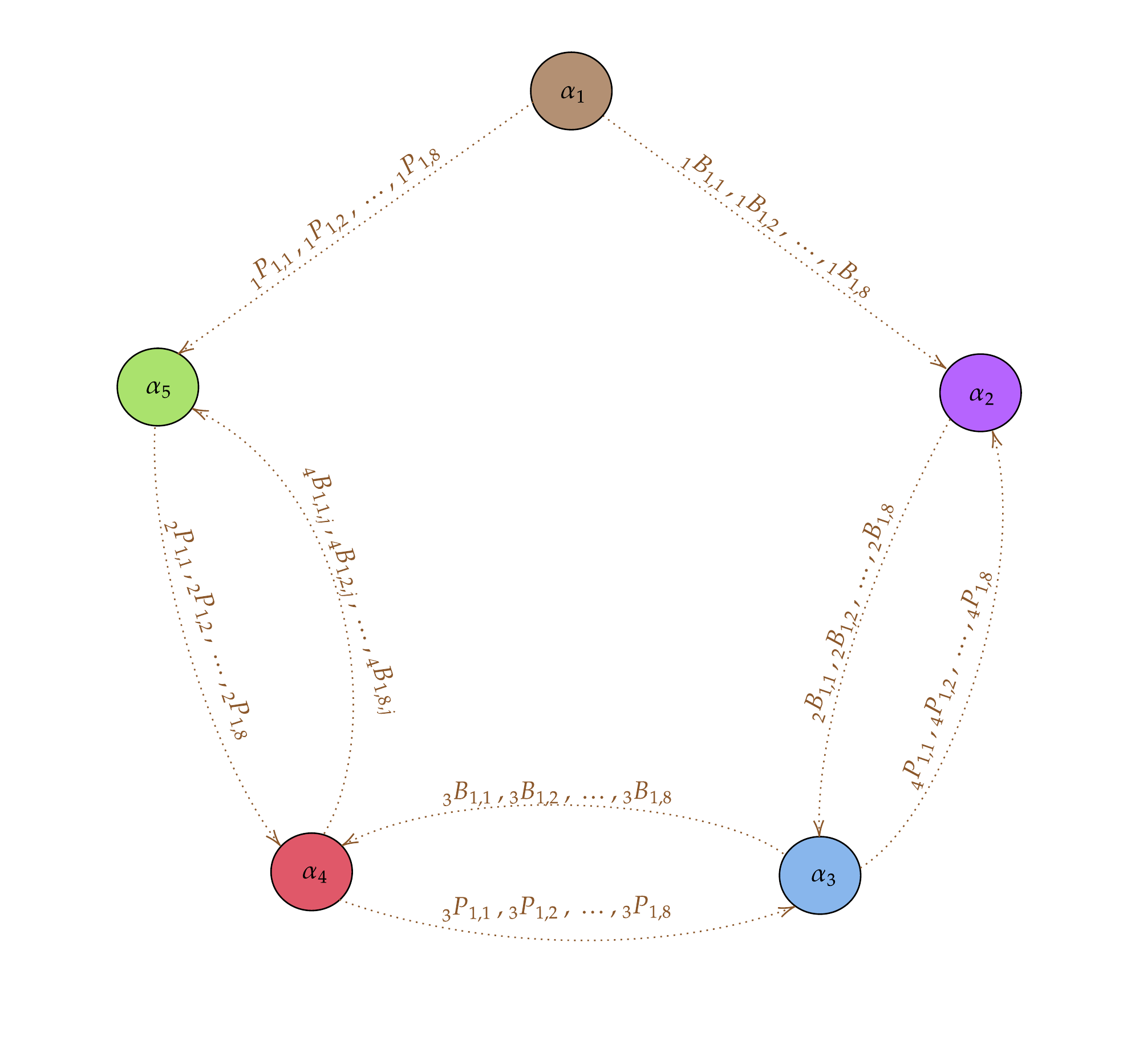}}
\label{figure_ring_1}
\hfil
\subfloat[]
{\includegraphics[width=2.8in]{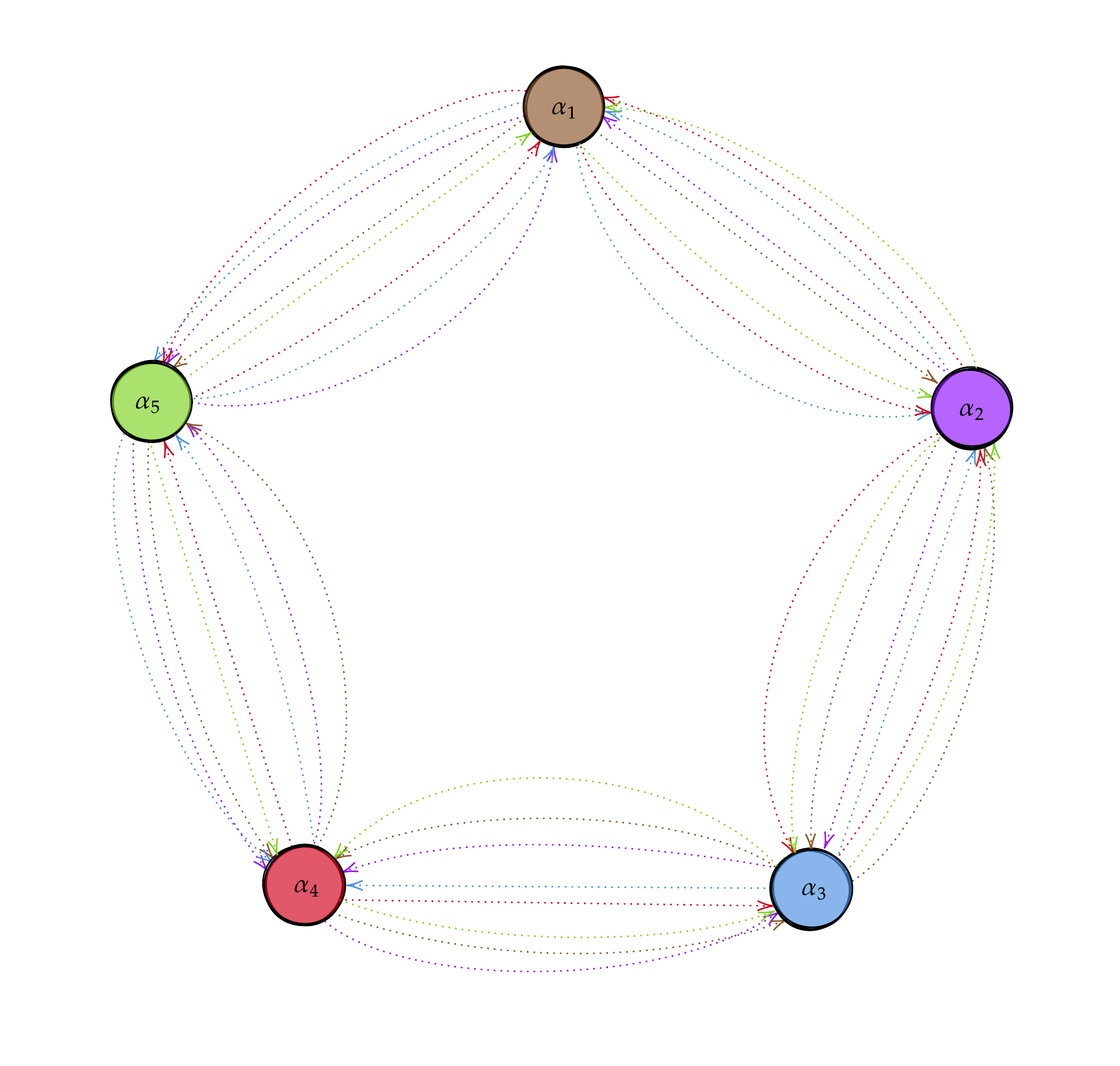}}
 \label{figure_ring_2}
\caption{
Ring transfer for generating $B_j$, $P_j$ and $D_j$ among five bidders. 
(a) depicts the transformation of values ${ }_r B_{1,j}$, ${ }_r P_{1,j}$ and ${ }_r D_{1,j}$, which were initially generated by bidder $\alpha_1$ in the ring transfer.
(b) illustrates the process of how values ${ }_r B_{i,j}$, ${ }_r P_{i,j}$ and ${ }_r D_{i,j}$ are generated by each of the five bidders $\alpha_l$ for $l=1,2,\dots,5$. 
Each bidder's contributions are represented by different colors to distinguish the path of initial values generated by each bidder.
}
\label{figure_ring_BP}
\end{figure}

Finally, each bidder $\alpha_l$ computes the final values $B_j$ and $P_j$ by taking the product of all published $B_{i,j}$ and $P_{i,j}$, 
using the formula
\begin{equation} \label{formula_B_j}
B_j=\prod_{i=1}^n B_{i,j}
\end{equation} 
and
\begin{equation} \label{formula_P_j}
P_j=\prod_{i=1}^n P_{i,j} \,.
\end{equation}

\bigskip
\begin{algorithm}[h]
\caption{Generating $B_j$ and $P_j$}
\label{ring_transfer_BP_algo} 

\SetKwFunction{RingTransferBP}{Algorithm~\ref{ring_transfer_BP_algo}} 
\DontPrintSemicolon
    \SetAlgoLined
    \KwInput{received bids $b_{l,i,j}$ and codes $e_{l,j}$ for $\forall~\alpha_l$, $\forall~i \in\{1,2, \dots, n\}$  and $\forall~j\in\{1,2, \dots, k\}$.}
    \KwOutput{$B_j$, $P_j$}

\vspace{0.1cm}
\Initialization{$r \gets 1$}{  
Send ${ }_{1} B_{l,j} \gets g^{\sum_{i=1}^n b_{l,i,j} e_{l,j}}$ to bidder $\psi(\alpha_l)$ \;
Send ${ }_{1} P_{l,j} \gets g^{\sum_{i=1}^n b_{l,i,j} e_{l,j}^3}$ to bidder $\phi(\alpha_l)$ \;
}

\vspace{0.1cm}
\Upon{ ${ }_r B_{i,j}$ or ${ }_{r'} P_{i,j}$}{
\vspace{0.05cm}
Bidder $\alpha_l$ computes one of the following updates: 
 ${ }_{r+1} B_{i,j} \gets \left({ }_{r} B_{i,j}\right)^{e_{l,j}}$ or 
 ${ }_{r'+1} P_{i,j} \gets \left({ }_{r'} P_{i,j}\right)^{e_{l,j}^3}$ \;

\eIf{$r<n-1$}{ 
  Send ${ }_{r+1} B_{i,j}$  to bidder $\psi(\alpha_l)$\;
            }{
 Publish $B_{i,j} \gets { }_{n} B_{i,j}$ \;
        }      

\eIf{$r'<n-1$}{ 
  Send ${ }_{r'+1} P_{i,j}$ to bidder $\phi(\alpha_l)$\;
            }{
 Publish $P_{i,j} \gets { }_{n} P_{i,j}$ \;
        }
        
    }
    \return{$B_j \gets \prod_{i=1}^n B_{i,j}$ and $P_j \gets \prod_{i=1}^n P_{i,j}$}

\end{algorithm} 
\bigskip

Let $\mathcal{Y}_j\subset I$ be the set of bidders whose $j^{th}$ digit of their bid is `1' ,
and let $\mathcal{N}_j\subset I$ be the set of bidders whose $j^{th}$ digit of their bid is `0'.

\begin{proposition}\label{proposition_2}
If $\alpha_l$ is the sole participant in digit $j$,
then their key value $K_{l,j}$ will match the value of $B_{j}$ as described in Algorithm \ref{ring_transfer_BP_algo}.
\end{proposition}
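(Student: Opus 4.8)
The plan is to compute both $B_j$ and $K_{l,j}$ explicitly as powers of $g$ and compare the exponents under the hypothesis that $\mathcal{Y}_j = \{\alpha_l\}$. First I would unwind the ring transfer for $B_j$: by recursively substituting \autoref{B_proof_eq_2} into \autoref{B_proof_eq_1} and then forming the product \autoref{formula_B_j}, exactly as was done for the keys in the proof of \Cref{lemma_1}, I obtain
\begin{equation*}
B_j = \prod_{i=1}^n B_{i,j} = g^{\left(\sum_{i=1}^n \sum_{m=1}^n b_{i,m,j}\right)\prod_{u=1}^n e_{u,j}}.
\end{equation*}
The double sum $\sum_{i=1}^n \sum_{m=1}^n b_{i,m,j}$ is over all shares; grouping by the second index $m$ and applying the defining property \autoref{def_share_bids} of the shares, $\sum_{i=1}^n b_{i,m,j} = Y_{m,j}$ when $p_{m,j}=1$ and $= N_{m,j} = -Y_{m,j}$ when $p_{m,j}=0$. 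Hence the exponent's bracket equals $\sum_{m\in\mathcal{Y}_j} Y_{m,j} - \sum_{m\in\mathcal{N}_j} Y_{m,j}$.

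Next I would invoke the simplified key formula from the Remark following \Cref{lemma_1}, namely $K_{l,j} = g^{\left(Y_{l,j} - \sum_{h\neq l} Y_{h,j}\right)\prod_{u=1}^n e_{u,j}}$. Now apply the hypothesis: if $\alpha_l$ is the sole participant in digit $j$, then $\mathcal{Y}_j = \{\alpha_l\}$ and $\mathcal{N}_j = I \setminus \{\alpha_l\}$, so the bracket in the exponent of $B_j$ becomes $Y_{l,j} - \sum_{h\neq l} Y_{h,j}$, which is precisely the bracket in the exponent of $K_{l,j}$. Since both elements are $g$ raised to the same exponent (both carrying the common factor $\prod_{u=1}^n e_{u,j}$), we conclude $B_j = K_{l,j}$, as claimed.

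The routine part is the bookkeeping in unwinding the ring transfer and the index manipulation in the double sum — this mirrors \Cref{lemma_1} closely, so I would state it by analogy rather than repeat it. The one genuine point to get right is the reindexing step: one must be careful that the product structure $\prod_{u=1}^n e_{u,j}$ picks up exactly one factor $e_{u,j}$ from each bidder in the ring (so that the exponent of $B_j$ is linear, not quadratic, in the $e$'s — contrast with $C_{l,j}$ and $F_{l,j}$ which use $e_{u,j}^2$), and that the defining identity \autoref{def_share_bids} is applied with the correct index fixed. I expect no real obstacle beyond this; the argument is essentially a specialization of the already-proved key representation, matched against the share-sum identity.
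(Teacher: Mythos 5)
Your proposal is correct and follows essentially the same route as the paper's own proof: unwind the ring transfer to express $B_j$ as $g$ raised to the double sum of shares times $\prod_{u=1}^n e_{u,j}$, reindex and apply the share-sum identity to reduce the exponent to $\sum_{\alpha_u\in\mathcal{Y}_j} Y_{u,j}-\sum_{\alpha_v\in\mathcal{N}_j} Y_{v,j}$, and then specialize to $\mathcal{Y}_j=\{\alpha_l\}$ to match the key formula. The index bookkeeping you flag is handled exactly as you describe, so there is nothing to add.
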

\begin{proof}

Given the formulas for $B_j$ in \autoref{formula_B_j}
\begin{equation*}  
        B_j=\prod_{i=1}^n B_{i,j}
\end{equation*}
Using a recursive substitution of \autoref{B_proof_eq_2} into \autoref{formula_B_j} and applying the initial condition of \autoref{B_proof_eq_1}, 
we have the following equation
\begin{equation}        \label{eqn_B_j_2}  
        B_j=\prod_{i=1}^n g^{(\sum_{u=1}^n b_{i,u,j}) \prod_{u=1}^n e_{u,j}}
        \end{equation}         
By transforming the product into a sum in the exponential in \autoref{eqn_B_j_2}, we obtain
\begin{equation}\label{eqn_B_j_3}  
B_j=g^{(\sum_{i=1}^n \sum_{u=1}^n b_{i,u,j}) \prod_{u=1}^n e_{u,j}}
\end{equation}
By rearranging \autoref{eqn_B_j_3}, we get
\begin{equation}              \label{eqn_B_j_4}    
B_j=g^{( \sum_{u=1}^n \sum_{i=1}^n b_{i,u,j}) \prod_{u=1}^n e_{u,j}} 
\end{equation}
By applying the formula of $Y$ and $N$ to \autoref{eqn_B_j_4}, we have
\begin{equation}  \label{eqn_B_j_5}  
B_j=g^{( \sum_{\alpha_u\in \mathcal{Y}_j} Y_{u,j}+\sum_{\alpha_v\in \mathcal{N}_j} N_{v,j}) \prod_{u=1}^n e_{u,j}}
\end{equation}
Finally, using the formula of $N$ in \autoref{eqn_B_j_5}, we arrive at
\begin{equation}  \label{eqn_B_j_6}  
        B_j=g^{( \sum_{\alpha_u\in \mathcal{Y}_j} Y_{u,j}-\sum_{\alpha_v\in \mathcal{N}_j} Y_{v,j}) \prod_{u=1}^n e_{u,j}}.  
        \end{equation}
The values of $B_j$ can be further simplified depending on the bids of the participants. Here are two cases:
\begin{enumerate}
\item
If all bidders bid `0' for digit $j$, then $\mathcal{N}_j=I$ and $B_j$ can be simplified to
\begin{equation} \label{eqn_B_j_7}  
    B_j=g^{( -\sum_{v=1}^n Y_{v,j}) \prod_{u=1}^n e_{u,j}} \,.  
\end{equation}
Value $B_j$ holds no significance for the bidders as they cannot decode it with their key value $K_{l,j}$.

\item If $\alpha_l$ is the only bidder who bids `1' in digit $j$, then $B_j$ simplifies to
\begin{equation} \label{eqn_B_j_8}  
    B_j=g^{( Y_{l,j}-\sum_{v\neq l} Y_{v,j}) \prod_{u=1}^n e_{u,j}} \,.  
\end{equation}   
In this case, the value of $B_j$ will equal the bidder's key value $K_{l,j}$ as described in Lemma \autoref{lemma_1}.
\end{enumerate}

\end{proof}

By performing the test in Proposition \autoref{proposition_2}, 
bidder $\alpha_l$ can confirm they are the only participant who bid `1' in the current digit. 
In the next stage (\Cref{sub_2nd_Price}), this bidder can use a fake key to manipulate the calculation of $P_j$ in such a way that it appears as if all bidders bid `0' for this current digit. 

\begin{remark}\label{remark_3}
In addition to the formula for $B_j$, 
we can also simplify the formula for $P_j$ using the same notation as the proof for Proposition \autoref{proposition_2}. 
Specifically, we can write $P_j$ as 
\begin{equation} \label{eqn_P_j_1}  
    P_j=g^{( \sum_{\alpha_u\in \mathcal{Y}_j} Y_{u,j}-\sum_{\alpha_v\in \mathcal{N}_j} Y_{v,j}) \prod_{u=1}^n e_{u,j}^3}.   
\end{equation}
where $Y_{u,j}$ and $e_{u,j}$ are defined as before. 
By simplifying the formula for $P_j$ in \autoref{formula_P_j}, we obtain
\begin{equation} \label{eqn_P_j_1}  
    P_j=g^{( \sum_{\alpha_u\in \mathcal{Y}_j} Y_{u,j}-\sum_{\alpha_v\in \mathcal{N}_j} Y_{v,j}) \prod_{u=1}^n e_{u,j}^3}.   
\end{equation}
We can also simplify $P_j$ further in two cases:
\begin{enumerate}
\item
If all bidders bid `0' for digit $j$, we have
\begin{equation}\label{form1P}
    P_j=g^{( -\sum_{v=1}^n Y_{v,j}) \prod_{u=1}^n e_{u,j}^3}.   
\end{equation}
It's important to note that the simplification of $B_j$ in \autoref{eqn_B_j_7} may not hold any significance for bidders, presented in the first case of the proof in Proposition \autoref{proposition_2}.
However, the values of $P_j$ can provide valuable information to bidders regarding the next step.

\item
If bidder $\alpha_l$ is the only one bidding `1' for digit $j$, then we have
\begin{equation} \label{form2P}
    P_j=g^{( Y_{l,j}-\sum_{v\neq l} Y_{v,j}) \prod_{u=1}^n e_{u,j}^3}.   
\end{equation}
\end{enumerate}
\end{remark}
This simplification in Remark \autoref{remark_3} will be used in the next stage (\Cref{sub_2nd_Price}) to check if there are at least 2 bidders bidding `1' in the $j^{th}$ digit.

\subsubsection{Calculating the Output Price}\label{sub_2nd_Price}

In this step, the calculation of the output price (which is the second-highest bid) is carried out in a confidential manner using Algorithm \ref{2nd_price_algo}.
The result of this computation represents the output price at the current digit and is visible to all participants.
To compute the output price, 
the values of $d_{i,l,j}$ are introduced, which are chosen randomly by the bidders using either their fake key or their check key, depending on their key values, 
as described by the equation
\begin{equation}
    \prod_{i=1}^n d_{i,l,j}=\begin{cases}
{F_{l,j},} & {\text{if}}\ K_{l,j}=B_j \\ 
{C_{l,j},} & {\text{otherwise.}} 
    \end{cases}
\end{equation}
The chosen $d_{i,l,j}$ values are sent to $\alpha_i$ for all $i \neq l$, 
and $\alpha_l$ receives $d_{l,i,j}$ for all $i \neq l$. 

To calculate $D_j$, $\alpha_l$ starts by computing 
\begin{equation}\label{D_ini}
{ }_{1} D_{l,j}=\prod_{i=1}^n d_{l,i,j}^{e_{l,j}}
\end{equation}
and sending it to bidder $\psi(\alpha_l)$. 
When $\alpha_l$ receives ${ }_r D_{i,j}$, 
it computes  
\begin{equation}\label{D_trans}
{ }_{r+1} D_{i,j}=\left({ }_{r} D_{i,j}\right)^{e_{l,j}} \,.
\end{equation}
This process is then repeated until the last bidder in the ring transfer.
If $r<n-1$, 
the computed value is sent to bidder $\psi(\alpha_l)$; 
otherwise it is published as $D_{i,j}$.
Each bidder can then compute 
\begin{equation}\label{def_D}
D_j=\prod_{i=1}^n D_{i,j}
\end{equation}
using the published $D_{i,j}$. 
This process is summarized in Algorithm \ref{ring_transfer_D_algo}.
An example of $D_j$ generation using a ring transfer is shown in \autoref{figure_ring_BP}.

\bigskip
\begin{algorithm}[h]
\caption{Generating $D_j$}
\label{ring_transfer_D_algo} 
\SetKwFunction{RingTransferD}{Algorithm~\ref{ring_transfer_D_algo}} 
\DontPrintSemicolon
    \SetAlgoLined
    \KwInput{received $d_{l,i,j}$ and codes $e_{l,j}$ for $\forall~\alpha_l$, $\forall~i \in\{1,2, \dots, n\}$  and $\forall~j\in\{1,2, \dots, k\}$.}
    \KwOutput{$D_j$}
    
\vspace{0.1cm}
\Initialization{$r \gets 1$}{
Send ${ }_{1} D_{l,j} \gets \prod_{i=1}^n d_{l,i,j}^{e_{l,j}}$ to bidder $\psi(\alpha_l)$} 

\vspace{0.1cm}
    \Upon{ ${ }_r D_{i,j}$}{
   
\eIf{$r<n-1$}{
Send ${ }_{r+1} D_{i,j} \gets \left({ }_{r} D_{i,j}\right)^{e_{l,j}}$ to bidder $\psi(\alpha_l)$\;
            }{
Publish $D_{i,j} \gets { }_{r+1} D_{i,j}$ \;   
            }
    }
    \Return{$D_j \gets \prod_{i=1}^n D_{i,j}$ }

\end{algorithm} 

\begin{figure}
\centering
\subfloat[]
{\includegraphics[width=2.5in]{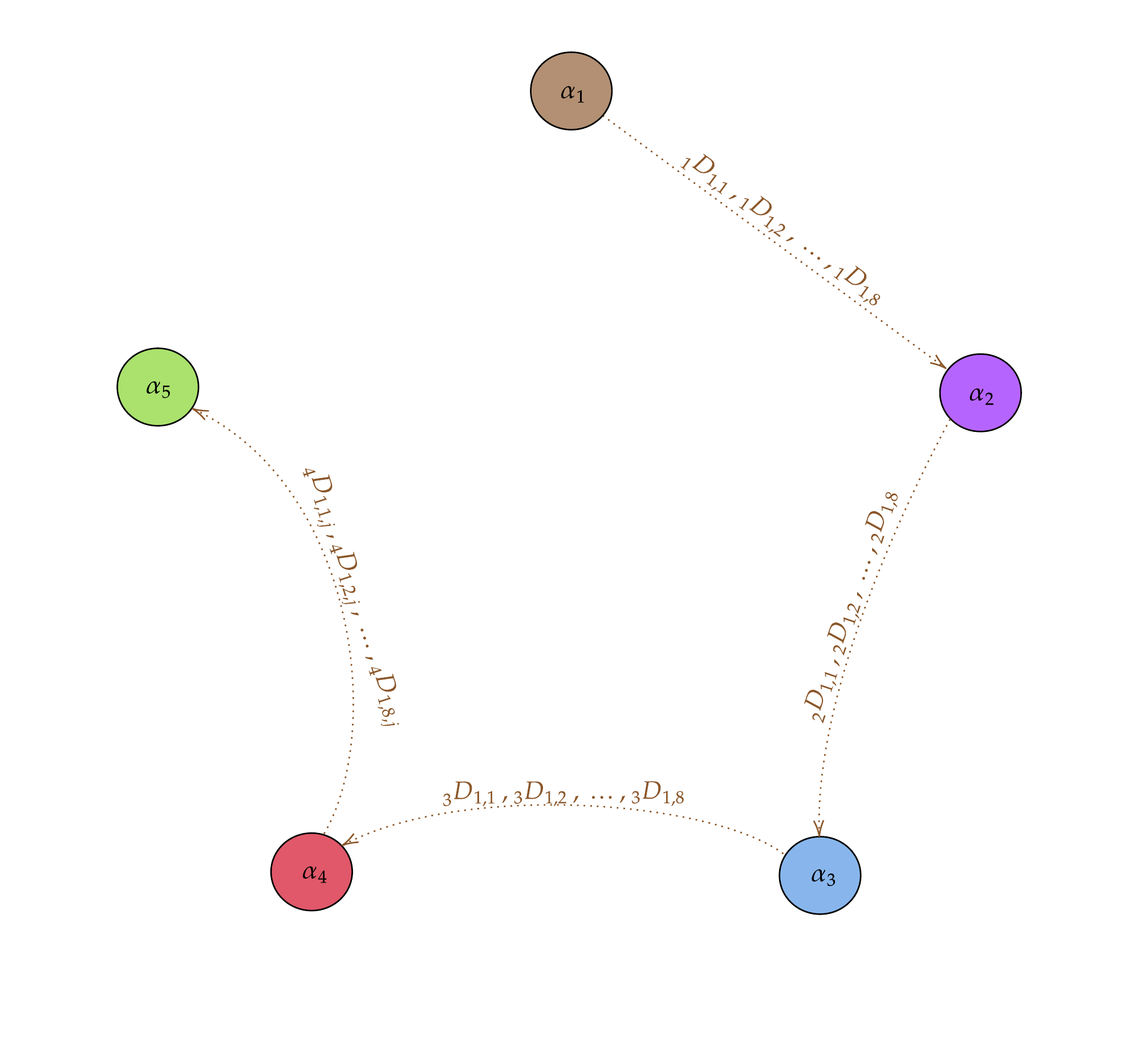}}
\label{figure_ring_1}
\hfil
\subfloat[]
{\includegraphics[width=2.6in]{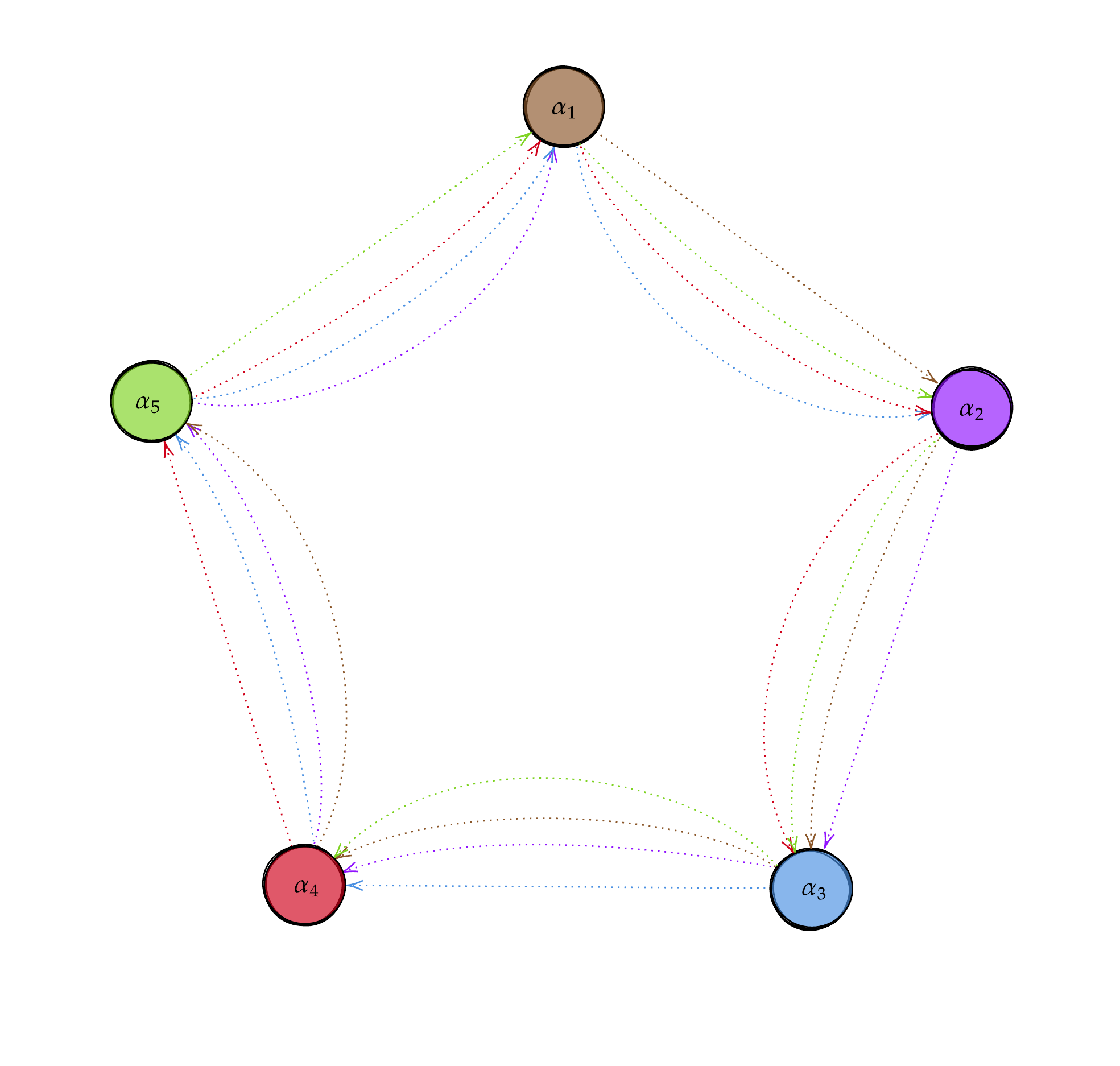}}
 \label{figure_ring_2}
\caption{Ring transfer for generating $D_j$ among five Bidders. 
(a) depicts the transformation of values ${ }_r D_{1,j}$ in the ring transfer, which were initially generated by bidder $\alpha_1$.
(b) illustrates the process of how values ${ }_r D_{l,j}$ are generated by each of the five bidders $\alpha_l$ for $l=1,2,\dots,5$.
The contributions of each bidder are distinguished by different colors, which each represent the path of initial values generated by each bidder.
}
\label{An_example_D_j_generation}
\end{figure}

\begin{proposition}\label{prop_1}
If $D_j = P_j^{n-2}$, 
then $j^{th}$ digit of the binary representation of the final output price $p_{0,j}$ will be equal to `0'.
\end{proposition}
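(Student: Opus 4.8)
The plan is to prove the contrapositive: I will show that if $p_{0,j}=1$ then $D_j\neq P_j^{n-2}$. By the adjustment rule of the price-determination phase, the second-highest digit $p_{0,j}$ equals `0' precisely when at most one still-active bidder submits `1' at position $j$, i.e. when $|\mathcal{Y}_j|\le 1$; hence $p_{0,j}=1$ is exactly the regime $|\mathcal{Y}_j|\ge 2$, and the whole proposition reduces to showing $D_j\neq P_j^{n-2}$ in that single regime. This is the soundness direction: it guarantees that whenever the public test $D_j=P_j^{n-2}$ succeeds, the second-highest digit genuinely is `0'.

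The first technical step is to put $D_j$ in closed form. I would unroll the ring transfer of Algorithm~\ref{ring_transfer_D_algo} exactly as in the proof of \Cref{proposition_2}, substituting \eqref{D_trans} recursively into \eqref{def_D} starting from the initial value \eqref{D_ini}. Since each of the $n$ bidders contributes one factor $e_{l,j}$, every component acquires the common factor $\prod_{u=1}^n e_{u,j}$, giving $D_j=\left(\prod_{l=1}^n\prod_{i=1}^n d_{l,i,j}\right)^{\prod_u e_{u,j}}$. Swapping the order of the double product and invoking the defining relation $\prod_{l=1}^n d_{l,i,j}\in\{F_{i,j},C_{i,j}\}$ collapses the inner product to one key per bidder, so $D_j=\left(\prod_{i=1}^n\mathrm{key}_{i,j}\right)^{\prod_u e_{u,j}}$, where $\mathrm{key}_{i,j}=F_{i,j}$ if $K_{i,j}=B_j$ and $\mathrm{key}_{i,j}=C_{i,j}$ otherwise. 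Substituting the $Y$-forms of $C_{i,j}$ and $F_{i,j}$ from \Cref{lemma_1} then puts $D_j$ in the shape $g^{(\cdots)\prod_u e_{u,j}^3}$, matching the $e^3$-exponent of $P_j$.

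Next I would determine the fake/check split in the regime $|\mathcal{Y}_j|\ge 2$. Comparing the exponents of $K_{l,j}$ and of $B_j$ (from \Cref{proposition_2}) shows that $K_{l,j}=B_j$ is equivalent to $Y_{l,j}=\Sigma_Y$, where $\Sigma_Y=\sum_{\alpha_u\in\mathcal{Y}_j}Y_{u,j}$; when two or more distinct $Y$-values are being summed this has no solution, so no bidder triggers a fake key and $\mathrm{key}_{i,j}=C_{i,j}$ for every $i$. Summing the check-key exponents and writing $T=\sum_{i=1}^n Y_{i,j}$, the quantity $\sum_i\bigl(Y_{i,j}-\sum_{h\neq i}Y_{h,j}\bigr)$ simplifies to $(2-n)T$, so that $D_j=g^{(2-n)T\prod_u e_{u,j}^3}$.

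Finally I would compare with $P_j^{n-2}=g^{(n-2)(\Sigma_Y-\Sigma_N)\prod_u e_{u,j}^3}$ from \Cref{remark_3}, where $\Sigma_N=\sum_{\alpha_v\in\mathcal{N}_j}Y_{v,j}$ and $T=\Sigma_Y+\Sigma_N$. Equating the two exponents modulo the group order and cancelling the common factor $(n-2)\prod_u e_{u,j}^3$ leaves $-T=\Sigma_Y-\Sigma_N$, i.e. $2\Sigma_Y=0$, hence $\Sigma_Y=0$. Thus $D_j=P_j^{n-2}$ would force $\Sigma_Y=0$, contradicting $|\mathcal{Y}_j|\ge 2$, and the contrapositive follows. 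The main obstacle is precisely the justification of $\Sigma_Y\neq 0$: the cancellations must be carried out in the prime-order group generated by $g$ so that $2$ and $n-2$ are invertible, and the nonvanishing of $\Sigma_Y$ must be argued from the fact that each $Y_{u,j}=\sum_i a_{i,u,j}$ is fixed by the independently chosen random codes $a_{i,u,j}$, so $\Sigma_Y$ is a nontrivial sum of effectively uniform field elements and equals zero only with the negligible probability $1/q$ for the large prime order $q$ used in the protocol.
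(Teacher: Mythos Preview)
Your argument is correct and lands on the same closed form for $D_j$ that the paper derives, but the logical organisation differs. The paper argues forward by splitting on whether a fake key is injected: in its Case~1 (a fake key is used, so by \Cref{proposition_2} there is a sole `1'-bidder) it computes $D_j=P_j^{n-2}$ directly from the fake-key exponent; in its Case~2 (all bidders submit check keys) it obtains $D_j=g^{-(n-2)T\prod_u e_{u,j}^3}$ and observes this equals $P_j^{n-2}$ only when $P_j$ takes its ``nobody-bids'' form \eqref{form1P}. You instead run the contrapositive, first isolating the clean equivalence $K_{l,j}=B_j\Leftrightarrow Y_{l,j}=\Sigma_Y$ (which the paper never states explicitly) to rule out fake keys in the $|\mathcal Y_j|\ge 2$ regime, and then drop straight into the paper's Case-2 computation. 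Your route is tighter for the direction actually stated in the proposition and, importantly, makes the probabilistic nondegeneracy assumptions ($\Sigma_Y\neq 0$ and $Y_{l,j}\neq\Sigma_Y$ over the random codes) explicit, whereas the paper leaves these implicit; the paper's case split, on the other hand, simultaneously furnishes the converse direction that feeds into \Cref{prop_2}.
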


\begin{proof} 
From the definition of $D_j$ in \autoref{def_D}, we have
\begin{equation*}
D_j=\prod_{i=1}^n D_{i,j}
\end{equation*}
Using a recursive substitution of \autoref{D_trans} into \autoref{def_D} and applying the initial condition of \autoref{D_ini}, 
we have 
\begin{equation}\label{case_1_D_j_2}
D_j=\prod_{i=1}^n (\prod_{u=1}^n d_{i,u,j})^{\prod_{u=1}^n e_{u,j}}
\end{equation}
By rearranging \autoref{case_1_D_j_2}, we get
\begin{equation}\label{case_1_D_j_3}
D_j= (\prod_{u=1}^n \prod_{i=1}^n d_{i,u,j})^{\prod_{u=1}^n e_{u,j}}
\end{equation}
There are two cases to consider: the first case is when someone uses a fake key, and the second case is when no one uses a fake key.

Case 1. 
Suppose $\alpha_l$ is the only bidder who bids `1' in the current round and submits a fake key instead of a check key.
Using the formula \ref{form2} for $F$ and $C$, 
\autoref{case_1_D_j_3} becomes
\begin{equation}\label{case_1_D_j_5}
D_j= (F_{l,j}\prod_{h\neq l }C_{h,j})^{\prod_{u=1}^n e_{u,j}}
\end{equation}
By rearranging \autoref{case_1_D_j_5}, we obtain
\begin{equation}\label{case_1_D_j_6}
D_j= F_{l,j}^{\prod_{u=1}^n e_{u,j}}\prod_{h\neq l }C_{h,j}^{\prod_{u=1}^n e_{u,j}}
\end{equation}
Using the formula \ref{form2} for $F$ and $C$, 
\autoref{case_1_D_j_6} becomes
\begin{equation}\label{case_1_D_j_7}
D_j=g^{\left((2n-3)Y_{l,j}-\sum_{h \neq l} Y_{h,j}\right) \prod_{u=1}^n e_{u,j}^3} 
 \prod_{h\neq l }g^{\left(Y_{h,j}-\sum_{v \neq h} Y_{v,j}\right) \prod_{u=1}^n e_{u,j}^3}
\end{equation}
Simplifying \autoref{case_1_D_j_7}, we get
\begin{equation}\label{case_1_D_j_8}
D_j=g^{\left((n-2)Y_{l,j}-\sum_{h \neq l} (n-2)Y_{h,j}\right) \prod_{u=1}^n e_{u,j}^3}
\end{equation}
Finally, using the formula \ref{form2P} for $P$, we arrive at
\begin{equation}\label{case_1_D_j_9}
D_j=P_j^{n-2}. 
\end{equation}
\autoref{case_1_D_j_9} indicates that there is one or no bidder present for the $j^{th}$ digit. 
In such a scenario, the $j^{th}$ digit of the binary representation of the second price will be equal to `0'.

Case 2. If no bidder bids `1' in the current digit, then all bidders will use their check key to compute $D_j$.
The procedure is similar to Case 1, but all bidders use their check key. 
As a result, we have
\begin{equation}\label{D_j_proof}
    \begin{aligned}
        D_j &= (\prod_{i=1}^n C_{i,j})^{\prod_{u=1}^n e_{u,j}}\\ 
        &=\prod_{i=1 }^n g^{\left(Y_{i,j}-\sum_{v \neq i} Y_{v,j}\right) \prod_{u=1}^n e_{u,j}^3}\\ 
        &=g^{\left(-\sum_{i=1}^n (n-2)Y_{i,j}\right) \prod_{u=1}^n e_{u,j}^3}. 
    \end{aligned}
\end{equation}
In this case, we have $D_j=P_j^{n-2}$ only if 
\begin{equation}
    P_j=g^{( -\sum_{v=1}^n Y_{v,j}) \prod_{u=1}^n e_{u,j}^3} \,,   
\end{equation}
which implies that if no one bids `1' at this digit, 
then the second-highest bid is `0'. 

\end{proof}

\begin{proposition}\label{prop_2}
If $D_j \neq P_j^{n-2}$, 
then there are two or more bidders for the $j^\text{th}$ digit, 
and the $j^{th}$ digit of the binary representation of the final output price $p_{0,j}$ is `1'.
\end{proposition}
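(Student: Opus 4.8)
The plan is to argue by contraposition against Proposition~\ref{prop_1} together with a direct computation of $D_j$ in the remaining case. Observe first that Propositions~\ref{prop_1} and~\ref{prop_2} together must cover all possibilities for the $j^\text{th}$ digit: either zero bidders bid `1', exactly one bidder bids `1', or at least two bidders bid `1'. By the first-stage test (Proposition~\ref{proposition_2}) and the rule defining $d_{i,l,j}$, a fake key $F_{l,j}$ is submitted \emph{only} by a bidder $\alpha_l$ with $K_{l,j}=B_j$, which by Proposition~\ref{proposition_2} and the case analysis in its proof happens precisely when $\alpha_l$ is the sole `1'-bidder. Hence if two or more bidders bid `1', nobody's key value matches $B_j$, so every bidder uses a check key $C_{l,j}$, and we are in the analogue of Case~2 of the proof of Proposition~\ref{prop_1} but with $\mathcal{Y}_j \neq \emptyset$.

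The key computation is then to evaluate $D_j = \bigl(\prod_{i=1}^n C_{i,j}\bigr)^{\prod_{u=1}^n e_{u,j}}$ using the simplified form \ref{form2} for $C_{i,j}$, exactly as in \autoref{D_j_proof}, obtaining
\begin{equation*}
D_j = g^{\left(-\sum_{i=1}^n (n-2) Y_{i,j}\right)\prod_{u=1}^n e_{u,j}^3}.
\end{equation*}
On the other hand, from Remark~\ref{remark_3} we have $P_j = g^{\left(\sum_{\alpha_u\in\mathcal{Y}_j} Y_{u,j} - \sum_{\alpha_v\in\mathcal{N}_j} Y_{v,j}\right)\prod_{u=1}^n e_{u,j}^3}$, so
\begin{equation*}
P_j^{n-2} = g^{(n-2)\left(\sum_{\alpha_u\in\mathcal{Y}_j} Y_{u,j} - \sum_{\alpha_v\in\mathcal{N}_j} Y_{v,j}\right)\prod_{u=1}^n e_{u,j}^3}.
\end{equation*}
Comparing exponents, $D_j = P_j^{n-2}$ would force $-\sum_{i=1}^n Y_{i,j} = \sum_{\alpha_u\in\mathcal{Y}_j} Y_{u,j} - \sum_{\alpha_v\in\mathcal{N}_j} Y_{v,j}$, i.e. $2\sum_{\alpha_u\in\mathcal{Y}_j} Y_{u,j} = 0$ in $\mathbb{F}$; since $|\mathcal{Y}_j|\ge 2$ and the $Y_{u,j}$ are the (essentially random, nonzero) sums $\sum_i a_{i,u,j}$, this fails, so $D_j \neq P_j^{n-2}$. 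Conversely, whenever $D_j\neq P_j^{n-2}$, Proposition~\ref{prop_1} rules out the zero-`1'-bidder and one-`1'-bidder cases, leaving $|\mathcal{Y}_j|\ge 2$; the protocol then sets $p_{0,j}=1$, consistent with the second-highest bid having a `1' in position $j$.

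I expect the main obstacle to be the argument that $2\sum_{\alpha_u\in\mathcal{Y}_j} Y_{u,j}\neq 0$ cannot accidentally hold — strictly this is a statement that holds with overwhelming probability over the random code choices (or can be enforced by a convention), rather than with certainty, and the cleanest exposition is probably to lean on Proposition~\ref{prop_1} for the ``only if'' direction and treat the ``if'' direction (two-or-more bidders $\Rightarrow D_j\neq P_j^{n-2}$) as the genuine content, flagging the degenerate-cancellation event as negligible exactly as the surrounding security analysis treats other discrete-log collisions. I would also make explicit why ``two or more `1'-bidders'' is equivalent to ``$p_{0,j}=1$'': the second-highest bid has a `1' in digit $j$ iff, among the bids still in contention at digit $j$, at least two have a `1' there, which is what the adjustment step of \Cref{sub_Bids_Adjustment} maintains as an invariant.
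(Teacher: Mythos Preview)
Your proposal is correct and follows essentially the same approach as the paper: both argue by contraposition via Proposition~\ref{prop_1}, reducing to the all-check-keys computation of $D_j$ in \autoref{D_j_proof} and comparing it against the closed form for $P_j^{n-2}$ from Remark~\ref{remark_3}. The paper's proof is terser---it simply asserts that, in the all-check-keys case, $D_j=P_j^{n-2}$ forces $P_j$ to take the $\mathcal{Y}_j=\emptyset$ form and concludes immediately---whereas you are more explicit about the case split and, notably, you flag the degenerate-cancellation event $2\sum_{\alpha_u\in\mathcal{Y}_j} Y_{u,j}=0$ that the paper silently treats as impossible; your instinct to handle this probabilistically (as with other discrete-log collisions in the protocol) is the right reading.
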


\begin{proof} 
In the case where all bidders use their check keys, 
we can use the result from the proof of $D_j$ in \autoref{D_j_proof}.
We know that $D_j=P_j^{n-2}$ only if 
\begin{equation}
    P_j=g^{( -\sum_{v=1}^n Y_{v,j}) \prod_{u=1}^n e_{u,j}^3},   
\end{equation}
Therefore, if $D_j \neq P_j^{n-2}$, 
it implies that there are two or more bidders for the $j^{th}$ digit. 
In this case, the $j^{th}$ digit of the binary representation of the second price will be equal to `1'.

\end{proof}

\subsubsection{Bid Adjustment}\label{sub_Bids_Adjustment}

This step involves adjusting bids in order to maintain the privacy of bidders. 
There are two different situations where these adjustments may be required.

The first scenario arises when the winning bidder wishes to conceal their bid and maintain their leading position.
This can be achieved by adjusting all digits $w > j$ after the $j^{th}$ digit of bids $b_{l,l,w}$ such that
\begin{equation}\label{Bids_Adjustment_1}
\sum_{i=1}^n b_{i,l,w}=Y_{l,w} \,.
\end{equation}
This adjustment is described in \autoref{bids_adjust_1} of \autoref{2nd_price_algo}. 
It ensures that the bidder's winning price remains private and their success in future rounds $w > j$.

The second scenario is when unsuccessful bidders whose bids are lower than the second-highest bid quit the competition by changing their bids to `0'. 
In such cases, the adjustment is made by altering all digits after the $j^{th}$ digit of bids $b_{l,l,w} (w > j)$ so that the following equation is satisfied
\begin{equation}\label{Bids_Adjustment_2}
\sum_{i=1}^n b_{i,l,w}=N_{l,w} \,.
\end{equation}
This adjustment is described in \autoref{bids_adjust_2} of \autoref{2nd_price_algo}. 
By bidding a minimum price instead of quitting the auction, 
the bidder's ideal price remains undisclosed. 
Additionally, bidding a fake price instead of quitting directly protects privacy as others can determine an upper bound of their ideal price, which is less than the second-highest bid if they exit the auction.

\bigskip
\begin{algorithm}[h]
\caption{Price Determination}
\label{2nd_price_algo}
\DontPrintSemicolon
\SetAlgoLined

\For{j =1 to k}{
$[B_j, P_j] \gets \RingTransferBP$ \; 

\vspace{0.1cm}
\For{l =1 to n}{

$d_{i,l,j} \gets \Random(\mathbb{F})$ for $i \in\{1,2,... ,n-1\}$\;    
\vspace{0.1cm}
\eIf{$K_{l,j}=B_j$}{
$d_{n,l,j} \gets F_{l,j} / \prod_{i=1}^{n-1} d_{i,l,j}$ for each $j$ \label{operation_1} \;
$b_{l,l,w} \gets Y_{l,w}-\sum_{i=1}^n b_{i,l,w}$ for $\forall w > j$ \label{bids_adjust_1}
}{
$d_{n,l,j} \gets C_{l,j} / \prod_{i=1}^{n-1} d_{i,l,j}$ for each $j$ \label{operation_2} \; 
}
 
Send $d_{i,l,j}$ to bidder $\alpha_i$ for each $i \neq l$ \;
}
$D_j \gets \RingTransferD$ \;
 
\vspace{0.1cm}
 \eIf{$D_j=P_j^{n-2}$}{
$p_{0,j} \gets 0$
            }{
$p_{0,j} \gets 1$ \;
 \vspace{0.1cm}
\For{l =1 to n}{ \If{ $\sum_{i=1}^n b_{i,l,j}=N_{l,j} $ }
{$b_{l,l,w} \gets N_{l,w}-\sum_{i=1}^n b_{i,l,w}$ for $\forall w > j$ \label{bids_adjust_2}} 
 } 
}
 }
 
\end{algorithm} 
\bigskip

\subsection{Final Price Verification}\label{sub_Final_Price_Verification}

Although the bids adjustment process is designed to protect bidders' privacy, 
it also creates the possibility for fraudulent behavior. 
Therefore, 
to maintain the integrity of the auction, 
it is crucial to verify any adjusted bids that could result in an output price higher than the second-highest bid, 
so that in reality we could incorporate a sub-protocol that detects and penalizes dishonest bidders through fines or exclusion from the auction or other means, thus incentivizing them to follow the main protocol.
This verification is done by comparing the output price calculated in \Cref{sub_2nd_Price} to the committed bids in \Cref{sub_Bid_Commitment}.

Let the output price calculated in \Cref{sub_2nd_Price} be denoted by $p_0$. 
To achieve this verification, 
each bidder $\alpha_l$ follows the steps outlined in Algorithm \ref{Final_Price_Verification_algo}. 
The algorithm involves the following steps.
Compute
\begin{equation}\label{price_check_1}
    { }_1 \mathcal{P}=g^{p_0 c_{l,1}},
\end{equation}
and send it to bidder $\psi(\alpha_l)$.
Upon receipt of ${ }_r \mathcal{P}$, compute
\begin{equation}\label{price_check_2}
    { }_{r+1} \mathcal{P}={ }_r \mathcal{P}^{c_{l,r+1}}.
\end{equation}
If $r=n-1$, publish 
\begin{equation}\label{price_check_3}
\mathcal{P}_{l}={ }_{n} \mathcal{P}
\end{equation}
to the public channel.
Otherwise, send ${ }_{r+1} \mathcal{P}$ to bidder $\psi(\alpha_l)$.

If there exists $i \in \{1,2,\dots,n\}$ such that $\mathbf{P}_i \neq \mathcal{P}_i$, then $p_0$ is accepted. 
Otherwise, $p_0$ is rejected.

\bigskip
\begin{algorithm}[h]
\caption{Final Price Verification}
\label{Final_Price_Verification_algo}

\DontPrintSemicolon
    \SetAlgoLined
\vspace{0.1cm}
     \Initialization{$r = 1$}{
     Send ${ }_1 \mathcal{P} \gets g^{p_0 c_{l,1}}$ to bidder $\psi(\alpha_l)$.} 
\vspace{0.1cm}
    \Upon{ ${ }_r \mathcal{P}$}{
\eIf{$r<n-1$}{
Send ${ }_{r+1} \mathcal{P} \gets { }_r \mathcal{P}^{c_{l,r+1}}$ to bidder $\psi(\alpha_l)$\;
            }{
Publish $\mathcal{P}_{l} \gets { }_{n} \mathcal{P}$ to the public channel. \;     
            }
    }

\vspace{0.1cm}
 \eIf{$\exists~i \in \{1,2,\dots,n\} $ s.t. $\mathbf{P}_i=\mathcal{P}_i$ }
 {$p_0$ is accepted}
 {$p_0$ is rejected.}
 
\end{algorithm} 
\bigskip

\begin{proposition}\label{output_price_is_second_bids}
The output price $p_0$ is the original committed second-highest bid if and only if 
\begin{equation}\label{final_test}
\exists ~i\in \{1,2,\dots,n\}  ~\text{such that}~ ~\mathbf{P}_i=\mathcal{P}_i
\end{equation}
\end{proposition}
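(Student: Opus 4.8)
The plan is to establish both directions of the biconditional by unwinding the ring-transfer definitions of $\mathbf{P}_l$ (from Algorithm~\ref{bid_commitment_algo}) and $\mathcal{P}_l$ (from Algorithm~\ref{Final_Price_Verification_algo}), and exploiting injectivity of exponentiation in the group generated by $g$. First I would observe that, by the same recursive-substitution argument used in the proofs of Lemma~\ref{lemma_1} and Proposition~\ref{proposition_2}, the published commitments take the closed form $\mathbf{P}_l = g^{p_l \prod_{r=1}^n c_{l,r}}$ and, after the verification ring transfer, $\mathcal{P}_l = g^{p_0 \prod_{r=1}^n c_{l,r}}$, where each bidder $\alpha_l$ reuses the \emph{same} private codes $c_{l,1},\dots,c_{l,n}$ in both algorithms (these codes were fixed once in \Cref{sub_Codes_Creation}). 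Writing $\gamma_l := \prod_{r=1}^n c_{l,r}$, which is a nonzero field element since each $c_{l,r}$ is chosen nonzero, we have $\mathbf{P}_l = g^{p_l \gamma_l}$ and $\mathcal{P}_l = g^{p_0 \gamma_l}$ for every $l$.

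For the ``only if'' direction, suppose $p_0$ equals the second-highest committed bid; then there is some bidder $\alpha_i$ whose original bid satisfies $p_i = p_0$ (namely, the bidder who actually submitted the second-highest bid, or the winner in the degenerate case of a tie — either way such an index exists). For that index, $\mathbf{P}_i = g^{p_i \gamma_i} = g^{p_0 \gamma_i} = \mathcal{P}_i$, which is exactly \eqref{final_test}. For the ``if'' direction, suppose $\mathbf{P}_i = \mathcal{P}_i$ for some $i$; then $g^{p_i \gamma_i} = g^{p_0 \gamma_i}$, so $(p_i - p_0)\gamma_i \equiv 0$ in the exponent group. Since $\gamma_i$ is invertible, this forces $p_i \equiv p_0$, i.e. $p_0$ coincides with a genuinely committed bid $p_i$. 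It then remains to argue that, given the guarantees already proved for the price-determination phase (Propositions~\ref{prop_1} and~\ref{prop_2}), the only committed bid the output price can possibly equal is the second-highest one: the digit-by-digit construction never outputs a value exceeding the second-highest bid, and any honest execution outputs exactly it, so matching some $p_i$ pins $p_0$ down to that value.

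The main obstacle I expect is twofold and is more conceptual than computational. First, one must be precise about the order of reasoning in the exponent: the identity $g^{a} = g^{b}$ only yields $a = b$ modulo the order of $g$, so I would either assume (as is standard and implicit in the paper's use of discrete logarithms) that bids and the relevant products are small relative to that order, or phrase the conclusion as an equality in the appropriate quotient; I would flag this explicitly rather than sweep it under the rug. Second, the ``if'' direction is only as strong as the claim that a spurious match cannot occur — i.e. that a dishonest adjusted output $p_0$ strictly larger than the second-highest bid fails \eqref{final_test}; this is really where the bid-commitment step earns its keep, and I would make the argument turn on the fact that such a $p_0$ is, by Propositions~\ref{prop_1}–\ref{prop_2} applied to the committed (not adjusted) bids, different from every $p_i$, hence $\mathbf{P}_i \neq \mathcal{P}_i$ for all $i$ by injectivity. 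Handling ties cleanly — where two bidders share the second-highest value — is a minor subtlety worth one sentence, since it only makes the matching index non-unique, not absent.
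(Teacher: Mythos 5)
Your algebraic core matches the paper's proof: both unwind the two ring transfers into closed forms, observe that $\mathbf{P}_i$ and $\mathcal{P}_i$ carry the \emph{same} nonzero product of $c$-codes in the exponent, and conclude by injectivity that a match at index $i$ is equivalent to $p_0$ equalling a committed bid. (Two small inaccuracies there: the exponent product is $\prod_{1\leq i\leq n,\, j=i-l} c_{i,j}$ --- one code from \emph{each} bidder, not $\prod_{r=1}^n c_{l,r}$ from bidder $l$ alone --- and the commitment published by $\alpha_l$ corresponds to $p_{l+1}$, not $p_l$; neither affects the logic. Your explicit caveat about equality holding only modulo the order of $g$ is a point the paper silently assumes.)

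The genuine gap is in your last step, and it runs in the wrong direction. You assert that ``the digit-by-digit construction never outputs a value exceeding the second-highest bid,'' but the opposite is what is true and what the paper uses: a dishonest bidder who fails to zero out after losing can only turn additional digits from `0' to `1', so false adjustments can only make $p_0$ \emph{equal to or higher than} the committed second-highest bid. This matters because the problematic case your argument must exclude is precisely $p_0$ landing on a committed bid that is \emph{above} the second-highest one (most obviously the winner's own bid $p_{(1)}$, for which $\mathbf{P}_i=\mathcal{P}_i$ would hold even though $p_0$ is not the second price). Your claimed monotonicity would rule this out for free, but since it is false, the case remains open in your write-up. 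The paper closes it with an extra, non-algebraic ingredient you do not supply: if $p_0$ exceeds the true second-highest bid, then no bidder can pass the winner-determination test (no one is the sole `1' above that price), and the protocol rejects the outcome on those grounds. Without either that appeal or a corrected monotonicity argument, the ``if'' direction of the proposition is not established.
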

\begin{proof}

We can express the committed bids in Algorithm \ref{bid_commitment_algo} using \begin{equation}\label{price_commitment_4}
    \mathbf{P}_{l}=g^{p_{l+1} \prod_{1\leq i\leq n, j=i-l} c_{i,j}} \,.
\end{equation}   
which is obtained by recursively substituting \autoref{price_commitment_2} into \autoref{price_commitment_3} and applying the initial condition of \autoref{price_commitment_1}.

To verify the legitimacy of the final output price, 
we substitute \autoref{price_check_2} into \autoref{price_check_3} iteratively and use the initial condition \autoref{price_check_1} to represent the published $\mathcal{P}_{l}$ as given by 
\begin{equation}\label{eqn_P_l}
    \mathcal{P}_{l}=g^{p_0 \prod_{1\leq i\leq n, j=i-l} c_{i,j}},
\end{equation}

Comparing \autoref{eqn_P_l} to \autoref{price_commitment_4}, 
we can observe that if there exists $i \in \{1,2,\dots,n\}$ such that $\mathbf{P}_i=\mathcal{P}_i$, then this condition implies that $p_0=p_{i+1}$. 
Hence, $p_0$ must be one of the original committed bids.

Furthermore, 
any false bid adjustments can only result in the output price $p_0$ being equal to or higher than the original committed second-highest bid. 
This is because the mechanism for calculating the second-highest price in \Cref{sub_Price_Determination} can only maintain or increase the output price.
If the output price is equal to the original committed second-highest bid, 
this case can be omitted since this false bid adjustment does not affect the output price $p_0$. 
However, if the output price is higher than the original committed second-highest bid,
then no one can claim to be the winner of the auction in this situation, 
and the system automatically rejects this price.

Therefore, the condition \autoref{final_test} is enough to guarantee that $p_0$ is indeed the original committed second price.

\end{proof}

\begin{remark}[A Tie-Breaking Mechanism]\label{remark_4}

If there are multiple indices $i$ that satisfy the condition in \autoref{final_test},  
then multiple bidders have bid the output price (i.e., the second-highest bid). 
The possibility of ties in the auction outcome thus requires a tie-breaking mechanism:

\begin{enumerate}
\item
If any bidder claims to be the winner, 
they will be awarded the item and pay the output price. 
\item
If no bidder claims to be the winner, 
it means that there are multiple bidders who bid the highest bid, 
and the output price is also the highest bid. 
In this situation, 
 the system can randomly select one of the bidders from the set $\{\alpha_{i+1}\mid \mathbf{P}_i=\mathcal{P}_i\}$ to be the ultimate winner.
This random selection does not affect the final payment price, 
which remains the same as the highest bid.
\end{enumerate}
It is worth noting that such a tie-breaking mechanism does not affect the privacy of bidders' information.
\end{remark}


\subsection{Winner Determination}\label{sub_winner_determination}

In the final step of the auction, 
the winner comes forward to the seller with a proof to show that the win was real. 
On this basis, everyone is able to check if there is only one winner or if there is a tie. 
The highest bidder is hence determined and the item is awarded accordingly.

Let $j'$ be the largest index such that $p_{0, j'} = 0$. 
 The winner $\alpha_i$ sends the seller $K_{u,i,j'}$ for $\forall u$ and can be verified by checking 
\begin{equation}\label{winner_determination_1}
 \prod_{u=1}^n K_{u,i,j'}=B_{j'}
 \end{equation}


By demonstrating \autoref{winner_determination_1}, 
the winner $\alpha_i$ can claim that they are the only bidder who placed a `1' at the digit ${j'}$, 
and no one else can dispute this claim.
Moreover, the first bids adjustment rule guarantees that the other digits in $\alpha_i$'s bid remain undisclosed to both other bidders and the seller.
Therefore, by submitting $K_{u,i,j'}$, the winner $\alpha_i$ is not revealing any information regarding the other parts of their bid.
Rather, they are simply presenting their willingness to pay a price that is at least one bid higher than the second-highest bid.


\section{Analysis of Protocol}\label{section_4}

In this section, we will discuss the security and privacy analysis of our proposed protocol and demonstrate how it is resilient to various types of attacks, including collusion.

\begin{theorem}[Correctness]
The proposed auction protocol guarantees that the second-highest bid is the output of the auction, 
and the winner can verify success to the seller. 
Additionally, the protocol is secure and any dishonest behavior that affects the output price will be rejected.
\end{theorem}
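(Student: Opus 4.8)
The plan is to decompose the theorem into three assertions and verify each using the structural lemmas and propositions already established. First, I would argue \emph{correctness of the output price}: assuming all bidders follow the protocol, I claim the binary string $p_{0,1}p_{0,2}\dots p_{0,k}$ produced by Algorithm~\ref{2nd_price_algo} equals the second-highest committed bid. The argument proceeds by induction on the digit index $j$, with the invariant that after processing digit $j$, exactly those bidders whose bid (restricted to the first $j$ digits) is still tied for the lead among the top two remain ``active'' with genuine bids, while all others have been forced to $0$ (losers, via \eqref{Bids_Adjustment_2}) or to a simulated $1$-stream (the sole leader, via \eqref{Bids_Adjustment_1}). The inductive step splits on the number of bidders placing a genuine `1' at digit $j$: if two or more, Propositions~\ref{prop_1} and~\ref{prop_2} give $p_{0,j}=1$ and no adjustment removes a top-two contender; if exactly one, Proposition~\ref{proposition_2} lets that bidder detect solehood and deploy the fake key $F_{l,j}$, so by the Case~1 computation in Proposition~\ref{prop_1} we get $D_j=P_j^{n-2}$ and $p_{0,j}=0$, after which the second-highest among the remaining digits is unaffected; if zero, Case~2 of Proposition~\ref{prop_1} forces $p_{0,j}=0$ correctly. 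Summing the contributions digit by digit yields exactly the second-highest bid.

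Second, I would establish \emph{winner verifiability}: let $j'$ be the largest index with $p_{0,j'}=0$. By the invariant above, at digit $j'$ the actual highest bidder $\alpha_i$ is the sole placer of a genuine `1' (this is precisely why that digit of the second price is $0$ while the winner's real bid exceeds it), so by Proposition~\ref{proposition_2} we have $\prod_{u=1}^n K_{u,i,j'}=B_{j'}$, which is \eqref{winner_determination_1}. Conversely, I would note that because the $K_{u,l,j}$ were hash-committed to the seller during key generation (Algorithm~\ref{keys_generation_algo}) and the discrete-logarithm knowledge is attested by zero-knowledge proofs, no non-winner can forge values satisfying \eqref{winner_determination_1}; uniqueness of the sole placer gives uniqueness of the claimant, while ties are handled by Remark~\ref{remark_4}. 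I would also remark that the first bid-adjustment rule keeps every other digit of $\alpha_i$'s bid hidden, so the proof reveals nothing beyond ``I bid more than $p_0$''.

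Third, for \emph{rejection of dishonest manipulation}, I would invoke Proposition~\ref{output_price_is_second_bids} directly: the price-determination mechanism can only preserve or inflate the output relative to the true second-highest committed bid, so any deviation that changes $p_0$ makes it strictly larger than every genuine bid except possibly the winner's own; then no index $i$ satisfies $\mathbf{P}_i=\mathcal{P}_i$ in the final verification (Algorithm~\ref{Final_Price_Verification_algo}), because the committed $\mathbf{P}_i$ encode the original bids via \eqref{price_commitment_4} and the recomputed $\mathcal{P}_i$ encode $p_0$ via \eqref{eqn_P_l}, and these agree for some $i$ exactly when $p_0$ is an original bid. Hence the protocol rejects. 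The remaining loose ends — that the ring-transfer commitments bind the original bids, and that the zero-knowledge proofs prevent a party from using inconsistent private codes across the key-generation, bid-sharing, and price-determination phases — I would dispatch by appealing to the binding property of the hash commitment and the soundness of Schnorr proofs, as set up in Section~\ref{sub_Codes_Creation}.

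The main obstacle I anticipate is making the inductive invariant in the first part fully precise and checking that the bid-adjustment operations \eqref{Bids_Adjustment_1}–\eqref{Bids_Adjustment_2} interact correctly with the detection tests at \emph{later} digits: one must confirm that a loser who has zeroed out never spuriously triggers the $K_{l,w}=B_w$ test, and that the simulated leader's forced `1'-stream never causes a later digit to be miscomputed as `1' when it should be `0' (i.e. that the leader's simulated bids are correctly absorbed into the $Y$/$N$ bookkeeping). Verifying this amounts to re-running the Proposition~\ref{proposition_2} and Proposition~\ref{prop_1} computations under the adjusted $b_{i,l,w}$ values and checking the algebra still closes; it is routine but is where the real content lies.
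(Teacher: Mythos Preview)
Your proposal is correct and follows essentially the same approach as the paper: the same three-part decomposition (output correctness via Propositions~\ref{prop_1}--\ref{prop_2}, winner verifiability via Proposition~\ref{proposition_2} and the hash commitments, rejection of dishonesty via Proposition~\ref{output_price_is_second_bids}) with the same supporting results invoked in the same roles. Your version is in fact more explicit than the paper's, which simply asserts that ``applying Propositions~\ref{prop_1} and~\ref{prop_2} to all digits results in the full binary representation'' without stating the inductive invariant or flagging the interaction of the bid-adjustment rules with later-digit tests that you correctly identify as the real content.
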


\begin{proof}

Proposition \autoref{output_price_is_second_bids} guarantees the detection and rejection of a false output price $p_0$ that is not the second-highest bid.
If the output price $p_0$ is the second-highest bid, 
then applying Propositions \autoref{prop_1} and \autoref{prop_2} to all digits results in the full binary representation of the second-highest bid, which is the output price $p_0$.
Since these values are published, 
the second-highest bid becomes accessible to all bidders in the auction as the output price.

If there is a unique winner, 
 the bids adjustment step guarantees that the winner is the only bidder who bid `1' at a specific digit $j$. 
Proposition \autoref{proposition_2} enables winners to determine if they won at a specific digit $j$ by checking their key values. 
Moreover, 
the seller can verify the winner's success since the hash value of all the $K_{u,l,j}$ is sent to the seller at the end of \Cref{sub_Keys_Generations}.

If there is a tie, 
where multiple bidders bid the highest price and the second-highest bid equals the highest bid, 
Remark \autoref{remark_4} shows that the winners can be determined. 
This does not affect the final payment price since it remains the same as the output price. 

By combining the results of the propositions and lemmas presented above, 
we have shown that the auction protocol is secure and therefore achieves the desired outcome of producing the second-highest bid as the output price while allowing the winner to verify their success to the seller. 

\end{proof}

 \begin{theorem}[Privacy-Preserving]
The proposed auction protocol guarantees full privacy, 
as it ensures that no coalition of up to $n-1$ bidders can obtain any information about the other bidders, except for their own bid and the output price (i.e., the second-highest bid).
\end{theorem}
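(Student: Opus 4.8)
The plan is to establish privacy by a simulation-style argument: for an arbitrary coalition $\mathcal{C}\subseteq I$ with $|\mathcal{C}|\le n-1$, I will show that the entire \emph{view} of $\mathcal{C}$ — every code and share any member receives, every group element passed to a member during a ring transfer, every value posted on the public channel, every hash sent to the seller, every zero-knowledge proof, and the winner's final reveal — can be reproduced from the coalition members' own bids together with the public output price $p_0$ and the published auction result (the set of committed bids equal to $p_0$ and the identity of the winner). Since enlarging a coalition only adds data to its view, it suffices to treat $|\mathcal{C}|=n-1$; write $\alpha_m$ for the unique honest bidder, whose private codes $a_{m,m,j}$, $c_{m,1},\dots,c_{m,n}$ and $e_{m,1},\dots,e_{m,k}$ never leave $\alpha_m$. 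The core claim is that these private codes act as fresh one-time pads that randomize everything the coalition sees that is not already dictated by $p_0$.

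I would carry this out component by component, in protocol order. For \emph{bid sharing} (Algorithm~\ref{bids_sharing_algo}): $\alpha_m$ discloses $a_{m,i,j}$ only for $i\ne m$ (chosen uniformly, independent of $p_m$) and the shares $b_{i,m,j}$, $d_{i,m,j}$ only for $i\ne m$, retaining $b_{m,m,j}$ and $d_{m,m,j}$, so the coalition holds exactly $n-1$ of the $n$ additive shares of $Y_{m,j}$ (resp.\ $F_{m,j}$ or $C_{m,j}$) — information-theoretically independent of the masked value, and moreover $Y_{m,j}=a_{m,m,j}+\sum_{i\ne m}a_{i,m,j}$ is itself uniform to the coalition because $a_{m,m,j}$ is private. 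For the \emph{ring transfers} (Algorithms~\ref{keys_generation_algo}, \ref{bid_commitment_algo}, \ref{ring_transfer_BP_algo}, \ref{ring_transfer_D_algo}, \ref{Final_Price_Verification_algo}): every element a coalition member receives there, and every final published $K_{u,l,j},C_{v,l,j},F_{u,l,j},B_{i,j},P_{i,j},D_{i,j},\mathbf{P}_l,\mathcal{P}_l$, has the shape $g^{x\cdot \mu}$ where $\mu$ is a product of codes that includes one of $\alpha_m$'s private codes ($e_{m,j}$ or $c_{m,1}$); using that these codes are uniform and nonzero so $\mu$ is a uniform unknown unit, and appealing to discrete-log hardness in $\langle g\rangle$, each such element is (computationally) uniform from the coalition's standpoint, \emph{except} for the algebraic relations the protocol deliberately exposes. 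The hashes of $K_{u,l,j}$ sent to the seller are simulated by hashing uniform group elements (hiding in the random-oracle model, the code space being too large for brute force, as already argued in \Cref{sub_Bid_Commitment}), and each Schnorr-type proof is replaced by its simulator.

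The substantive step is to pin down \emph{exactly} which relations are exposed and to check that nothing more leaks. Invoking Lemma~\ref{lemma_1}, Remark~\ref{remark_3}, and Propositions~\ref{proposition_2}, \ref{prop_1}, \ref{prop_2}, the only algebraic facts the coalition can read off $(B_j,P_j,D_j)$ together with its own $K_{l,j},C_{l,j},F_{l,j}$ are: whether $K_{l,j}=B_j$ for a coalition member $\alpha_l$, which depends only on that member's own $j$-th digit and the number of `1'-bidders at digit $j$; and whether $D_j=P_j^{\,n-2}$, which by Propositions~\ref{prop_1}--\ref{prop_2} is precisely the public bit $p_{0,j}$. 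One must verify that no further cross-term (a ratio such as $B_j/K_{l,j}$, or a relation linking two distinct digits) isolates $\alpha_m$'s digit $p_{m,j}$; this holds because the masks $e_{m,j}$ are independent across $j$ and, within a digit, the exponent carrying $p_{m,j}$ is always multiplied by the unit $\prod_u e_{u,j}$ that the coalition cannot remove. For bid commitment versus verification, $\mathbf{P}_i=\mathcal{P}_i\iff p_{i+1}=p_0$ (the shared mask $\prod c$ being a unit), so the equality pattern reveals only which bidders bid the output price — exactly the information the tie-breaking rule of Remark~\ref{remark_4} already relies on. Finally, the winner's reveal of $K_{u,i,j'}$ merely re-establishes $\prod_u K_{u,i,j'}=B_{j'}$, i.e.\ that the winner was the sole `1'-bidder at the top zero-digit $j'$ of $p_0$, which is equivalent to ``$p_{\text{winner}}\ge p_0+1$'' and is implied by the winner's identity and $p_0$. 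Assembling these, the simulator samples the group elements uniformly subject only to the exposed relations and reconstructs the coalition's view with the correct distribution.

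\textbf{Main obstacle.} I expect the hard part to be the claim inside the previous paragraph: rigorously ruling out \emph{unintended} algebraic relations among the many group elements the coalition collects across all $k$ digits and all five ring transfers — in effect showing that the map from $\alpha_m$'s honest randomness to the coalition's view, quotiented by the deliberately published bits $(p_{0,j})_j$ and the result, has uniform fibres. This requires careful bookkeeping of exponents in the exponent group of $\langle g\rangle$, together with the genericity/nonzero-ness hypotheses on the codes; a secondary, more routine obstacle is fixing the exact computational assumption (discrete log, plus random oracle for the seller's hashes and honest-verifier zero knowledge for the Schnorr proofs) under which ``looks uniform'' is upgraded to ``is simulatable.''
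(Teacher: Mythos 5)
Your proposal takes a genuinely different route from the paper. You set up the standard SMPC simulation paradigm: fix the worst-case coalition of $n-1$ bidders, identify the honest party's retained randomness ($a_{m,m,j}$, $b_{m,m,j}$, $c_{m,\cdot}$, $e_{m,\cdot}$), argue that every transmitted or published group element is masked by an exponent containing one of these private codes, and then claim the whole view is samplable given only the deliberately exposed relations ($K_{l,j}=B_j$ tests for coalition members, $D_j$ versus $P_j^{n-2}$, the $\mathbf{P}_i=\mathcal{P}_i$ pattern, and the winner's reveal). The paper instead proves privacy by \emph{attack enumeration}: it reduces to the case where the honest party is the winner (losers being covered by the bid-adjustment rule), and then walks through a short list of concrete strategies a coalition controlled by $\alpha_2$ might try at each protocol stage (decrypting ${}_1\mathbf{P}$, testing $b_{1,1,j}+b_{2,1,j}=a_{1,1,j}+a_{2,1,j}$, recognizing $B_j$ or $P_j$ via $K_{2,j}$, distinguishing $C_{1,j}$ from $F_{1,j}$, detecting bid adjustments), showing each fails because a specific private code of $\alpha_1$ is missing. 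Your approach is the more principled one: if completed, it would rule out \emph{all} coalition strategies rather than a hand-picked list, and it correctly isolates where the real difficulty sits --- proving that no unintended algebraic relation among the $O(nk)$ published exponentials leaks anything beyond $(p_{0,j})_j$ and the tie pattern, under an explicit hardness assumption. That completeness step is precisely what the paper's enumeration does not establish either (it silently assumes the listed attacks are exhaustive), so your honestly flagged obstacle is not a defect relative to the paper but the shared gap made visible; your reduction of the leakage of the $K_{l,j}=B_j$ test to ``the number of `1'-bidders at digit $j$'' is in fact a sharper accounting than the paper gives, since whether that count is implied by $p_0$ and the winner's identity still needs the case analysis the paper performs via Propositions~\ref{prop_1} and~\ref{prop_2} and the adjustment rules. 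What the paper's route buys is brevity and concreteness; what yours buys is a well-defined target (a simulator) against which exhaustiveness can at least in principle be verified.
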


\begin{proof}
The bids adjustment rule in \autoref{Bids_Adjustment_2} guarantees privacy for bidders who bid less than the second-highest price by allowing them to bid `0' instead of quitting the auction and ensuring their bids and ranks remain undisclosed. 
Our only concern is the privacy of the winning bidder, denoted by $\alpha_1$.
Without loss of generality, 
let us assume that a coalition of up to $n-1$ bidders, controlled by $\alpha_2$, 
may attempt to obtain information about the bids or ranks of bidder $\alpha_1$.

To prove full privacy, 
we examine all potential methods that coalition $\alpha_2$ might employ to obtain information about the bids or ranks of $\alpha_1$.
For each method, we show that it is impossible for coalition $\alpha_2$ to obtain any information beyond the output price and its own bid.
\begin{enumerate}

\item
In \Cref{sub_Bid_Commitment},
 we consider a scenario where $\alpha_2$ attempts to obtain information about bidder $\alpha_1$'s bid, denoted by $p_1$. 
 One approach for $\alpha_2$ is to directly decrypt bidder $\alpha_1$'s bid from ${ }_1 \mathbf{p}$ as presented in \autoref{price_commitment_1} in the ring transfer of the prices-commitment step. 
 However, this approach is impossible because $c_{1,1}$, which is necessary for decryption, is never shared and remains unknown to $\alpha_2$.
 
\item
In \Cref{sub_Bids_Sharing},  
$\alpha_2$ may attempt to obtain information about the $j^{th}$ digit of $\alpha_1$'s bid during the bid-sharing step. 
One possible approach for $\alpha_2$ is to use \autoref{Y_form} and \autoref{def_share_bids} to check if $b_{1,1,j}+b_{2,1,j}=a_{1,1,j}+a_{2,1,j}$. 
However, this method is impossible because $b_{1,1,j}$ and $a_{1,1,j}$ are unknown to $\alpha_2$ and cannot be extracted from ${}_1B_{1,j}$, ${}_1K_{1,i,j}$, ${}_1C_{1,i,j}$ and ${}_1F_{1,i,j}$ due to the encryption using $e_{1,j}$.

\item 
Since we assumed at the beginning of this proof that $\alpha_1$ is the winner, 
we have already eliminated the case where $\alpha_2$ bids `1' and $\alpha_1$ bids `0'. 
Thus, if $\alpha_2$ bids `0' in the $j^{th}$ digit, there are two possible bids for $\alpha_1$: `0' or `1'. 
However, if $\alpha_2$ bids `1' in the $j^{th}$ digit, it is certain that $\alpha_1$ bids `1'.
In \Cref{sub_Checking_for_a_Sole_Participant},
our goal is to demonstrate that $\alpha_2$ cannot determine whether $\alpha_1$ bid `1' or `0' in the $j^{th}$ digit when their bid is `0' in following cases 

\begin{enumerate}
\item 
The first possibility for $\alpha_2$ to obtain information about $\alpha_1$'s bid is through the common value $B_j$ and attempting to find out more about $\alpha_1$'s key values.
Proposition \autoref{proposition_2} provides a test for bidder $\alpha_1$ to determine if they are the only bidder who bids `1' in that round.
If $B_j=K_{1,j}$ after the round transfer of $B_{i,j}$ in the price-determination step,
this indicates that $\alpha_1$ is the only bidder who bid `1'.
However, $\alpha_2$ cannot check this condition
because the encryption of $K_{1,j}$ in \autoref{K_formula_proof_1} using $e_{1,j}$ and $K_{2,1,j}$ are unknown to $\alpha_2$.

\item 
The second possibility for $\alpha_2$ to obtain information about $\alpha_1$'s bid is through equations \autoref{eqn_B_j_7} and \autoref{eqn_B_j_8}.
In \autoref{eqn_B_j_7}, 
$\alpha_2$ can determine if all bidders bid `0' for digit $j$ by testing if $B_j = g^{-(Y_{1,j}+Y_{2,j})e_{1,j}e_{2,j}}$ after the round transfer of $B_{i,j}$ in the price-determination step.
However, 
since $Y_{1,j}$ and $e_{1,j}$ are unknown to $\alpha_2$, 
the only way for them to recognize $g^{-(Y_{1,j}+Y_{2,j})e_{1,j}e_{2,j}}$ is to use their key $K_{2,j} = g^{(-Y_{1,j}+Y_{2,j})e_{1,j}e_{2,j}}$ from \autoref{K_formula_proof_1}.
This is not possible because $g^{2Y_{2,j}e_{1,j}e_{2,j}}$ cannot be extracted from the values $\alpha_2$ has.
While $\alpha_2$ can obtain $g^{2Y_{2,j}e_{1,j}^2e_{2,j}^2}$ from $C_{2,j}$ and $F_{2,j}$ using \autoref{form2}, this information is not helpful due to the existence of $e_{1,j}$, which remains unknown to $\alpha_2$.

\item 
Another possibility for $\alpha_2$ to obtain information about $\alpha_1$'s bid is by analyzing the value $P_j$ using \autoref{form1P} and \autoref{form2P}.
In \autoref{form1P}, 
$\alpha_2$ can check if $P_j=g^{-( Y_{1,j}+Y_{2,j}) e_{1,j}^3 e_{2,j}^3}$ after the round transfer of $P_{i,j}$ in the price-determination step. 
However, 
this verification process is more complex than checking $B_j$, as $\alpha_2$ does not have any information about $g^{e_{1,j}^3}$. 
\end{enumerate}

\item 
In \Cref{sub_2nd_Price}, 
$\alpha_2$ may try to determine whether $\alpha_1$ uses check keys $C_{1,j}$ or fake keys $F_{1,j}$ in the price-determination step.
However, this is impossible since the values $C_{2,1,j}$ in \autoref{C_formula} or $F_{2,1,j}$ in \autoref{F_formula} are unknown to $\alpha_2$. 

\item 
In \Cref{sub_Bids_Adjustment},
$\alpha_2$ may try to determine whether bidder $\alpha_1$ changes their $b_{1,1,j}$ using 
\autoref{Bids_Adjustment_1} or \autoref{Bids_Adjustment_2} for each digit. 
However, this is impossible because $\alpha_1$ has never revealed or committed its $b_{1,1,j}$ to $\alpha_2$. 
In fact, $\alpha_1$ has never published its $b_{1,1,j}$ to $\alpha_2$, so $\alpha_2$ has no knowledge of $\alpha_1$ 's bids at any step of the auction protocol.
    
\end{enumerate}

Based on the analysis of all possible scenarios throughout every step of our auction protocol, 
we have shown that for each potential method that a coalition of up to $n-1$ bidders might employ, 
it is impossible to obtain any information beyond the output price and their own bid. 
Therefore, we have proven that our proposed auction protocol guarantees full privacy.

\end{proof}

\section{Numerical Test}\label{section_5}

In this section, 
we evaluate the effectiveness of the proposed auction protocol by examining its performance based on complexity analysis and numerical simulations.

We first analyzed each step in detail and assessed the computational complexity and communication overhead of the algorithms involved. The protocol, discussed in detail in \Cref{section_3},
comprises several steps with corresponding algorithms: 
key generation (Algorithm \ref{keys_generation_algo}), 
bid commitment (Algorithm \ref{bid_commitment_algo}), 
bid sharing (handled by Algorithm \ref{bids_sharing_algo}), 
ring transfer functions to generate $B_j$, $P_j$, and $D_j$ values (Algorithms \ref{ring_transfer_BP_algo} and \ref{ring_transfer_D_algo}), 
price determination (Algorithm \ref{2nd_price_algo}), 
and final price verification (Algorithm \ref{Final_Price_Verification_algo}).
We measured computational complexity in terms of the number of operations performed by a single bidder and communication overhead in terms of the data size transmitted, which is summarized for each step in Table \ref{table:complexity}.

The total computational complexity of our protocol is estimated to be $O(n^2 k)$, with the key generation step (Algorithm \ref{keys_generation_algo}) being the most computationally expensive operation. 
However, participants who enter the auction again in the future can safely reuse the previously generated keys, 
leading to a reduced computational complexity of $O(n k)$.
Regarding the communication overhead, 
each bidder only communicates with their immediate neighbors in the ring, 
resulting in $O(n k)$ total messages sent. 
Therefore, our protocol is communication-efficient as well.

Our proposed Vickrey auction protocol achieves a computational complexity of $O(n k)$ for the main auction, 
matching the best-known computational complexity record for sealed-bid first-price auction schemes, 
as demonstrated in the study by Bag et al. \cite{bag2019seal}. 
Furthermore, our protocol establishes a new standard for Vickrey auction schemes by achieving the same level of computational efficiency for the main auction while maintaining full privacy. 
A detailed comparison of computational complexity is illustrated in \autoref{table:comparison}.

\begin{table*}[ht]
\center
\resizebox* {!} {3.6cm}{
\centering
\begin{tabular}{|cll|c|c|c|}
\hline
\multicolumn{3}{|c|}{\textbf{Protocol Steps [with Associated Algorithms]}}              & \textbf{Computation} & \textbf{Communication} \\ \hline
\multicolumn{1}{|l|}{{Preparation}}                  & \multicolumn{2}{l|}{[\ref{keys_generation_algo}] Key Generation } & $O(n^2 k)$ & $O(n k)$ \\ \hline
\multicolumn{1}{|l|}{\multirow{6}{*}{{Main Auction}}} & \multicolumn{2}{l|}{[\ref{bid_commitment_algo}] Bid Commitment}                   & $O(n)$                                & $O(n)$                                  \\ \cline{2-5} 
\multicolumn{1}{|c|}{}                   & \multicolumn{1}{l|}{}                                & [\ref{bids_sharing_algo}] Bid Sharing                  & $O(nk)$                                & $O(nk)$                                \\ \cline{3-5} 
\multicolumn{1}{|c|}{}                   & \multicolumn{1}{l|}{[\ref{2nd_price_algo}] Price}    &  [\ref{ring_transfer_BP_algo}] Generating $B_j$ and $P_j$   & $O(nk)$                                & $O(n k)$                                \\ \cline{3-5} 
\multicolumn{1}{|c|}{}                   & \multicolumn{1}{l|}{Determination}                   & [\ref{ring_transfer_D_algo}] Generating $D_j$             & $O(nk)$                                & $O(n k)$                                \\ \cline{3-5} 
\multicolumn{1}{|c|}{}                   & \multicolumn{1}{l|}{}                                & Bid Adjustment                & $O(k)$                                & $0$                                \\ \cline{2-5} 
\multicolumn{1}{|c|}{}                   & \multicolumn{2}{l|}{[\ref{Final_Price_Verification_algo}] Final Price Verification} & $O(n)$                                & $O(n)$                                  \\ \hline
\multicolumn{3}{|l|}{{\bf Total Complexity of Main Auction (excluding Preparation)}}    & $O(n k)$ & $O(nk)$ \\ \hline
\end{tabular}}
\caption{Computational complexity and communication overhead of proposed auction protocol algorithms, measured by the number of operations per bidder and data size transmitted, respectively.}
\label{table:complexity}
\end{table*}

\begin{figure}[ht]
\centering
\subfloat[]
{\includegraphics[width=0.49\textwidth, trim=2cm 10cm 2cm 10cm,clip]{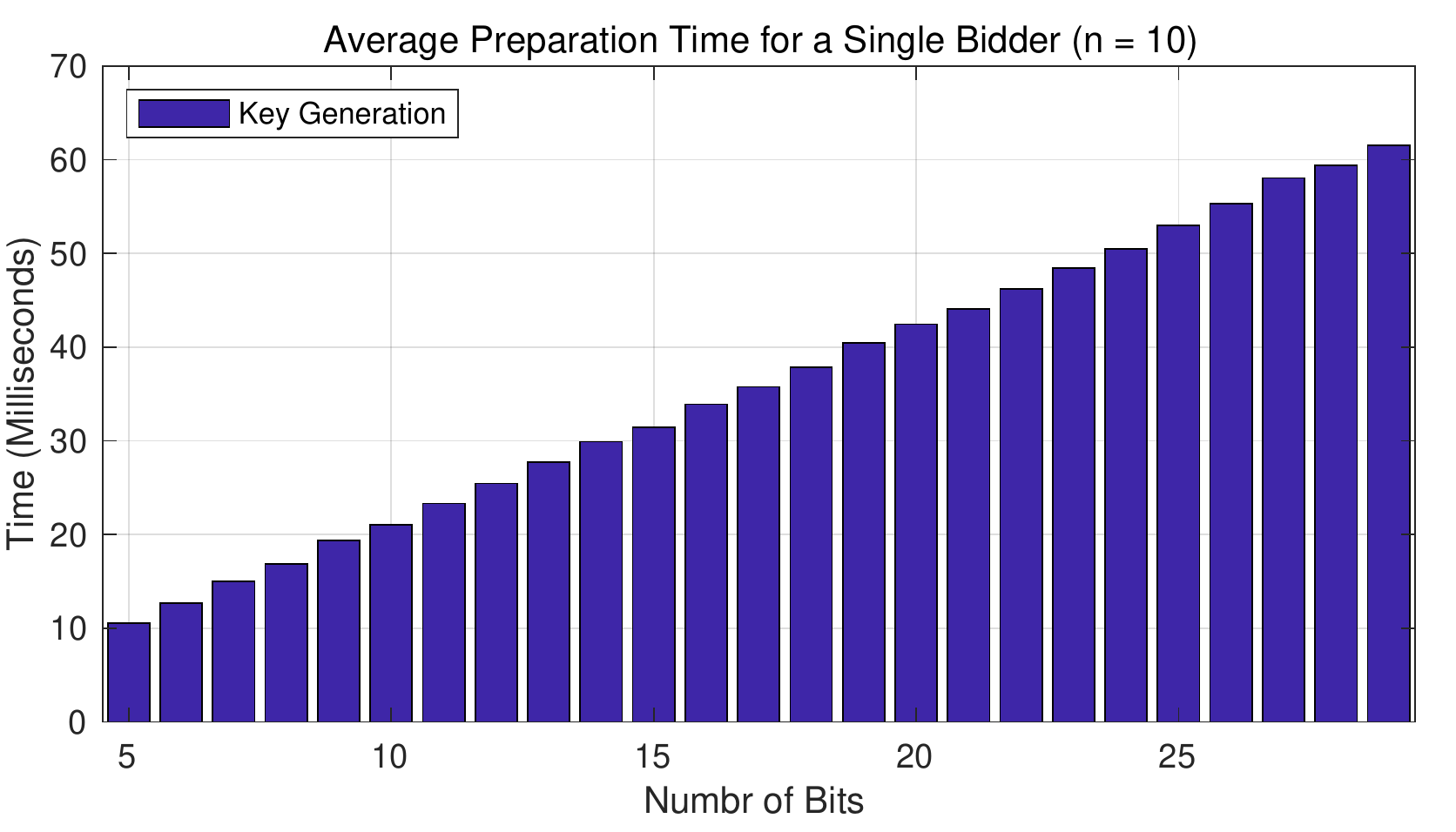}}
\label{figure_result_a}
\hfil
\subfloat[]
{\includegraphics[width=0.49\textwidth, trim=1.8cm 10cm 2cm 10cm,clip]{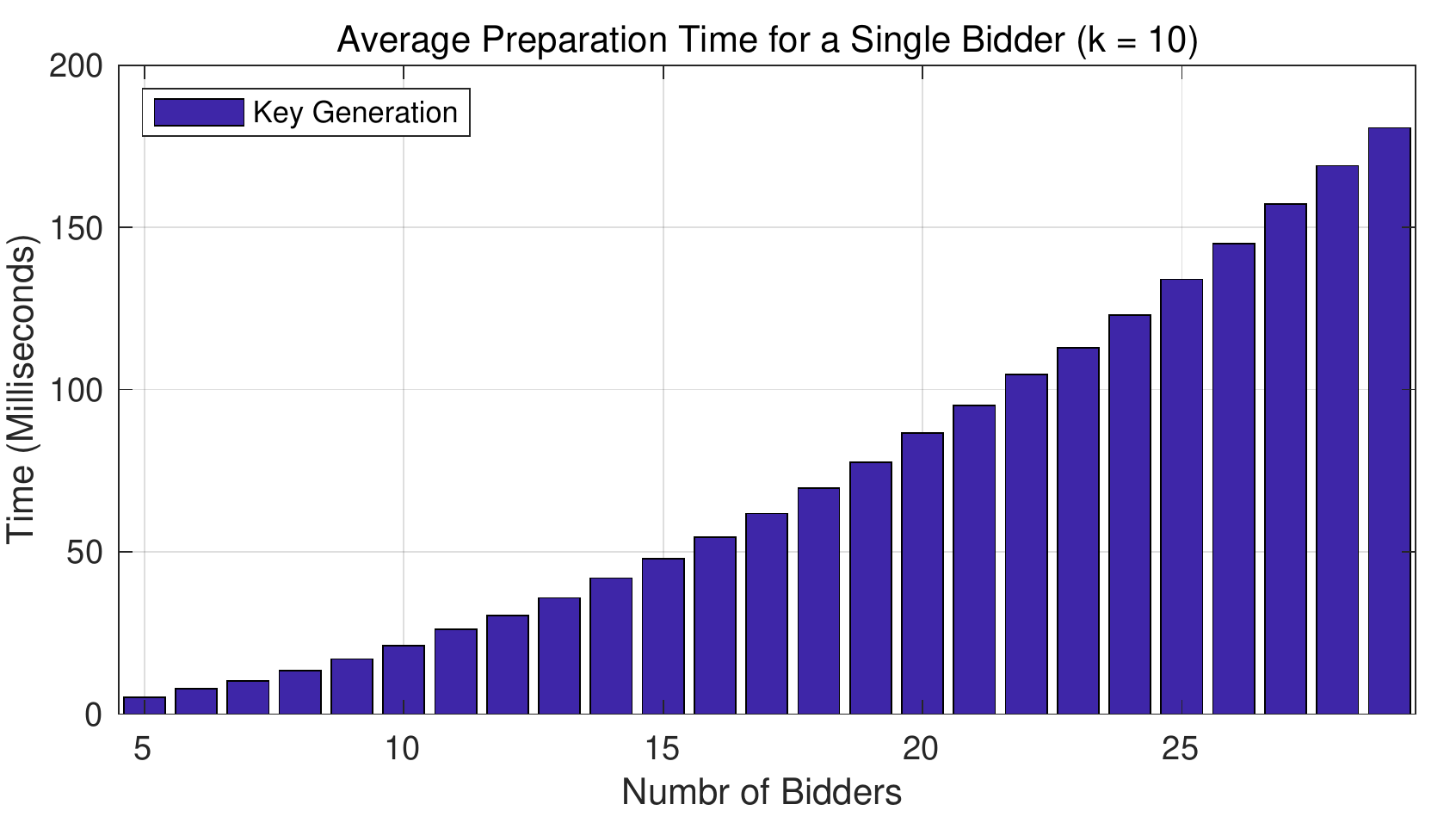}}
 \label{figure_result_b}
 \subfloat[]
{\includegraphics[width=0.49\textwidth, trim=1.8cm 10cm 2cm 10cm,clip]{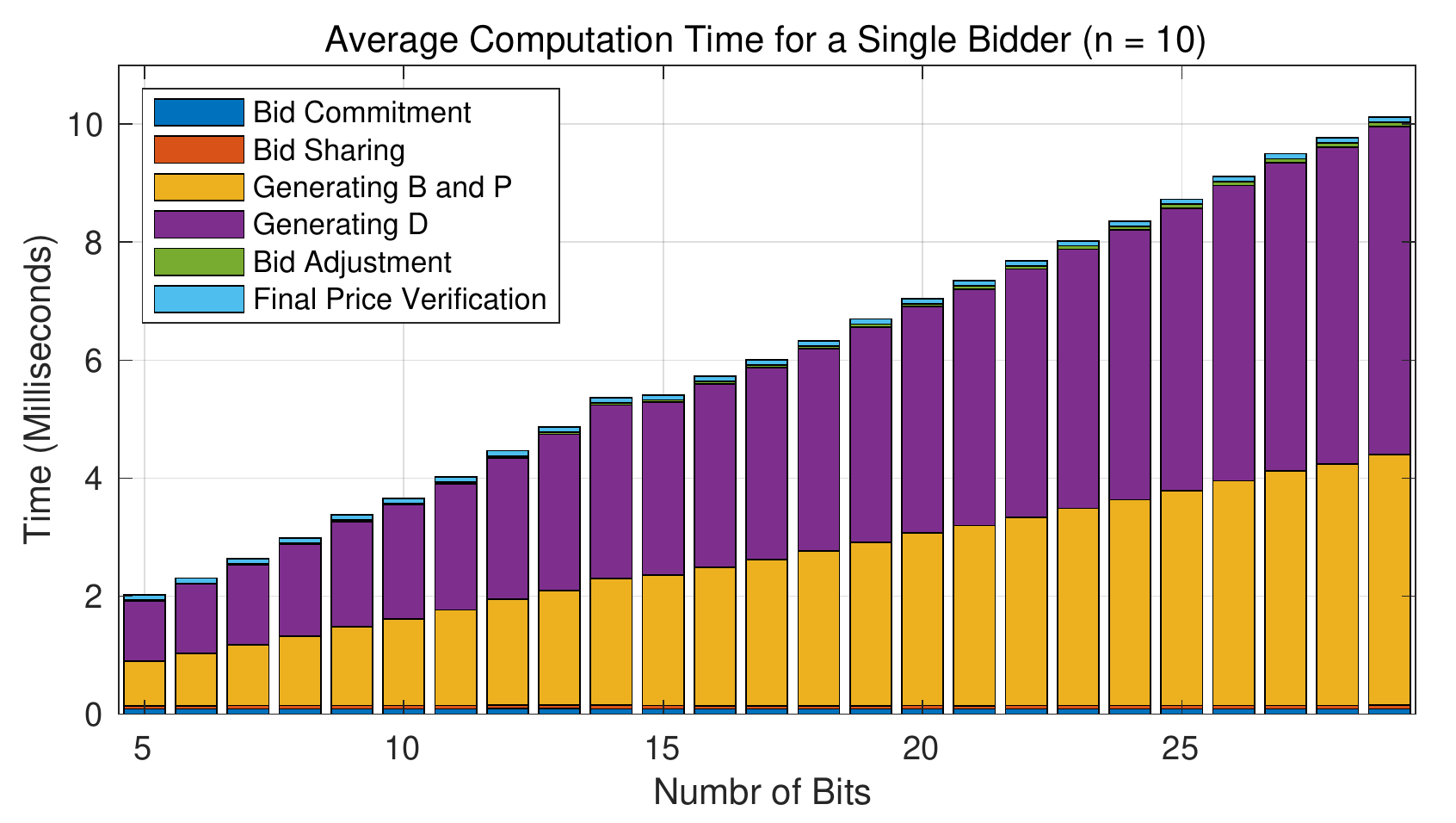}}
\label{figure_result_c}
\hfil
\subfloat[]
{\includegraphics[width=0.49\textwidth, trim=1.8cm 10cm 2cm 10cm,clip]{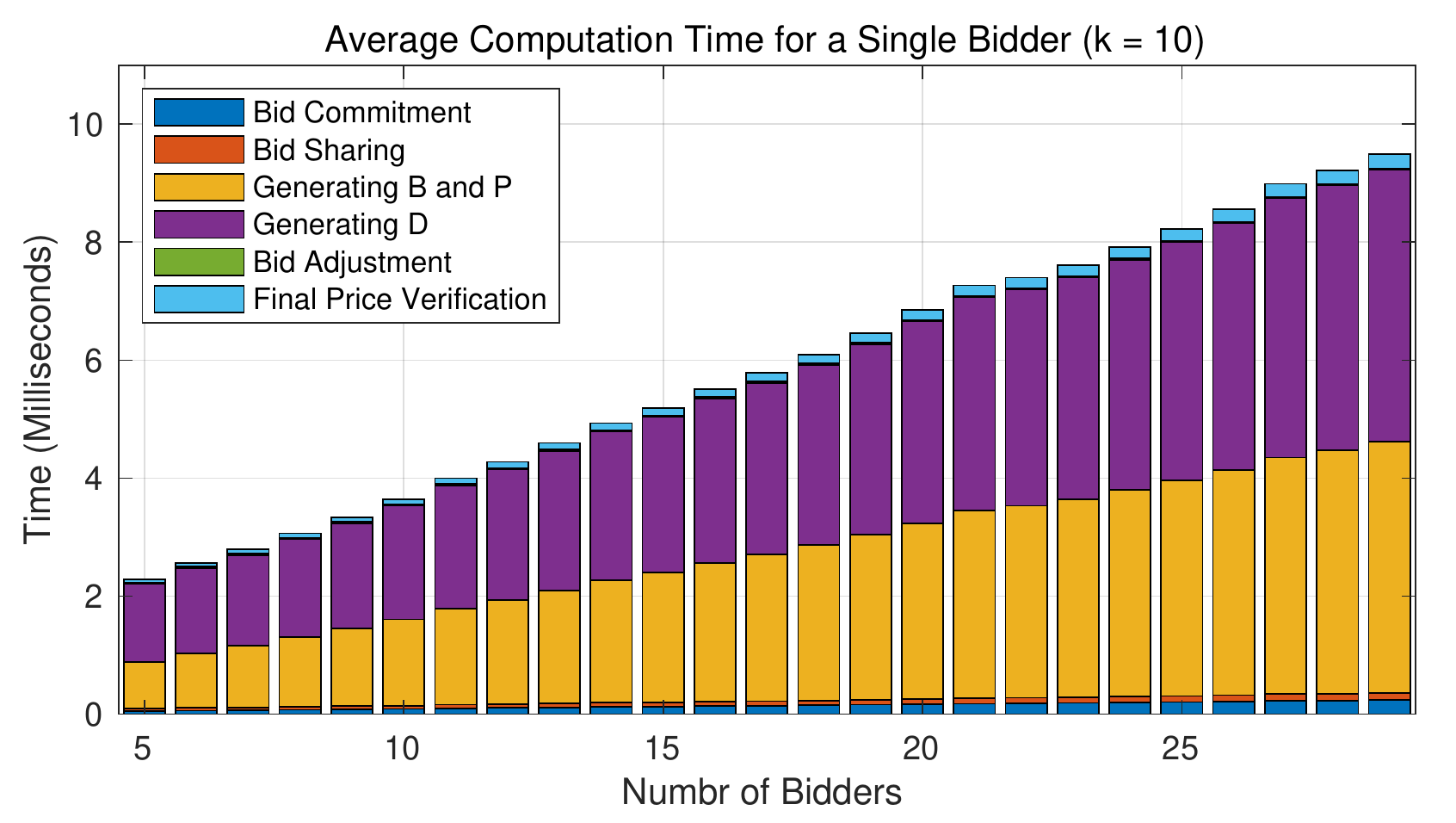}}
 \label{figure_result_d}
\caption{
This figure presents the average time taken by a single bidder to complete various stages of the auction protocol. 
Figure (a) illustrates the average time required for the preparation step as the number of bits $k$ varies in the case of 10 bidders,
while Figure (b) demonstrates the average time needed for the preparation step as the number of bidders $n$ changes, given 10 bits. 
Figures (c) and (d) display the average time taken by a single bidder to execute different auction steps when the number of bits $k$ or bidders $n$ varies, respectively. 
Each component of each auction step is represented by a different color.
}
\label{An_example_D_j_generation}
\end{figure}

To evaluate the performance of a proposed auction protocol, 
we conducted 1000 simulation trials using Matlab 2022b on a MacBook Pro equipped with an Apple M1 processor and 16 GB RAM. 
For each trial, 
we randomly selected a finite field $\mathbb{F}$ with an order that is a safe prime \cite{weisstein2002sophie} within the range of $2^{31}$ to $2^{32}-1$.
We then measured the average time required for the bidder to execute the various auction steps across all 1000 simulations. 
Firstly, we examined the average time needed for the preparation step, 
which is the key generation provided in Algorithm \ref{keys_generation_algo}. 
We plotted the average time against varying numbers of bits $k$ for $n=10$ bidders in Figure \ref{An_example_D_j_generation}(a), 
demonstrating a linear relationship with the value of $k$ when the number of bidders remains constant. 
Next, we plotted the average time against the varying numbers of bidders $n$ for a constant bit size $k=10$ in Figure \ref{An_example_D_j_generation}(b), 
illustrating a quadratic relationship with the value of $n$ when the number of bits is constant.
Furthermore, we investigated the average time necessary to complete all auction steps, 
including bid commitment, bid sharing, ring transfer functions to generate $B_j$, $P_j$, and $D_j$ values, price determination, and final price verification. 
We plotted the average time against the varying numbers of bits $k$ for $n=10$ bidders in Figure \ref{An_example_D_j_generation}(c), 
and against varying numbers of bidders $n$ for a constant bit size $k=10$ in Figure \ref{An_example_D_j_generation}(d). 
The bar plots provide insights into the time required for executing different steps of the protocol.
In particular, 
the expressions involving division operations in \autoref{operation_1} and \autoref{operation_2} of \autoref{2nd_price_algo} are the most expensive steps in finite field computations. 
One way to optimize these expressions is to eliminate the need for division and replace it with a multiplicative inverse with respect to the field modulus using an extended Euclidean algorithm.
We conclude that the overall time required to complete the auction protocol per bidder is linear with respect to the bit size $k$ and the number of bidders $n$. 
This corroborates the theoretical analysis presented in Table \ref{table:complexity}.

\begin{table}[h]
\center
\resizebox* {!} {1.8cm}{
\centering
\begin{tabular}{|c|c|}
\hline
 \textbf{Recorded Best Existing Protocols} & \textbf{Complexity in Main Auction} \\ \hline
Our Proposed Protocol & $O(n k)$ \\ \hline
Vickrey Auction \cite{brandt2002secure} & $O(n 2^k)$ \\ \hline
$1^\text{st}$-Price Sealed-bid Auction \cite{bag2019seal} & $O(n k)$ \\ \hline
\end{tabular}}
\caption{Comparison of Computational Complexity}
\label{table:comparison}
\end{table}

Overall, the proposed auction protocol demonstrates reasonable computational and communication efficiency, 
making it a feasible option for real-world applications, 
particularly when high privacy is essential or when dealing with high-value bids. 
However, in situations involving a large number of bidders, 
additional optimization may be required to alleviate communication burdens. 
Nevertheless, the decentralized nature of the protocol eliminates the need for an auctioneer, 
reducing potential single points of failure.

\section{Conclusion}\label{section_6}

In conclusion, the proposed Vickrey auction protocol is a promising solution for addressing security and privacy weaknesses in traditional Vickrey auctions. 
By using SMPC techniques, our protocol allows bidders to compute the second-highest bid in a decentralized setting while ensuring full privacy for all participants. 
Our protocol also offers resistance to attacks such as collusion and manipulation, 
with the auction outcome being publicly verifiable to ensure transparency and prevent dishonest bidders from altering the outcome. 
Additionally, a sub-protocol may be included to detect and penalize dishonest bidders, incentivizing all participants to follow the protocol. 
While our proposed protocol has reduced computational complexity, it may come with the limitation of requiring more communication among bidders, and it may work better with more expensive items. 
The choice of approach depends on the specific requirements and constraints of the application, such as the privacy requirement and the available computational resources. The proposed protocol has potential applications in a wide range of settings, including online auctions and government procurement processes. Future research should focus on developing more efficient and practical protocols that can scale to larger auctions with multiple bidders and items.

\section*{Acknowledgments}
This work is supported by the National Natural Science Foundation of China (No. 12225102 and 12050002).

\newpage

\bibliographystyle{plain}

\newpage

\begin{appendix}

\section{Visual Illustration of the Protocol Execution}\label{appendix_1}

This appendix includes a visual illustration that demonstrates the execution of the proposed protocol using a simple example.

We consider a set of bidders denoted by $$I = \{{\color{node_1} \alpha_{1}, \color{node_2} \alpha_{2}, \color{node_3} \alpha_{3}, \color{node_4} \alpha_{4}, \color{node_5} \alpha_{5}}\} \,.$$
Suppose that each bidder has a private valuation of 
$$ \begin{array}{c}
{\color{node_1} p_1 = 143 = (10001111)_{\,\text{base 2}}}\\ 
 {\color{node_2} p_2 = 124 = (01111100)_{\,\text{base 2}}}\\ 
 {\color{node_3} p_3 = 217 = (11011001)_{\,\text{base 2}}}\\ 
 {\color{node_4} p_4 = 222 = (11011110)_{\,\text{base 2}}}\\ 
 {\color{node_5} p_5 = 86 = (01010110)_{\,\text{base 2}}}\\ 
 \end{array}$$  
and the best response is to bid this amount,
regardless of the behavior of other bidders. 
Each bid is represented in binary with $8$ bits.

We choose the multiplicative group of $\mathbb{F}_{2063} = \mathbb{Z}/2063\mathbb{Z}$ with $5$ as one of its generators.
The purpose of this appendix is to simplify the demonstration of the protocol's execution and aid the reader's understanding.
In the visual illustration provided, some negative numbers in $\mathbb{F}_{2063}$ have intentionally been left as negative values for the convenience of the reader. 
However, it is important to note that leaving these negative values unchanged does not affect the correctness of the protocol, and it allows for easier verification of its accuracy.

Here's how the protocol works in a decentralized setting:

\noindent $\color{node_1} \alpha_{1} $  generates codes 
$$ \footnotesize \begin{array}{l}
{\color{node_1} \alpha _{1ij} = \begin{bmatrix}
847 & 1555 & 1944 & 1527 & 273 & 1103 & 611 & 31 \\
21 & 1835 & 1604 & 499 & 208 & 1542 & 501 & 1692 \\
306 & 1621 & 667 & 1471 & 827 & 1554 & 1937 & 231 \\
1173 & 1405 & 207 & 699 & 156 & 112 & 1176 & 198 \\
889 & 1939 & 1058 & 1003 & 135 & 1432 & 1679 & 909 \\
\end{bmatrix} } \\\\ 
{\color{node_1} c _{1i} = \begin{bmatrix}
673 & 162 & 624 & 149 & 1349 
 \end{bmatrix}^\intercal } \\\\ 
{\color{node_1} e _{1j} = \begin{bmatrix}
1079 & 1216 & 1810 & 536 & 413 & 795 & 1869 & 981 
 \end{bmatrix} }; \\\\ 
\end{array}$$ 

\noindent $\color{node_2} \alpha_{2} $  generates codes 
$$ \footnotesize \begin{array}{l}
{\color{node_2} \alpha _{2ij} = \begin{bmatrix}
1449 & 168 & 1354 & 861 & 1923 & 1405 & 1731 & 1515 \\
114 & 1851 & 1987 & 1260 & 1485 & 1662 & 832 & 1871 \\
145 & 904 & 1162 & 1224 & 640 & 1491 & 1568 & 1012 \\
1528 & 888 & 1206 & 1740 & 1283 & 798 & 101 & 1126 \\
1007 & 551 & 1205 & 1868 & 1978 & 570 & 1893 & 943 \\
\end{bmatrix} } \\\\ 
{\color{node_2} c _{2i} = \begin{bmatrix}
182 & 1030 & 1717 & 612 & 671 
 \end{bmatrix}^\intercal } \\\\ 
{\color{node_2} e _{2j} = \begin{bmatrix}
1270 & 1384 & 1174 & 867 & 1078 & 833 & 1370 & 999 
 \end{bmatrix} }; \\\\ 
\end{array}$$ 

\noindent $\color{node_3} \alpha_{3} $  generates codes 
$$ \footnotesize \begin{array}{l}
{\color{node_3} \alpha _{3ij} = \begin{bmatrix}
30 & 735 & 1621 & 1696 & 1284 & 639 & 1103 & 1031 \\
1989 & 1100 & 104 & 1255 & 997 & 1609 & 1253 & 1643 \\
1972 & 911 & 414 & 1738 & 759 & 1264 & 1014 & 2019 \\
1720 & 1378 & 1117 & 1517 & 1339 & 369 & 304 & 1510 \\
1016 & 205 & 1536 & 674 & 1741 & 710 & 991 & 1405 \\
\end{bmatrix} } \\\\ 
{\color{node_3} c _{3i} = \begin{bmatrix}
876 & 1623 & 113 & 68 & 535 
 \end{bmatrix}^\intercal } \\\\ 
{\color{node_3} e _{3j} = \begin{bmatrix}
1062 & 1465 & 11 & 1444 & 1739 & 1024 & 862 & 1708 
 \end{bmatrix} }; \\\\ 
\end{array}$$ 

\noindent $\color{node_4} \alpha_{4} $  generates codes 
$$ \footnotesize \begin{array}{l}
{\color{node_4} \alpha _{4ij} = \begin{bmatrix}
1594 & 1891 & 992 & 2026 & 1781 & 1295 & 1008 & 1300 \\
1921 & 1407 & 895 & 583 & 1926 & 70 & 290 & 827 \\
631 & 956 & 881 & 594 & 98 & 69 & 1961 & 1836 \\
1116 & 182 & 1128 & 408 & 837 & 14 & 1380 & 757 \\
1607 & 1736 & 1342 & 1951 & 1077 & 1658 & 392 & 925 \\
\end{bmatrix} } \\\\ 
{\color{node_4} c _{4i} = \begin{bmatrix}
1098 & 1026 & 1853 & 923 & 133 
 \end{bmatrix}^\intercal } \\\\ 
{\color{node_4} e _{4j} = \begin{bmatrix}
815 & 1417 & 1446 & 1112 & 479 & 1202 & 95 & 187 
 \end{bmatrix} }; \\\\ 
\end{array}$$ 

\noindent $\color{node_5} \alpha_{5} $  generates codes 
$$ \footnotesize \begin{array}{l}
{\color{node_5} \alpha _{5ij} = \begin{bmatrix}
1950 & 1690 & 1325 & 1994 & 250 & 283 & 1689 & 918 \\
99 & 769 & 1206 & 332 & 489 & 1143 & 346 & 726 \\
951 & 657 & 641 & 2023 & 889 & 148 & 1628 & 1307 \\
1229 & 827 & 922 & 151 & 1573 & 998 & 1182 & 928 \\
1174 & 331 & 1306 & 347 & 1389 & 1210 & 819 & 1307 \\
\end{bmatrix} } \\\\ 
{\color{node_5} c _{5i} = \begin{bmatrix}
711 & 1752 & 1497 & 2043 & 140 
 \end{bmatrix}^\intercal } \\\\ 
{\color{node_5} e _{5j} = \begin{bmatrix}
1030 & 301 & 388 & 1776 & 781 & 12 & 181 & 1756 
 \end{bmatrix} }; \\\\ 
\end{array}$$ 


\newpage
Based on the algorithm for key generation described in Algorithm \ref{keys_generation_algo},
we can create keys, fake keys, and check keys as shown in \autoref{table:Keys} , \autoref{table:Fake Keys} and  \autoref{table:Check Keys}.
 \bigskip
\begin{table}[h] 
 \centering
 \resizebox* {!} {3cm}{ 
 \begin{tabular}{|c|c|c|c|c|c|c|c|c|} 
 \hline & $j=1$ & $j=2$ & $j=3$ & $j=4$ & $j=5$ & $j=6$ & $j=7$ & $j=8$ \\ 
 \hline {\color{node_1}  $K_{1,j}$} 
& {\color{node_1}  823} & {\color{node_1}  1882} & {\color{node_1}  801} & {\color{node_1}  1014} & {\color{node_1}  1204} & {\color{node_1}  295} & {\color{node_1}  2058} & {\color{node_1}  274} \\ 
 \hline {\color{node_2}  $K_{2,j}$} 
& {\color{node_2}  790} & {\color{node_2}  766} & {\color{node_2}  361} & {\color{node_2}  28} & {\color{node_2}  274} & {\color{node_2}  88} & {\color{node_2}  1642} & {\color{node_2}  230} \\ 
 \hline {\color{node_3}  $K_{3,j}$} 
& {\color{node_3}  757} & {\color{node_3}  624} & {\color{node_3}  1042} & {\color{node_3}  641} & {\color{node_3}  844} & {\color{node_3}  1993} & {\color{node_3}  820} & {\color{node_3}  478} \\ 
 \hline {\color{node_4}  $K_{4,j}$} 
& {\color{node_4}  1825} & {\color{node_4}  961} & {\color{node_4}  983} & {\color{node_4}  13} & {\color{node_4}  1490} & {\color{node_4}  346} & {\color{node_4}  915} & {\color{node_4}  1213} \\ 
 \hline {\color{node_5}  $K_{5,j}$} 
& {\color{node_5}  821} & {\color{node_5}  170} & {\color{node_5}  1185} & {\color{node_5}  959} & {\color{node_5}  484} & {\color{node_5}  1604} & {\color{node_5}  1330} & {\color{node_5}  67} \\ 
 \hline \end{tabular}} 
\caption{Generating keys using Algorithm \ref{keys_generation_algo}.}
   \label{table:Keys} 
 \end{table} 
\bigskip
\begin{table}[h] 
 \centering
 \resizebox* {!} {3cm}{ 
 \begin{tabular}{|c|c|c|c|c|c|c|c|c|} 
 \hline & $j=1$ & $j=2$ & $j=3$ & $j=4$ & $j=5$ & $j=6$ & $j=7$ & $j=8$ \\ 
 \hline {\color{node_1}  $F_{1,j}$} 
& {\color{node_1}  641} & {\color{node_1}  1994} & {\color{node_1}  75} & {\color{node_1}  1519} & {\color{node_1}  33} & {\color{node_1}  1436} & {\color{node_1}  443} & {\color{node_1}  1735} \\ 
 \hline {\color{node_2}  $F_{2,j}$} 
& {\color{node_2}  1174} & {\color{node_2}  1285} & {\color{node_2}  1501} & {\color{node_2}  1927} & {\color{node_2}  1875} & {\color{node_2}  1612} & {\color{node_2}  862} & {\color{node_2}  774} \\ 
 \hline {\color{node_3}  $F_{3,j}$} 
& {\color{node_3}  1800} & {\color{node_3}  823} & {\color{node_3}  1786} & {\color{node_3}  1757} & {\color{node_3}  336} & {\color{node_3}  717} & {\color{node_3}  502} & {\color{node_3}  1645} \\ 
 \hline {\color{node_4}  $F_{4,j}$} 
& {\color{node_4}  1418} & {\color{node_4}  1194} & {\color{node_4}  323} & {\color{node_4}  13} & {\color{node_4}  24} & {\color{node_4}  946} & {\color{node_4}  275} & {\color{node_4}  117} \\ 
 \hline {\color{node_5}  $F_{5,j}$} 
& {\color{node_5}  1063} & {\color{node_5}  1689} & {\color{node_5}  1549} & {\color{node_5}  1605} & {\color{node_5}  1543} & {\color{node_5}  268} & {\color{node_5}  1646} & {\color{node_5}  1097} \\ 
 \hline \end{tabular}} 
\caption{Generating fake keys using Algorithm \ref{keys_generation_algo}.}
 \label{table:Fake Keys} 
 \end{table} 
 \bigskip
\begin{table}[h] 
 \centering
 \resizebox* {!} {3cm}{ 
 \begin{tabular}{|c|c|c|c|c|c|c|c|c|} 
 \hline & $j=1$ & $j=2$ & $j=3$ & $j=4$ & $j=5$ & $j=6$ & $j=7$ & $j=8$ \\ 
 \hline {\color{node_1}  $C_{1,j}$} 
& {\color{node_1}  849} & {\color{node_1}  704} & {\color{node_1}  251} & {\color{node_1}  1171} & {\color{node_1}  1914} & {\color{node_1}  1206} & {\color{node_1}  1741} & {\color{node_1}  412} \\ 
 \hline {\color{node_2}  $C_{2,j}$} 
& {\color{node_2}  1563} & {\color{node_2}  946} & {\color{node_2}  1343} & {\color{node_2}  858} & {\color{node_2}  1835} & {\color{node_2}  1987} & {\color{node_2}  1533} & {\color{node_2}  862} \\ 
 \hline {\color{node_3}  $C_{3,j}$} 
& {\color{node_3}  664} & {\color{node_3}  1835} & {\color{node_3}  601} & {\color{node_3}  1597} & {\color{node_3}  349} & {\color{node_3}  454} & {\color{node_3}  1421} & {\color{node_3}  832} \\ 
 \hline {\color{node_4}  $C_{4,j}$} 
& {\color{node_4}  1969} & {\color{node_4}  1029} & {\color{node_4}  1330} & {\color{node_4}  1580} & {\color{node_4}  900} & {\color{node_4}  1618} & {\color{node_4}  88} & {\color{node_4}  806} \\ 
 \hline {\color{node_5}  $C_{5,j}$} 
& {\color{node_5}  2058} & {\color{node_5}  104} & {\color{node_5}  99} & {\color{node_5}  1869} & {\color{node_5}  614} & {\color{node_5}  650} & {\color{node_5}  1825} & {\color{node_5}  1274} \\ 
 \hline \end{tabular}} 
\caption{Generating check keys using Algorithm \ref{keys_generation_algo}.}
  \label{table:Check Keys} 
 \end{table} 
\bigskip

Use the bid commitment algorithm outlined in Algorithm \ref{sub_Bid_Commitment} to make the following committed bids on the  public channel:
$$  \begin{array}{l}
{\mathbf{P}_l = \begin{bmatrix} 
{\color{node_1}  681} &  
 {\color{node_2}  528} &  
 {\color{node_3}  718} &  
 {\color{node_4}  32} &  
 {\color{node_5}  9 }  \\ 
 \end{bmatrix}^\intercal  }
\end{array}$$

%
%
%

\newpage
In the bid sharing step, as described in Algorithm \ref{bids_sharing_algo}, 
each bidder $\alpha_l$ sends their codes $a_{l,i,j}$ to bidder $\alpha_i$ for all $i \neq l$.

\noindent $\color{node_1} \alpha_{1} $  receives 
$$  \begin{array}{l}
{\color{node_1}  
 a_{i,1,j} = \begin{bmatrix} 
{\color{node_1} 847 } & {\color{node_1} 1555 } & {\color{node_1} 1944 } & {\color{node_1} 1527 } & {\color{node_1} 273 } & {\color{node_1} 1103 } & {\color{node_1} 611 } & {\color{node_1} 31 }  \\ 
 {\color{node_2} 1449 } & {\color{node_2} 168 } & {\color{node_2} 1354 } & {\color{node_2} 861 } & {\color{node_2} 1923 } & {\color{node_2} 1405 } & {\color{node_2} 1731 } & {\color{node_2} 1515 }  \\ 
 {\color{node_3} 30 } & {\color{node_3} 735 } & {\color{node_3} 1621 } & {\color{node_3} 1696 } & {\color{node_3} 1284 } & {\color{node_3} 639 } & {\color{node_3} 1103 } & {\color{node_3} 1031 }  \\ 
 {\color{node_4} 1594 } & {\color{node_4} 1891 } & {\color{node_4} 992 } & {\color{node_4} 2026 } & {\color{node_4} 1781 } & {\color{node_4} 1295 } & {\color{node_4} 1008 } & {\color{node_4} 1300 }  \\ 
 {\color{node_5} 1950 } & {\color{node_5} 1690 } & {\color{node_5} 1325 } & {\color{node_5} 1994 } & {\color{node_5} 250 } & {\color{node_5} 283 } & {\color{node_5} 1689 } & {\color{node_5} 918 }  \\ 
 \end{bmatrix} } 
\end{array}$$
\noindent $\color{node_2} \alpha_{2} $  receives 
$$  \begin{array}{l}
{\color{node_2}  
 a_{i,2,j} = \begin{bmatrix} 
{\color{node_1} 21 } & {\color{node_1} 1835 } & {\color{node_1} 1604 } & {\color{node_1} 499 } & {\color{node_1} 208 } & {\color{node_1} 1542 } & {\color{node_1} 501 } & {\color{node_1} 1692 }  \\ 
 {\color{node_2} 114 } & {\color{node_2} 1851 } & {\color{node_2} 1987 } & {\color{node_2} 1260 } & {\color{node_2} 1485 } & {\color{node_2} 1662 } & {\color{node_2} 832 } & {\color{node_2} 1871 }  \\ 
 {\color{node_3} 1989 } & {\color{node_3} 1100 } & {\color{node_3} 104 } & {\color{node_3} 1255 } & {\color{node_3} 997 } & {\color{node_3} 1609 } & {\color{node_3} 1253 } & {\color{node_3} 1643 }  \\ 
 {\color{node_4} 1921 } & {\color{node_4} 1407 } & {\color{node_4} 895 } & {\color{node_4} 583 } & {\color{node_4} 1926 } & {\color{node_4} 70 } & {\color{node_4} 290 } & {\color{node_4} 827 }  \\ 
 {\color{node_5} 99 } & {\color{node_5} 769 } & {\color{node_5} 1206 } & {\color{node_5} 332 } & {\color{node_5} 489 } & {\color{node_5} 1143 } & {\color{node_5} 346 } & {\color{node_5} 726 }  \\ 
 \end{bmatrix} } 
\end{array}$$
\noindent $\color{node_3} \alpha_{3} $  receives 
$$  \begin{array}{l}
{\color{node_3}  
 a_{i,3,j} = \begin{bmatrix} 
{\color{node_1} 306 } & {\color{node_1} 1621 } & {\color{node_1} 667 } & {\color{node_1} 1471 } & {\color{node_1} 827 } & {\color{node_1} 1554 } & {\color{node_1} 1937 } & {\color{node_1} 231 }  \\ 
 {\color{node_2} 145 } & {\color{node_2} 904 } & {\color{node_2} 1162 } & {\color{node_2} 1224 } & {\color{node_2} 640 } & {\color{node_2} 1491 } & {\color{node_2} 1568 } & {\color{node_2} 1012 }  \\ 
 {\color{node_3} 1972 } & {\color{node_3} 911 } & {\color{node_3} 414 } & {\color{node_3} 1738 } & {\color{node_3} 759 } & {\color{node_3} 1264 } & {\color{node_3} 1014 } & {\color{node_3} 2019 }  \\ 
 {\color{node_4} 631 } & {\color{node_4} 956 } & {\color{node_4} 881 } & {\color{node_4} 594 } & {\color{node_4} 98 } & {\color{node_4} 69 } & {\color{node_4} 1961 } & {\color{node_4} 1836 }  \\ 
 {\color{node_5} 951 } & {\color{node_5} 657 } & {\color{node_5} 641 } & {\color{node_5} 2023 } & {\color{node_5} 889 } & {\color{node_5} 148 } & {\color{node_5} 1628 } & {\color{node_5} 1307 }  \\ 
 \end{bmatrix} } 
\end{array}$$
\noindent $\color{node_4} \alpha_{4} $  receives 
$$  \begin{array}{l}
{\color{node_4}  
 a_{i,4,j} = \begin{bmatrix} 
{\color{node_1} 1173 } & {\color{node_1} 1405 } & {\color{node_1} 207 } & {\color{node_1} 699 } & {\color{node_1} 156 } & {\color{node_1} 112 } & {\color{node_1} 1176 } & {\color{node_1} 198 }  \\ 
 {\color{node_2} 1528 } & {\color{node_2} 888 } & {\color{node_2} 1206 } & {\color{node_2} 1740 } & {\color{node_2} 1283 } & {\color{node_2} 798 } & {\color{node_2} 101 } & {\color{node_2} 1126 }  \\ 
 {\color{node_3} 1720 } & {\color{node_3} 1378 } & {\color{node_3} 1117 } & {\color{node_3} 1517 } & {\color{node_3} 1339 } & {\color{node_3} 369 } & {\color{node_3} 304 } & {\color{node_3} 1510 }  \\ 
 {\color{node_4} 1116 } & {\color{node_4} 182 } & {\color{node_4} 1128 } & {\color{node_4} 408 } & {\color{node_4} 837 } & {\color{node_4} 14 } & {\color{node_4} 1380 } & {\color{node_4} 757 }  \\ 
 {\color{node_5} 1229 } & {\color{node_5} 827 } & {\color{node_5} 922 } & {\color{node_5} 151 } & {\color{node_5} 1573 } & {\color{node_5} 998 } & {\color{node_5} 1182 } & {\color{node_5} 928 }  \\ 
 \end{bmatrix} } 
\end{array}$$
\noindent $\color{node_5} \alpha_{5} $  receives 
$$  \begin{array}{l}
{\color{node_5}  
 a_{i,5,j} = \begin{bmatrix} 
{\color{node_1} 889 } & {\color{node_1} 1939 } & {\color{node_1} 1058 } & {\color{node_1} 1003 } & {\color{node_1} 135 } & {\color{node_1} 1432 } & {\color{node_1} 1679 } & {\color{node_1} 909 }  \\ 
 {\color{node_2} 1007 } & {\color{node_2} 551 } & {\color{node_2} 1205 } & {\color{node_2} 1868 } & {\color{node_2} 1978 } & {\color{node_2} 570 } & {\color{node_2} 1893 } & {\color{node_2} 943 }  \\ 
 {\color{node_3} 1016 } & {\color{node_3} 205 } & {\color{node_3} 1536 } & {\color{node_3} 674 } & {\color{node_3} 1741 } & {\color{node_3} 710 } & {\color{node_3} 991 } & {\color{node_3} 1405 }  \\ 
 {\color{node_4} 1607 } & {\color{node_4} 1736 } & {\color{node_4} 1342 } & {\color{node_4} 1951 } & {\color{node_4} 1077 } & {\color{node_4} 1658 } & {\color{node_4} 392 } & {\color{node_4} 925 }  \\ 
 {\color{node_5} 1174 } & {\color{node_5} 331 } & {\color{node_5} 1306 } & {\color{node_5} 347 } & {\color{node_5} 1389 } & {\color{node_5} 1210 } & {\color{node_5} 819 } & {\color{node_5} 1307 }  \\ 
 \end{bmatrix} } 
\end{array}$$

\newpage
Using these codes,  bidder $\alpha_l$ can use them to compute the indicators for each digit $j$ of their binary representation bid $p_{l,j}$, 
which they committed to in the bid commitment step outlined in Algorithm \ref{sub_Bid_Commitment}.

\begin{table}[h] 
 \centering
 \resizebox* {!} {3cm}{ 
 \begin{tabular}{|c|c|c|c|c|c|c|c|c|} 
 \hline $(Y | N)_{l,j}$ 
& $j=1$ & $j=2$ & $j=3$ & $j=4$ & $j=5$ & $j=6$ & $j=7$ & $j=8$ \\ 
 \hline {\color{node_1}  $l = 1$} 
& {\color{node_1}  5870} & {\color{node_1}  -6039} & {\color{node_1}  -7236} & {\color{node_1}  -8104} & {\color{node_1}  5511} & {\color{node_1}  4725} & {\color{node_1}  6142} & {\color{node_1}  4795} \\ 
 \hline {\color{node_2}  $l = 2$} 
& {\color{node_2}  -4144} & {\color{node_2}  6962} & {\color{node_2}  5796} & {\color{node_2}  3929} & {\color{node_2}  5105} & {\color{node_2}  6026} & {\color{node_2}  -3222} & {\color{node_2}  -6759} \\ 
 \hline {\color{node_3}  $l = 3$} 
& {\color{node_3}  4005} & {\color{node_3}  5049} & {\color{node_3}  -3765} & {\color{node_3}  7050} & {\color{node_3}  3213} & {\color{node_3}  -4526} & {\color{node_3}  -8108} & {\color{node_3}  6405} \\ 
 \hline {\color{node_4}  $l = 4$} 
& {\color{node_4}  6766} & {\color{node_4}  4680} & {\color{node_4}  -4580} & {\color{node_4}  4515} & {\color{node_4}  5188} & {\color{node_4}  2291} & {\color{node_4}  4143} & {\color{node_4}  -4519} \\ 
 \hline {\color{node_5}  $l = 5$} 
& {\color{node_5}  -5693} & {\color{node_5}  4762} & {\color{node_5}  -6447} & {\color{node_5}  5843} & {\color{node_5}  -6320} & {\color{node_5}  5580} & {\color{node_5}  5774} & {\color{node_5}  -5489} \\ 
 \hline \end{tabular}} 
 \caption{The indicators $Y_{l,j}$ and $N_{l,j}$ represent a bid of $p_{l,j}=1$ and $p_{l,j}=0$, respectively, for bidder $\alpha_l$.}
 \end{table} 

In the next step, 
each bidder $\alpha_l$ randomly chooses $b_{i,l,j}$, 
such that 
\begin{equation}
    \sum_{i=1}^n b_{i,l,j}=\begin{cases}
    {Y_{l,j},} & {\text{if}}\ p_{l,j}=1 \\ 
    {N_{l,j},} & {\text{otherwise.}}
    \end{cases}
\end{equation}

\noindent $\color{node_1} \alpha_{1} $  generates 
$$  \begin{array}{l}
{\color{node_1} b_{i,1,j} = \begin{bmatrix}
1050 & 1613 & 457 & 1472 & 219 & 442 & 839 & 891 \\
1779 & 1860 & 758 & 1694 & 535 & 779 & 1668 & 790 \\
1431 & 1776 & 1919 & 555 & 1169 & 930 & 1642 & 1407 \\
258 & 512 & 904 & 1466 & 246 & 1871 & 141 & 1210 \\
1352 & -11800 & -11274 & -13291 & 3342 & 703 & 1852 & 497 \\
\end{bmatrix} } \\\\ 
\end{array}$$
\noindent $\color{node_2} \alpha_{2} $  generates 
$$  \begin{array}{l}
{\color{node_2} b_{i,2,j} = \begin{bmatrix}
551 & 231 & 427 & 598 & 637 & 1879 & 1570 & 1544 \\
221 & 1905 & 478 & 1212 & 508 & 364 & 866 & 1801 \\
369 & 2 & 660 & 4 & 1607 & 790 & 920 & 1189 \\
320 & 1574 & 1962 & 2 & 836 & 428 & 412 & 166 \\
-5605 & 3250 & 2269 & 2113 & 1517 & 2565 & -6990 & -11459 \\
\end{bmatrix} } \\\\ 
\end{array}$$
\noindent $\color{node_3} \alpha_{3} $  generates 
$$  \begin{array}{l}
{\color{node_3} b_{i,3,j} = \begin{bmatrix}
1612 & 149 & 1327 & 498 & 437 & 571 & 1797 & 1317 \\
1643 & 1463 & 48 & 1352 & 1981 & 17 & 3 & 518 \\
375 & 607 & 1456 & 22 & 1179 & 967 & 98 & 98 \\
291 & 1471 & 1862 & 125 & 1396 & 175 & 1981 & 124 \\
84 & 1359 & -8458 & 5053 & -1780 & -6256 & -11987 & 4348 \\
\end{bmatrix} } \\\\ 
\end{array}$$
\noindent $\color{node_4} \alpha_{4} $  generates 
$$  \begin{array}{l}
{\color{node_4} b_{i,4,j} = \begin{bmatrix}
1064 & 1875 & 698 & 48 & 59 & 188 & 696 & 1334 \\
1349 & 135 & 1660 & 404 & 1019 & 1600 & 1904 & 1514 \\
1778 & 1909 & 1116 & 1411 & 1286 & 64 & 1587 & 1313 \\
977 & 717 & 1366 & 1641 & 487 & 1962 & 1759 & 1977 \\
1598 & 44 & -9420 & 1011 & 2337 & -1523 & -1803 & -10657 \\
\end{bmatrix} } \\\\ 
\end{array}$$
\noindent $\color{node_5} \alpha_{5} $  generates 
$$  \begin{array}{l}
{\color{node_5} b_{i,5,j} = \begin{bmatrix}
416 & 439 & 123 & 1009 & 329 & 347 & 916 & 1558 \\
1282 & 355 & 585 & 1186 & 748 & 1677 & 838 & 1966 \\
309 & 903 & 1112 & 1173 & 1715 & 1505 & 286 & 631 \\
1976 & 1517 & 825 & 539 & 1793 & 234 & 647 & 1927 \\
-9676 & 1548 & -9092 & 1936 & -10905 & 1817 & 3087 & -11571 \\
\end{bmatrix} } \\\\ 
\end{array}$$
Once bidders generate their random variables, 
$\alpha_l$ sends corresponding $b_{i,l,j}$ values to each bidder $\alpha_i$ for all $i\neq l$.

\noindent $\color{node_1} \alpha_{1} $  receives 
$$ \begin{array}{l}
{\color{node_1} b_{1,i,j} = \begin{bmatrix}{\color{node_1} 1050 } & {\color{node_1} 1613 } & {\color{node_1} 457 } & {\color{node_1} 1472 } & {\color{node_1} -10803 } & {\color{node_1} -9008 } & {\color{node_1} -11445 } & {\color{node_1} -8699 } \\ 
{\color{node_2} 551 } & {\color{node_2} 231 } & {\color{node_2} 427 } & {\color{node_2} 598 } & {\color{node_2} 637 } & {\color{node_2} 1879 } & {\color{node_2} 1570 } & {\color{node_2} 1544 } \\ 
{\color{node_3} 1612 } & {\color{node_3} 149 } & {\color{node_3} 1327 } & {\color{node_3} 498 } & {\color{node_3} 437 } & {\color{node_3} 571 } & {\color{node_3} 1797 } & {\color{node_3} 1317 } \\ 
{\color{node_4} 1064 } & {\color{node_4} 1875 } & {\color{node_4} 698 } & {\color{node_4} 48 } & {\color{node_4} 59 } & {\color{node_4} 188 } & {\color{node_4} 696 } & {\color{node_4} 1334 } \\ 
{\color{node_5} 416 } & {\color{node_5} 439 } & {\color{node_5} 123 } & {\color{node_5} 1009 } & {\color{node_5} 329 } & {\color{node_5} 347 } & {\color{node_5} 916 } & {\color{node_5} 1558 } \\ 
\end{bmatrix} } \\\\ 
\end{array}$$
\noindent $\color{node_2} \alpha_{2} $  receives 
$$ \begin{array}{l}
{\color{node_2} b_{2,i,j} = \begin{bmatrix}{\color{node_1} 1779 } & {\color{node_1} 1860 } & {\color{node_1} 758 } & {\color{node_1} 1694 } & {\color{node_1} 535 } & {\color{node_1} 779 } & {\color{node_1} 1668 } & {\color{node_1} 790 } \\ 
{\color{node_2} 221 } & {\color{node_2} -12019 } & {\color{node_2} -11114 } & {\color{node_2} -6646 } & {\color{node_2} -9702 } & {\color{node_2} -11688 } & {\color{node_2} 866 } & {\color{node_2} 1801 } \\ 
{\color{node_3} 1643 } & {\color{node_3} 1463 } & {\color{node_3} 48 } & {\color{node_3} 1352 } & {\color{node_3} 1981 } & {\color{node_3} 17 } & {\color{node_3} 3 } & {\color{node_3} 518 } \\ 
{\color{node_4} 1349 } & {\color{node_4} 135 } & {\color{node_4} 1660 } & {\color{node_4} 404 } & {\color{node_4} 1019 } & {\color{node_4} 1600 } & {\color{node_4} 1904 } & {\color{node_4} 1514 } \\ 
{\color{node_5} 1282 } & {\color{node_5} 355 } & {\color{node_5} 585 } & {\color{node_5} 1186 } & {\color{node_5} 748 } & {\color{node_5} 1677 } & {\color{node_5} 838 } & {\color{node_5} 1966 } \\ 
\end{bmatrix} } \\\\ 
\end{array}$$
\noindent $\color{node_3} \alpha_{3} $  receives 
$$ \begin{array}{l}
{\color{node_3} b_{3,i,j} = \begin{bmatrix}{\color{node_1} 1431 } & {\color{node_1} 1776 } & {\color{node_1} 1919 } & {\color{node_1} 555 } & {\color{node_1} 1169 } & {\color{node_1} 930 } & {\color{node_1} 1642 } & {\color{node_1} 1407 } \\ 
{\color{node_2} 369 } & {\color{node_2} 2 } & {\color{node_2} 660 } & {\color{node_2} 4 } & {\color{node_2} 1607 } & {\color{node_2} 790 } & {\color{node_2} 920 } & {\color{node_2} 1189 } \\ 
{\color{node_3} 375 } & {\color{node_3} 607 } & {\color{node_3} 1456 } & {\color{node_3} 22 } & {\color{node_3} 1179 } & {\color{node_3} 967 } & {\color{node_3} 98 } & {\color{node_3} 98 } \\ 
{\color{node_4} 1778 } & {\color{node_4} 1909 } & {\color{node_4} 1116 } & {\color{node_4} 1411 } & {\color{node_4} 1286 } & {\color{node_4} 64 } & {\color{node_4} 1587 } & {\color{node_4} 1313 } \\ 
{\color{node_5} 309 } & {\color{node_5} 903 } & {\color{node_5} 1112 } & {\color{node_5} 1173 } & {\color{node_5} 1715 } & {\color{node_5} 1505 } & {\color{node_5} 286 } & {\color{node_5} 631 } \\ 
\end{bmatrix} } \\\\ 
\end{array}$$
\noindent $\color{node_4} \alpha_{4} $  receives 
$$ \begin{array}{l}
{\color{node_4} b_{4,i,j} = \begin{bmatrix}{\color{node_1} 258 } & {\color{node_1} 512 } & {\color{node_1} 904 } & {\color{node_1} 1466 } & {\color{node_1} 246 } & {\color{node_1} 1871 } & {\color{node_1} 141 } & {\color{node_1} 1210 } \\ 
{\color{node_2} 320 } & {\color{node_2} 1574 } & {\color{node_2} 1962 } & {\color{node_2} 2 } & {\color{node_2} 836 } & {\color{node_2} 428 } & {\color{node_2} 412 } & {\color{node_2} 166 } \\ 
{\color{node_3} 291 } & {\color{node_3} 1471 } & {\color{node_3} 1862 } & {\color{node_3} 125 } & {\color{node_3} 1396 } & {\color{node_3} 175 } & {\color{node_3} 1981 } & {\color{node_3} 124 } \\ 
{\color{node_4} 977 } & {\color{node_4} 717 } & {\color{node_4} 1366 } & {\color{node_4} 1641 } & {\color{node_4} 487 } & {\color{node_4} 1962 } & {\color{node_4} 1759 } & {\color{node_4} 11015 } \\ 
{\color{node_5} 1976 } & {\color{node_5} 1517 } & {\color{node_5} 825 } & {\color{node_5} 539 } & {\color{node_5} 1793 } & {\color{node_5} 234 } & {\color{node_5} 647 } & {\color{node_5} 1927 } \\ 
\end{bmatrix} } \\\\ 
\end{array}$$
\noindent $\color{node_5} \alpha_{5} $  receives 
$$ \begin{array}{l}
{\color{node_5} b_{5,i,j} = \begin{bmatrix}{\color{node_1} 1352 } & {\color{node_1} -11800 } & {\color{node_1} -11274 } & {\color{node_1} -13291 } & {\color{node_1} 3342 } & {\color{node_1} 703 } & {\color{node_1} 1852 } & {\color{node_1} 497 } \\ 
{\color{node_2} -5605 } & {\color{node_2} 3250 } & {\color{node_2} 2269 } & {\color{node_2} 2113 } & {\color{node_2} 1517 } & {\color{node_2} 2565 } & {\color{node_2} -6990 } & {\color{node_2} -11459 } \\ 
{\color{node_3} 84 } & {\color{node_3} 1359 } & {\color{node_3} -8458 } & {\color{node_3} 5053 } & {\color{node_3} -1780 } & {\color{node_3} -6256 } & {\color{node_3} -11987 } & {\color{node_3} 4348 } \\ 
{\color{node_4} 1598 } & {\color{node_4} 44 } & {\color{node_4} -9420 } & {\color{node_4} 1011 } & {\color{node_4} 2337 } & {\color{node_4} -1523 } & {\color{node_4} -1803 } & {\color{node_4} -10657 } \\ 
{\color{node_5} -9676 } & {\color{node_5} -7976 } & {\color{node_5} -9092 } & {\color{node_5} -9750 } & {\color{node_5} -10905 } & {\color{node_5} -9343 } & {\color{node_5} -8461 } & {\color{node_5} -11571 } \\ 
\end{bmatrix} } \\\\ 
\end{array}$$
This completes the bid sharing step of the auction protocol.

To ensure the scalability of the auction, 
the protocol examines the binary representation of each bid, 
beginning with the first digit $j=1$ and progressing towards the right until $j=k$.

\newpage For $j=1$, Algorithm \ref{ring_transfer_BP_algo} generates $B_1=1621$ and $P_1=675$.
  
$d_{l,i,1}$ are chosen randomly by the bidders using either their fake key or their check key, depending on their key values
\begin{equation}
    \prod_{i=1}^n d_{i,l,j}=\begin{cases}
{F_{l,j},} & {\text{if}}\ K_{l,j}=B_j \\ 
{C_{l,j},} & {\text{otherwise.}} 
    \end{cases}
\end{equation}
$K_{1,1} \neq B_{1}$, 
$K_{2,1} \neq B_{1}$, 
$K_{3,1} \neq B_{1}$, 
$K_{4,1} \neq B_{1}$, 
$K_{5,1} \neq B_{1}$, 

\noindent $\color{node_1} \alpha_{1} $ uses $\color{node_1} C_{1,1}=849$ to generate 
$$ 
} 
\vspace{0.3cm}
 \caption{
In the yellow cell at the bottom, it can be observed that the winner $\alpha_4$ is the only bidder who placed a `1' at digit $j'=7$.
Therefore, $\alpha_4$ is willing to pay a price that is at least one bid higher than the second-highest bid, which is represented by 
$p_4 ~\geq~ p_0+1 = (1  1  0  1  1  0  1  0)_{\text{base 2}} $ 
} 
\label{table_example}
\end{table}

\end{appendix}

\end{document}